\newcolumntype{d}[1]{D{.}{.}{#1}} 
\setlist[enumerate,1]{itemsep=1pt, topsep=4pt, partopsep=0pt}
\setlist[enumerate,2]{nosep}
\setlist[itemize,1]{itemsep=1pt, topsep=4pt, partopsep=0pt}
\setlist[itemize,2]{nosep}
\theoremstyle{plain}
\newtheorem{theorem}{Theorem}
\newtheorem{corollary}{Corollary}
\newtheorem{lemma}{Lemma}[section]
\theoremstyle{definition}
\newenvironment{assumption}{\pushQED{\qed}\assumptionx}{\popQED\endassumptionx}
\newenvironment{remark}{\pushQED{\qed}\remarkx}{\popQED\endremarkx}
\newtheorem{algorithm}{Algorithm}
\Crefname{assumptionx}{Assumption}{Assumptions} 
\Crefname{examplex}{Example}{Examples} 
\Crefname{remarkx}{Remark}{Remarks} 
\renewcommand*{\eqref}[1]{\hyperref[{#1}]{\textup{\tagform@{\ref*{#1}}}}}
\DeclareRobustCommand\citepos
\def\NAT@nmfmt##1{{\NAT@up##1's}}%
\let\NAT@ctype\z@\NAT@partrue
\def \expandafter \normalsize \expandafter{\normalsize \setlength \abovedisplayskip{10pt plus 2pt minus 7pt}}
\def \expandafter \normalsize \expandafter{\normalsize \setlength \abovedisplayshortskip{0pt plus 2pt}}
\def \expandafter \normalsize \expandafter{\normalsize \setlength \belowdisplayskip{10pt plus 2pt minus 7pt}}
\def \expandafter \normalsize \expandafter{\normalsize \setlength \belowdisplayshortskip{5pt plus 2pt minus 3pt}}
\def\biA{{\bm{A}}}
\def\bib{{\bm{b}}}
\def\bbeta{{\bm\beta}}
\def\biX{{\bm{X}}}
\def\biZ{{\bm{Z}}}
\def\biM{{\bm{M}}}
\def\biH{{\bm{H}}}
\def\bis{{\bm{s}}}
\def\biy{{\bm{y}}}
\def\bix{{\bm{x}}}
\def\biu{{\bm{u}}}
\def\bSigma{{\bm{\Sigma}}}
\def\biQ{{\bm{Q}}}
\def\bXi{{\bm{\Xi}}}
\def\btheta{{\bm{\theta}}}
\def\bdelta{{\bm{\delta}}}
\def\E{{\rm E}}
\def\U{{\rm U}}
\def\IF{{\mathbb I}}
\def\N{{\rm N}}
\def\dto{\overset{d}  \longrightarrow}
\def\Pto{\overset{P}  \longrightarrow}
\def\plimN{\underset{N \to \infty}{\textrm{plim}}}
\def\limN{\underset{N \to \infty}{\lim}}
\def\tn{\kern -.08333em}
\def\tk{\kern 0.08333em}
\def\tkk{\kern 0.04167em}
\def\bzero{{\bm{0}}}
\def\bfI{{\bf I}}
\def\th#1{$#1^{\tk{\rm th}}$}
\def\H#1{{\textrm{H$_{#1}$}}}
\def\longto{\longrightarrow}
\DeclareMathOperator{\diag}{diag}
\DeclareMathOperator{\vech}{vech}
\DeclareMathOperator{\Var}{Var}
\DeclareMathOperator{\var}{Var}
\def\vec{{\textrm{vec}}}
\begin{document}


\title{Testing for the appropriate level of clustering\\in linear 
regression models\thanks{We are grateful to the editor, Serena Ng, an 
anonymous associate editor, three anonymous referees, Yevgeniy Feyman,
and participants at the 2018 Canadian Economics Association
conference, 2018 Canadian Econometric Study Group conference, 2020
Econometric Society World Congress, 2021 WEA conference, 2021 APPAM
conference, 2021 IAAE Annual Conference, Universit\'e de Montr\'eal,
University of Exeter, UCLA, UC Santa Barbara, Lakehead University,
Michigan State University, and Copenhagen Business School for
comments. MacKinnon and Webb thank the Social Sciences and Humanities
Research Council of Canada (SSHRC grants 435-2016-0871 and
435-2021-0396) for financial support. Nielsen thanks the Danish
National Research Foundation for financial support (DNRF Chair grant
number DNRF154). Computer code, including a \texttt{Stata} ado file, 
for performing the testing procedures proposed here may be found at
\url{http://qed.econ.queensu.ca/pub/faculty/mackinnon/svtest/}.
}}


\author{James G. MacKinnon\thanks{Corresponding author. Address: 
Department of Economics, 94 University Avenue, Queen's University, 
Kingston, Ontario K7L 3N6, Canada. Email:\ \texttt{mackinno@queensu.ca}.
Tel.\ 613-533-2293. Fax 613-533-6668.}\\Queen's University\\
\texttt{mackinno@queensu.ca}  \and
Morten \O rregaard Nielsen\\Aarhus University\\
\texttt{mon@econ.au.dk} \and
Matthew D. Webb\\Carleton University\\
\texttt{matt.webb@carleton.ca}}

\maketitle

\begin{abstract}
The overwhelming majority of empirical research that uses 
cluster-robust inference assumes that the clustering structure
is known, even though there are often several possible ways in which a 
dataset could be clustered. We propose two tests for the correct level
of clustering in regression models. One test focuses on inference about
a single coefficient, and the other on inference about two or more 
coefficients. We provide both asymptotic and wild bootstrap 
implementations. The proposed tests work for a null hypothesis of 
either no clustering or ``fine'' clustering against alternatives of 
``coarser'' clustering. We also propose a sequential testing procedure 
to determine the appropriate level of clustering. Simulations suggest 
that the bootstrap tests perform very well under the null hypothesis 
and can have excellent power. An empirical example suggests that using 
the tests leads to sensible inferences.

\medskip \noindent \textbf{Keywords:} CRVE, grouped data, clustered
data, cluster-robust variance estimator, robust inference, wild 
bootstrap, wild cluster bootstrap.

\medskip \noindent \textbf{JEL Codes:} C12, C15, C21, C23.

\end{abstract}

\clearpage
\onehalfspacing
\section{Introduction}
\label{sec:intro}

Modern empirical econometrics often allows for correlation within
clusters of observations, and this can have serious consequences for
statistical inference. Theoretical work on cluster-robust inference
almost always assumes that the structure of the clusters is known,
even though the form of the correlation within clusters is arbitrary.
Unless it is obvious that clustering must be at a certain level,
however, this can leave empirical researchers in a difficult
situation. They must generally rely on rules of thumb, their own
intuition, or referees' suggestions to decide how the observations
should be clustered. To make this process easier, we propose tests for
any given level of clustering (including no clustering as a special
case) against an alternative within which it is nested. When two or
more levels of clustering are possible, we propose a sequence of such
tests.


There has been a great deal of research on cluster-robust inference in
the past two decades. \citet*{CM_2015} cover much of the literature up
to a few years ago. \citet{Esarey_2019} and \citet{MW-survey} provide
more recent surveys. \citet*{CGH_2018} deal with a broader class of
methods for various types of dependent data. \citet*{MNW-guide}
provide a thorough and detailed guide to empirical practice. Areas
that have received particular attention include: asymptotic theory for
cluster-robust inference \citep*{DMN_2019,HansenLee_2019}; bootstrap
methods with clustered data \citep*{CGM_2008, DMN_2019, RMNW, 
MNW-bootknife}; and inference with unbalanced clusters 
\citep*{Imbens_2016, CSS_2017, MW-JAE, DMN_2019, MNW-influence}.

Almost all of this literature assumes that the way in which 
observations are allocated to clusters is known to the econometrician.
This is quite a strong assumption. Imagine that a dataset has many
observations taken from individuals in different geographical
locations. In order to utilize a cluster-robust variance estimator 
(CRVE), the researcher needs to specify at what level the clustering 
occurs. For example, there could possibly be clustering at the 
zip\tkk-code, city, county, state, or country level. Even in this 
relatively simple setting, there are many possible ways in which a 
researcher could `cluster' the standard errors.

A few rules of thumb have emerged to cover some common cases. For
instance, in the case of nested clusters, such as cities within
states, \citet*{CM_2015} advocate clustering at the larger, more
aggregate level. In the case of randomized experiments, 
\citet*{Athey_2017} recommend clustering at the level of
randomization. In the case of experiments where treatment is assigned
to groups in pairs, with one group treated and one not treated,
\citet{Chaisemartin_2022} recommend clustering at the pair level
rather than the group level. While these rules of thumb can sometimes
be very helpful, they may or may not lead to the appropriate
clustering level in any particular case.

Getting the level of clustering correct is extremely important. 
Simulation results in several papers have shown that ignoring
clustering in a single dimension can result in rejection frequencies
for tests at the 5\% level that are actually well over 50\%
\citep*{BDM_2004, CGM_2008} and confidence intervals that are too
narrow by a factor of five or more \citep{JGM-CJE}. On the other hand,
clustering at too coarse a level (say state\tkk-level clustering when
there is actually city-level clustering) can lead to the problems
associated with having few treated clusters, which can be severe
\citep{MW-JAE,MW-EJ}, and can also reduce power \citep{MW-survey}.

In \Cref{sec:tests}, we propose two tests for the cluster structure of
the error variance matrix in a linear regression model. They test the
null hypothesis of a fine level of clustering (or of no clustering at
all) against an alternative hypothesis with a coarser level of
clustering. The tests are based on the difference between two
functions of the scores for the parameter(s) of interest. These
functions are essentially the filling in the sandwich for two
different cluster-robust variance estimators, one associated with the
null level of clustering and one associated with the alternative
level. Since the functions estimate the variance of the scores under
two different clustering assumptions, we refer to the tests as
score\tkk-variance, or SV, tests. A procedure for sequential testing,
described in \Cref{subsec:level}, allows for determination of the
appropriate level of clustering without inflating the family-wise
error rate when there are several possible levels of clustering.

Tests for the appropriate level of clustering have also been proposed
by \citet{Ibragimov_2016} and recently by \citet{Cai_2022}. These
tests are very different from our tests and very different from each
other. We discuss them briefly in \Cref{subsec:other}.

The model of interest is discussed in \Cref{sec:model}. Our 
score\tkk-variance tests are described in \Cref{sec:tests}, including 
the bootstrap implementation, the sequential testing procedure, and
the use of our tests as pre\tkk-tests for inference about regression
coefficients. \Cref{sec:theory} provides asymptotic theory for the two
test statistics, the bootstrap tests, and the sequential testing
procedure. In \Cref{sec:partial}, we consider the common situation in
which the regressors that are not of primary interest have been
partialed out prior to performing the test. The size and power of the
proposed tests are analyzed by Monte Carlo simulations in
\Cref{sec:simulations}. An empirical example that deals with
clustering by classroom or school using the STAR dataset
\citep{Finn_1990,Mosteller_1995} is discussed in \Cref{sec:example}.
Finally, \Cref{sec:conclusion} concludes and offers some guidance for
empirical researchers. All mathematical proofs are given in
\Cref{sec:proofs}.

\section{The Regression Model with Clustering}
\label{sec:model}

We focus on the linear regression model
\begin{equation}
\label{model} 
\biy = \biX \bbeta + \biu,
\end{equation}
where $\biy$ and $\biu$ are, respectively, $N \times 1$ vectors of
observations and disturbances (or error terms), and $\biX$ is an $N
\times k$ matrix of regressors (or covariates). The $k \times 1$
parameter vector $\bbeta$ contains the coefficients on the regressors.

Suppose that the data are divided into $G$ clusters, indexed by $g$,
where the \th{g} cluster has $N_g$ observations, so that $N =
\sum_{g=1}^G N_g$. Thus, there are $G$ vectors $\biy_g$ and $\biu_g$ of
size $N_g$, along with $G$ matrices $\biX_g$, each with $N_g$ rows
and $k$ columns. Using this notation, the ordinary least squares (OLS) 
estimator of $\bbeta$ is
\begin{equation}
\label{betahat}
\hat\bbeta = (\biX^\top\biX)^{-1}\biX^\top\biy
= \bbeta_0 +  (\biX^\top\biX)^{-1}\biX^\top\biu
=  \bbeta_0 +  (\biX^\top\biX)^{-1}\sum_{g=1}^G \biX_g^\top\biu_g,
\end{equation}
where $\bbeta_0$ denotes the true value of $\bbeta$. Now define the 
$k \times 1$ score vectors $\bis_g = \biX_g^\top \biu_g$. We assume that 
these score vectors satisfy $\E(\bis_g)=\bzero$ for all $g$ and
\begin{equation}
\label{def Sigma}
\E(\bis_g\bis_{g'}^\top)= \IF (g=g') \bSigma_g, \quad g,g'=1,\ldots,G,
\end{equation}
where $\IF (\cdot)$ denotes the indicator function and $\bSigma_g$ is
a $k \times k$ variance matrix. Although the properties of the
$\bSigma_g$ depend on the properties of the variance matrix of $\biu$,
we do not explicitly make any assumptions about the latter because our
tests are concerned solely with the variances of the score vectors.

It is clear from \eqref{betahat} that the asymptotic distribution of
$\hat\bbeta$ depends on the asymptotic distribution of the score
vectors. An estimator of the variance matrix of $\hat\bbeta$ is given
by the sandwich formula
\begin{equation}
\label{covbeta}
\widehat\var (\hat\bbeta ) = 
(\biX^\top\biX)^{-1} \hat\bSigma (\biX^\top\biX)^{-1}\tn,
\end{equation}
where $\hat\bSigma$ is an estimator of the variance matrix of the sum 
of scores, $\bSigma = \E (\biX^\top\biu\biu^\top\!\biX )$. The condition 
\eqref{def Sigma} implies that $\E (\bis_g \bis_{g'}^\top) = \bzero$ 
whenever $g \neq g'$. In this case $\bSigma =\sum_{g=1}^G \bSigma_g$, so
that the usual estimator for $\bSigma$ under condition \eqref{def Sigma} is
\begin{equation}
\label{Sighat}
\hat\bSigma_{\rm c} = m_{\rm c} \sum_{g=1}^G \biX_g^\top\hat\biu_g
\hat\biu_g^\top\biX_g = m_{\rm c} \sum_{g=1}^G \hat\bis_g\hat\bis^\top_g,
\end{equation}
where $\hat\biu_g$ contains the residuals for cluster $g$ and
$\hat\bis_g = \biX_g^\top \hat\biu_g$ is the $k \times 1$ vector of 
empirical scores for cluster~$g$. The scalar factor $m_{\rm c}$ is a 
finite\tkk-sample correction, the most commonly employed factor being 
$m_{\rm c}=G/(G-1)\times (N-1)/(N-k)$, which is designed to account
for degrees of freedom. Using $\hat\bSigma = \hat\bSigma_{\rm c}$ in
\eqref{covbeta} yields CV$_1$, the most widely-used CRVE
for~$\hat\bbeta$. Asymptotic inference on regression coefficients
using CV$_1$ is studied by \citet{DMN_2019} and
\citet{HansenLee_2019}.

\begin{remark}
\label{rem:hetmodel}
In the special case in which each cluster has $N_g =1$ observation, we 
can use
\begin{equation}
\label{HCmid}
\hat\bSigma_{\rm het} = \sum_{i=1}^N \hat{u}^2_i\tk \biX_i^\top\biX_i
   = \biX^\top\tn \diag (\hat{u}^2_1,\dots,\hat{u}^2_N) \biX\tn,
\end{equation}
where $\biX_i$ is the \th{i} row of the $\biX$ matrix and $\hat{u}_i$
is the \th{i} residual. The variance matrix obtained by setting 
$\hat\bSigma = \hat\bSigma_{\rm het}$ in \eqref{covbeta} is the famous
heteroskedasticity-consistent variance matrix estimator (HCCME) of
\citet{Eicker_1963} and \citet{White_1980}. Of course, the matrix
$\hat\bSigma_{\rm het}$ can be modified in various ways to improve its
finite\tkk-sample properties \citep{MW_1985,JGM_2013}. The simplest is
to multiply it by \mbox{$m_{\rm het}=N/(N-k)$}, so that
\eqref{covbeta} becomes what is usually called HC$_1$.
\end{remark}

\begin{remark}
\label{rem:AAIW} As \citet*{AAIW_2023} point out, when the object of
interest is the average treatment effect in a finite population,
cluster-robust standard errors based on \eqref{Sighat} can be
``unnecessarily conservative.'' Consequently, they develop an approach
to inference that depends both on how the data were sampled and on how
treatment was assigned. In this paper, however, we follow most of the
literature on cluster-robust inference and rely on the traditional
approach in which every sample is treated as a random outcome from a
data-generating process (DGP). The objective is to draw inferences
about the parameters of the DGP, which may be interpreted as features
of an infinitely large population; see \citet{MNW-guide} for
additional details.
\end{remark}

\section{The Testing Procedure}
\label{sec:tests}

The fundamental idea of our testing procedure is to compare two
estimates of the variance of the coefficient(s) that we want to
estimate. We test the null hypothesis that a CRVE based on a ``fine''
clustering structure is valid against the alternative that the CRVE
needs to be based on a ``coarser'' clustering structure. Since it is
only the filling in the sandwich \eqref{covbeta} that differs across
different clustering structures, we are actually comparing two
estimates of the variance matrix of the sum of scores. Our procedure
is somewhat like the specification test of \citet{Hausman_1978}. The
``fine'' CRVE is efficient when there actually is fine clustering, but
it is invalid when there is coarse clustering. In contrast, the
``coarse'' CRVE is inefficient when there actually is fine clustering,
but it is valid in both cases.

To make our testing procedure operational, we formulate it in terms of
the parameters of the model. To this end, we first define some
notation. There are $G$ coarse clusters indexed by $g=1,\ldots,G$.
Within coarse cluster~$g$, there are $M_g$ fine clusters indexed by
$h=1,\ldots,M_g$.  In total there are $G_{\rm f} = \sum_{g=1}^G M_g$
fine clusters. Fine cluster $h$ in coarse cluster $g$ contains
$N_{gh}$ observations indexed by $i=1,\ldots,N_{gh}$. Coarse cluster
$g$ therefore contains $N_g = \sum_{h=1}^{M_g} N_{gh}$ observations,
and the entire sample contains $N=\sum_{g=1}^G N_g =\sum_{g=1}^G
\sum_{h=1}^{M_g} N_{gh}$ observations. We let $\biX_{ghi}$ and
$u_{ghi}$ denote the regressors and disturbance for observation $i$
within fine cluster $h$ in coarse cluster~$g$. We then define the
corresponding score as $\bis_{ghi}= \biX_{ghi}^\top u_{ghi}$, the
score for fine cluster $h$ in coarse cluster $g$ as
$\bis_{gh}=\sum_{i=1}^{N_{gh}} \bis_{ghi}$, and the score for coarse
cluster~$g$ as $\bis_g = \sum_{h=1}^{M_g}\bis_{gh}$.

Under the coarse clustering structure, the $\bis_g$ satisfy 
\eqref{def Sigma}, so that in particular they are uncorrelated
across~$g$. Under the fine clustering structure, the $\bis_{gh}$ are
themselves uncorrelated across~$h$, for each~$g$. That is, for all
$g=1,\ldots ,G$,
\begin{equation}
\label{def Sigma gh}
\E (\bis_{gh}\bis_{gh'}^\top ) = \IF (h=h') \bSigma_{gh},
\quad h,h' = 1,\ldots ,M_g ,
\end{equation}
where each of the $\bSigma_{gh}$ is a $k \times k$ matrix. Thus, 
\eqref{def Sigma} and \eqref{def Sigma gh} embody the assumption that 
the fine clustering structure is nested within the coarse one. Another
possible design would have a one\tkk-way clustering structure nested
within a two\tkk-way one; see \Cref{rem:two-way}.

Now let $\bSigma_{\rm c}$ and $\bSigma_{\rm f}$ denote the matrix 
$\bSigma$ under the coarse and fine clustering structures,
respectively.  From \eqref{def Sigma} and \eqref{def Sigma gh}, these
matrices are
\begin{equation}
\label{Sigmas}
\bSigma_{\rm c} = \sum_{g=1}^G \bSigma_g
\quad\textrm{and}\quad
\bSigma_{\rm f} = \sum_{g=1}^G \sum_{h=1}^{M_g}\bSigma_{gh}.
\end{equation}
We consider the null and alternative hypotheses
\begin{equation}
\label{hypotheses}
\H{0}\!: \limN \tk \bSigma_{\rm f} \tk \bSigma^{-1}_{\rm c} = \bfI 
\quad \textrm{and} \quad
\H{1}\!: \limN \tk \bSigma_{\rm f} \tk \bSigma^{-1}_{\rm c} \neq \bfI.
\end{equation}
The hypotheses are expressed in this way, rather than in terms of the
difference between the limits of normalized versions of $\bSigma_{\rm f}$ 
and $\bSigma_{\rm c}$, because the appropriate normalizing factors will,
in general, be unknown; see \citet{DMN_2019}.

\begin{remark}
\label{rem:hyp}
In \eqref{hypotheses}, we are not directly testing the fine clustering
condition in \eqref{def Sigma gh}. Instead, we are testing an
important implication of the clustering structure. Specifically, we
test whether $\bSigma_{\rm c}=\bSigma_{\rm f}$, which implies that
a valid CRVE for $\hat\bbeta$ is given by \eqref{covbeta} with
$\hat\bSigma = \hat\bSigma_{\rm f}$.
\end{remark}

\begin{remark}
\label{rem:hetnull}
An important null hypothesis is that no CRVE is needed because the
HCCME considered in \Cref{rem:hetmodel}, obtained by combining
\eqref{covbeta} and \eqref{HCmid}, is valid. In this case, each fine
cluster has just one observation, so that $M_g=N_g$ and $N_{gh}=1$ for
all $g$ and~$h$.
\end{remark}

\begin{remark}
\label{rem:partial}
In practical applications, the number of coefficients in regression
models, and hence the size of the CRVE matrices, is often large, so
that comparing these matrices directly can be impractical.
Furthermore, it is usually only one coefficient, or a small subset of
them, that is actually of interest. Many coefficients typically
correspond to fixed effects and other conditioning variables that are
not of primary interest. By partialing out the latter, it is possible
to reduce the dimensionality of the test and focus on the parameter(s)
of interest. We discuss this issue in \Cref{sec:partial}.
\end{remark}

\subsection{Test Statistics}
\label{sec:stats}

Our score\tkk-variance, or SV, test statistics are based on comparing
estimates $\hat\bSigma_{\rm f}$ and $\hat\bSigma_{\rm c}$ obtained
under fine and coarse clustering, respectively. There are many ways in
which one could compare these $k \times k$ matrices. We focus on two
quantities of particular interest, which define two test statistics.
The first is obtained for $k=1$. This could be after all regressors 
except one have been partialed out (\Cref{sec:partial}), so that 
interest is focused on a particular coefficient that we are trying to 
make inferences about. This leads to a test statistic with the form of
a $t$-statistic. The second is obtained for $k > 1$, in which case 
our test statistic is a quadratic form involving all the unique
elements of $\hat\bSigma_{\rm f}$ and $\hat\bSigma_{\rm c}$, as in
\citepos{White_1980} ``direct test'' for heteroskedasticity. The first
test is of course a special case of the second, but we treat it
separately because it is particularly simple to compute and may often
be of primary interest.

In order to derive the test statistics, we write $\hat\bSigma_{\rm c}$ 
and $\hat\bSigma_{\rm f}$ using common notation. Let $\hat\bis_{ghi}$
denote the empirical score for observation $i$ within fine cluster $h$ 
in coarse cluster $g$, and let $\hat\bis_{gh} = \sum_{i=1}^{N_{gh}}
\hat\bis_{ghi}$ denote the empirical score for fine cluster $h$ in 
coarse cluster $g$, such that $\hat\bis_g = \sum_{h=1}^{M_g}
\hat\bis_{gh}$. Under coarse clustering, the estimated $\bSigma_{\rm c}$
matrix in \eqref{Sighat} is
\begin{equation}
\label{Sigmac}
\hat\bSigma_{\rm c} 
= m_{\rm c} \sum_{g=1}^G \hat\bis_g\hat\bis^\top_g
= m_{\rm c} \sum_{g=1}^G \left( \sum_{h=1}^{M_g} 
\hat\bis_{gh} \right)\!
\left( \sum_{h=1}^{M_g} \hat\bis_{gh} \right)^{\!\!\!\top}\!.
\end{equation}
Similarly, we can write, c.f.\ \eqref{def Sigma gh} and \eqref{Sigmas},
\begin{equation}
\label{Sigmaf}
\hat\bSigma_{\rm f} = m_{\rm f} \sum_{g=1}^G \sum_{h=1}^{M_g} 
\hat\bis_{gh} \hat\bis_{gh}^\top ,
\end{equation}
where $m_{\rm f} = G_{\rm f}/(G_{\rm f}-1) \times (N-1)/(N-k)$.

When interest focuses on just one coefficient, so that $k=1$,
the matrix $\biX$ becomes the vector $\bix$, and the empirical scores
are scalars. Specifically, $\hat s_{ghi} = x_{ghi} \hat u_{ghi}$
and $\hat s_{gh}=\sum_{i=1}^{N_{gh}} \hat s_{ghi}$ denote the
empirical scores for observation~$i$ and fine cluster~$h$,
respectively. Then the matrices \eqref{Sigmac} and \eqref{Sigmaf}
reduce to the scalars
\begin{equation}
\label{Sigmascalar}
\hat\sigma^2_{\rm c} = m_{\rm c} \sum_{g=1}^G 
\left( \sum_{h=1}^{M_g} \hat s_{gh} \right)^{\!\!2}
\quad\textrm{and}\quad
\hat\sigma^2_{\rm f} = m_{\rm f} \sum_{g=1}^G \sum_{h=1}^{M_g} 
\hat s_{gh}^2 .
\end{equation}
The quantities given in \eqref{Sigmac}, \eqref{Sigmaf}, and
\eqref{Sigmascalar} are all defined in essentially the same way. They
simply amount to different choices of empirical scores. If
$N_{gh}=1$, then $\hat\sigma^2_{\rm f}$ simplifies to
\begin{equation}
\label{sigmahet}
\hat\sigma^2_{\rm het} =
\sum_{g=1}^G\sum_{h=1}^{M_g}\sum_{i=1}^{N_{gh}} \hat s^2_{ghi},
\end{equation}
which is just the sum of the squared empirical scores over all the
observations.

Our first test is based on the difference between the two scalars in 
\eqref{Sigmascalar}, namely,
\begin{equation}
\label{thetascalar}
\hat\theta = \hat\sigma^2_{\rm c} - \hat\sigma^2_{\rm f}.
\end{equation}
Our second test is based on the difference between the $k\times k$
matrices $\hat\bSigma_{\rm c}$ and~$\hat\bSigma_{\rm f}$. For this
test, we consider the vector of contrasts,
\begin{equation}
\label{thetaSigma}
\hat\btheta = \vech (\hat\bSigma_{\rm c} - \hat\bSigma_{\rm f}),
\end{equation}
where the operator $\vech (\cdot)$ returns a vector, of dimension 
$k(k+1)/2$ in this case, with all the supra-diagonal elements of the
symmetric $k\times k$ matrix argument removed.

In order to obtain test statistics with asymptotic distributions that 
are free of nuisance parameters, we need to derive the asymptotic
means and variances of $\hat\theta$ and $\hat\btheta$, so that we can
studentize the statistics in \eqref{thetascalar} and
\eqref{thetaSigma}. To this end, suppose that we observe the (scalar)
scores $s_{gh}$ for fine cluster $h$ in coarse cluster~$g$. Then the 
analog of $\hat\theta$ is the contrast
\begin{equation}
\label{thetaknown}
\theta = \sum_{g=1}^G\sum_{h_1=1}^{M_g}\sum_{h_2\neq h_1}^{M_g} 
 s_{gh_1} s_{gh_2}.
\end{equation}
This is simply the sum of all the cross\tkk-products of scores that
are in the same coarse cluster but different fine clusters. Under the
null hypothesis, $\theta$ clearly has mean zero 
by~\eqref{def Sigma gh}.

The variance of $\theta$ in \eqref{thetaknown} is, under the null
hypothesis,
\begin{equation}
\label{var1}
\var (\theta ) = \sum_{g=1}^G\sum_{h_1=1}^{M_g} \sum_{\ell_1=1}^{M_g}
\sum_{h_2\neq h_1}^{M_g}\sum_{\ell_2\neq \ell_1}^{M_g} 
\E ( s_{gh_1}s_{g\ell_1}s_{gh_2}s_{g\ell_2}).
\end{equation}
The expectation of any product of scores can only be nonzero, under
the null and \eqref{def Sigma gh}, when their indices are the same in 
pairs. This implies that either $h_1=\ell_1 \neq h_2=\ell_2$ or 
$h_1=\ell_2 \neq h_2=\ell_1$. These cases are symmetric, and hence 
\eqref{var1} simplifies to
\begin{equation}
\label{tauvar}
\var(\theta) = 2 \sum_{g=1}^G \sum_{h_1=1}^{M_g}\sum_{h_2\neq h_1}^{M_g}
\sigma^2_{gh_1}\sigma^2_{gh_2} ,
\end{equation}
where $\sigma^2_{gh} = \var ( s_{gh})$ is used to denote $\bSigma_{gh}$
in the scalar case.

The sample analog of the right-hand side of \eqref{tauvar} is $2
\sum_{g=1}^G\sum_{h_1=1}^{M_g} \sum_{h_2\neq h_1}^{M_g}\hat s_{gh_1}^2
\hat s_{gh_2}^2$; see~\eqref{def Sigma gh}. This suggests the variance
estimator
\begin{equation}
\label{varfast}
\widehat\var (\hat\theta) =
2 \sum_{g=1}^G\left(\sum_{h=1}^{M_g} \hat s^2_{gh}\!\right)^{\!\!2} 
- 2 \sum_{g=1}^G \sum_{h=1}^{M_g} \hat s^4_{gh}.
\end{equation}
This equation avoids the triple summation in \eqref{tauvar} by
squaring the sums of squared empirical scores, which then requires
that the second term be subtracted. In deriving \eqref{varfast}, we 
have ignored the factors $m_{\rm c}$ and $m_{\rm f}$, which are 
asymptotically irrelevant. If instead we had retained them, there would 
be no cancellation when subtracting $\hat\sigma^2_{\rm f}$ from 
$\hat\sigma_{\rm c}$, leading to a much more complicated (and 
computationally burdensome) expression for $\widehat\var (\hat\theta)$. 
Combining \eqref{thetascalar} and \eqref{varfast} yields the studentized
test statistic
\begin{equation}
\label{eq:taus}
\tau_\sigma = \frac{\hat\theta}
{\sqrt{\widehat\var (\hat\theta})}.
\end{equation}
In \Cref{sec:theory}, we show that $\tau_\sigma$ is asymptotically 
distributed as $\N(0,1)$.

\begin{remark}
\label{rem:onesided}
The statistic defined in \eqref{eq:taus} yields either a one\tkk-sided
or a two\tkk-sided test. Upper-tail tests may often be of primary
interest, because we expect the diagonal elements of $\bSigma_{\rm c}$
to exceed the corresponding elements of $\bSigma_{\rm f}$ when there
is positive correlation within clusters under the alternative.
However, since this is not necessarily the case, two\tkk-sided tests
based on $\tau_\sigma^2$ may also be of interest. The asymptotic
theory in \Cref{sec:theory} handles both cases.
\end{remark}

\begin{remark}
\label{rem:Moulton}
Consider again the special case in which the null is 
heteroskedasticity with no clustering. When the elements of $\bix$
display little intra-cluster correlation, the contrast $\hat\theta$,
and hence the absolute value of $\tau_\sigma$, will tend to be small,
even if the residuals display a great deal of intra-cluster
correlation. This is what we should expect, because in that case the 
so\tkk-called Moulton factor, the ratio of clustered to non-clustered
standard errors \citep{Moulton_1986}, will be relatively small. Of
course, the opposite will be true when the elements of $\bix$ display
a lot of intra-cluster correlation. Thus, all else equal, SV tests may 
well yield different results for different choices of~$\bix$.
\end{remark}

When $k>1$, so that $\hat\btheta$ is a vector, the variance estimator 
analogous to \eqref{varfast} is
\begin{equation}
\label{var2sided}
\widehat\var(  \hat\btheta ) =
2\sum_{g=1}^G \biH_k \bigg(\sum_{h=1}^{M_g} \hat\bis_{gh}\hat\bis_{gh}^\top
\otimes\sum_{h=1}^{M_g}\hat\bis_{gh}\hat\bis_{gh}^\top\!\bigg) \biH_k^\top 
-2\sum_{g=1}^G \sum_{h=1}^{M_g}\biH_k \big(\hat\bis_{gh}\hat\bis_{gh}^\top
\otimes \hat\bis_{gh}\hat\bis_{gh}^\top\big) \biH_k^\top .
\end{equation}
Here $\biH_k$ is the so\tkk-called elimination matrix, which satisfies
$\vech (\bm{S})=\biH_k \vec (\bm{S})$ for any $k \times k$ symmetric 
matrix $\bm{S}$ \citep[p.~354]{Harville_1997}, and $\otimes$ denotes 
the Kronecker product. A studentized (Wald) statistic is then given by
\begin{equation}
\label{eq:tauGs}
\tau_\Sigma =\hat\btheta^\top 
\widehat\var(  \hat\btheta )^{-1} \hat\btheta .
\end{equation}
In \Cref{sec:theory}, we show that $\tau_\Sigma$ is asymptotically 
distributed as $\chi^2(k(k+1)/2)$.

\begin{remark}
\label{rem:Phillips}
As pointed out by a referee, \citet{CP_2018} develop simple measures
of the discrepancy between two positive definite symmetric matrices. When
bootstrapped, these measures can be used as alternative test
statistics for cases where $k>1$. Preliminary simulations suggest that
these bootstrap tests can work well, although not (in general) better
than our proposed bootstrap tests based on \eqref{eq:tauGs}. A full
analysis is beyond the scope of this paper and is therefore left for
future work.
\end{remark}

\begin{remark}
\label{rem:two-way} It is possible to use the test statistics
\eqref{eq:taus} and \eqref{eq:tauGs} for testing one\tkk-way against
two\tkk-way clustering. Suppose there are two alternative clustering
dimensions, labeled A and~B, and their intersection is labeled~I.
These could correspond to, say, state~(A) and year~(B), with the
intersection denoting observations that correspond to the same year in
the same state. Let $\hat\bSigma_j$ denote the (one\tkk-way) CRVE in
\eqref{Sighat} under clustering dimension $j \in \{
\rm{A},\rm{B},\rm{I} \}$. Then the two\tkk-way CRVE \citep*{CGM_2011}
is given by \eqref{covbeta} with
\begin{equation}
\label{eq:twowayCRVE}
\hat\bSigma_{\tkk\rm TW} = \hat\bSigma_{\rm A} + \hat\bSigma_{\rm B}
  - \hat\bSigma_{\tkk\rm I}.
\end{equation}
If we test the null of one\tkk-way clustering by~A against the
alternative of clustering by both A and~B, then $\hat\bSigma_{\rm c}=
\hat\bSigma_{\tkk\rm TW}$ and $\hat\bSigma_{\rm f}=\hat\bSigma_{\rm A}$.
Therefore, the vector of contrasts in \eqref{thetaSigma} becomes
\begin{equation}
\label{eq:twowaytheta}
\hat\btheta = \vech ( \hat\bSigma_{\tkk\rm TW} - \hat\bSigma_{\rm A} )
  = \vech ( \hat\bSigma_{\rm B} - \hat\bSigma_{\tkk\rm I} ).
\end{equation}
The result in \eqref{eq:twowaytheta} shows that testing the null of 
one\tkk-way clustering by~A against the alternative of two\tkk-way
clustering by~A and~B must lead to the same test statistic as testing
the null of one\tkk-way clustering by~I against the alternative of
one\tkk-way clustering by~B.

Although it is straightforward to derive a test statistic based on 
\eqref{eq:twowaytheta}, the asymptotic analysis of this statistic
would be different from the analysis for testing nested one\tkk-way
clustering in \Cref{sec:theory} below. For example, to derive the
asymptotic null distribution of a statistic based on
\eqref{eq:twowaytheta} for testing clustering by~I against clustering
by~B would mean analyzing it under the DGP that clustering is in fact
by~A. Therefore, we leave this analysis for future work.


Similarly, if there is two\tkk-way clustering under both the null and
alternative hypotheses, it may be feasible to calculate
score\tkk-variance statistics similar to \eqref{eq:taus} and
\eqref{eq:tauGs}. However, the analysis of the asymptotic null
distribution would require completely different, and technically
nontrivial, techniques 
\citep*{DDG_2021,MNW_multi,Menzel_2021,Chiang_2022}.
\end{remark}

In this section, we have proposed two score\tkk-variance tests of
\eqref{hypotheses}. They both involve comparing different variance
estimates of the empirical scores, namely, the two scalars in
\eqref{Sigmascalar} for the $\tau_\sigma$ test and the matrices in
\eqref{Sigmac} and \eqref{Sigmaf} for the $\tau_\Sigma$ test. The
former is a special case of the latter, and it can be obtained for the
same models as the latter by partialing out all regressors except one,
as in \Cref{sec:partial}. This special case is particularly
interesting, because the $\tau_\sigma$ test can be directional, and
also because many equations simplify neatly in the scalar case.

As we show in \Cref{sec:simulations}, the finite\tkk-sample properties
of our asymptotic tests are often good but could sometimes be better,
especially when the number of clusters under the alternative is quite
small. In such cases, we therefore recommend the use of bootstrap
tests based on the statistics \eqref{eq:taus} and~\eqref{eq:tauGs},
which often perform much better in finite samples, as we also show in
\Cref{sec:simulations}. These bootstrap implementations are described
next.

\subsection{Bootstrap Implementation}
\label{subsec:bootstrap}

The simplest way to implement a bootstrap test based on any of our
test statistics is to compute a bootstrap $P$~value, say $\hat
P^*$\tn, and reject the null hypothesis when it is less than the level
of the test. The bootstrap methods that we propose are based on either
the ordinary wild bootstrap \citep{Wu_1986, Liu_1988} or the wild
cluster bootstrap \citep{CGM_2008}. These bootstrap methods are
normally used to test hypotheses about $\bbeta$, and we are not aware
of any previous work in which they have been used to test hypotheses
about the variances of parameter estimates. The asymptotic validity of
the bootstrap tests that we now describe is established in
\Cref{subsec:boot}.

The key idea of the wild bootstrap is to obtain the bootstrap
disturbances by multiplying the residuals by realizations of an
auxiliary random variable with mean~0 and variance~1. In contrast to
many applications of the wild bootstrap, the residuals in this case
are unrestricted, meaning that they do not impose a null hypothesis on
$\bbeta$. This is because we are not testing any restrictions on
$\bbeta$ when testing the level of clustering. In the special case of 
testing the null of heteroskedasticity, as in \Cref{rem:hetnull}, we 
use the ordinary wild bootstrap. When the null involves clustering, 
we use the wild cluster bootstrap. Because the test statistics depend 
only on residuals, the value of $\bbeta$ in the bootstrap DGP does
not matter, and so we set it to zero.

The \th{b} wild (cluster) bootstrap sample is thus generated by 
$\biy^{*b} = \biu^{*b}$, where the vector of bootstrap disturbances 
$\biu^{*b}$ has typical element given by either $u_{ghi}^{*b} = 
v_{ghi}^{*b} \hat u_{ghi}$ for the wild bootstrap or $u_{ghi}^{*b} = 
v_{gh}^{*b} \hat u_{ghi}$ for the wild cluster bootstrap. The auxiliary
random variables $v_{ghi}^{*b}$ and $v_{gh}^{*b}$ are assumed to follow
the Rademacher distribution, which takes the values $+1$ and $-1$ with
equal probabilities. Notice that there is one such random variable per
observation for the wild bootstrap and one per cluster for the wild
cluster bootstrap. Other distributions can also be used; see
\citet{DF_2008}, \citet{DMN_2019}, and \citet{Webb_2022}.

The algorithm for a wild (cluster) bootstrap\tkk-based implementation
of our tests is as follows. It applies to both $\tau_\sigma$ and
$\tau_\Sigma$. For simplicity, the algorithm below simply refers to
one test statistic, $\tau$. However, it is easy to perform two or more
tests at the same time, using just one set of bootstrap samples for
all of them. For example, if there are three possible regressors of 
interest, we might perform four tests, one with $k=3$ based on 
$\tau_\Sigma$ and three with $k=1$ based on different versions 
of~$\tau_\sigma$.

\begin{algorithm}[Bootstrap test implementation]
\label{alg:BS}
Let $B >\!> 1$ denote the number of bootstrap replications, and let 
$\tau$ denote the chosen test statistic.

\begin{enumerate}

\item Estimate model \eqref{model} by OLS to obtain the 
residuals~$\hat\biu$.

\item Compute the empirical score vector $\hat\bis$ and use it to
compute~$\tau$.

\item For $b=1,\ldots,B$,

\begin{enumerate}

\item generate the vector of bootstrap dependent variables $\biy^{*b}
= \biu^{*b}$ from the residual vector $\hat\biu$ using the wild 
cluster bootstrap corresponding to the null hypothesis, or the 
ordinary wild bootstrap if the null does not involve clustering.

\item Regress $\biy^{*b}$ on $\biX$ to obtain the bootstrap residuals
$\hat\biu^{*b}$, and use these, together with $\biX$, to compute 
$\tau^{*b}$, the bootstrap analog of~$\tau$.

\end{enumerate}

\item Compute the bootstrap $P$~value $\hat{P}^\ast =B^{-1}\sum_{b=1}^B 
\IF(\tau^{*b} > \tau)$.

\end{enumerate}
\end{algorithm}

As usual, if $\alpha$ is the level of the test, then $B$ should be
chosen so that $(1-\alpha)(B+1)$ is an integer \citep{RM_2007}. Numbers
like 999 and 9,999 are commonly used because they satisfy this condition
for conventional values of~$\alpha$. Power increases in~$B$, but it 
does so very slowly once $B$ exceeds a few hundred \citep{DM_2000}.

\begin{remark}
\label{rem:BSonesided}
When $\tau$ is defined as $\tau_\sigma$, \Cref{alg:BS} yields a
one\tkk-sided upper-tail test. When $\tau$ is defined as $|\tau_\sigma|$,
$\tau_\sigma^2$, or $\tau_\Sigma$, it yields a two\tkk-sided test; see
\Cref{rem:onesided}.
\end{remark}

\begin{remark}
\label{rem:critvals}
If desired, bootstrap critical values can be calculated as quantiles
of the $\tau^{*b}$. For example, when $B=999$ and the $\tau^{*b}$ are
sorted from smallest to largest, the 0.05 critical value for a
one\tkk-sided upper-tail test is $\tau^{*b'}$ for $b' = (1-0.05)(B+1) = 
950$.
\end{remark}

\begin{remark}
\label{rem:ordinary}
We could use the ordinary wild bootstrap instead of the wild cluster 
bootstrap in \Cref{alg:BS}, even when the null hypothesis involves 
clustering. The same intuition as in \citet{DMN_2019} applies, whereby
the ordinary wild bootstrap would lead to asymptotically valid tests 
because the statistics are asymptotically pivotal. There may be cases,
like the ones considered in \citet{MW-EJ} and/or ones in which the
number of fine clusters is small, in which the wild bootstrap would
perform better than the wild cluster bootstrap. However, we believe
that such cases are likely to be rare.
\end{remark}

\subsection{Choosing the Level of Clustering by Sequential Testing}
\label{subsec:level}

In many applications, there are several possible levels of clustering.
In such situations, we suggest a sequential testing procedure. The
statistical principle upon which we base our testing procedure is the
intersection-union (IU) principle 
\citep[e.g.,][]{BergerSinclair_1984}, whereby a hypothesis is rejected
if and only if the hypothesis itself, along with any hypotheses nested
within it, are all rejected. The IU principle leads naturally to a
bottom-up testing strategy for the level of clustering.
\citet{BergerSinclair_1984} show that the IU principle does not imply
an inflation of the family-wise rejection rate in the context of
multiple testing; that is, there is no accumulation of size due to
testing multiple hypotheses. We prove a similar result for our
sequential procedure below.

Suppose the potential levels of clustering are sequentially nested, 
and denote their $\bSigma$ matrices by $\bSigma_0,
\bSigma_1,\ldots,\bSigma_p$; see \eqref{covbeta} and \eqref{Sigmas}.
Here we assume that $\bSigma_0$ corresponds to no clustering, c.f.\
\Cref{rem:hetmodel,rem:hetnull}, and that, in addition, there are $p$
potential levels of clustering of the data. All these levels of
clustering are assumed to be nested from fine to increasingly more 
coarse clustering.

In this situation, following the IU statistical principle mentioned 
above, we reject clustering at level $m$ if and only if levels 
$0,\ldots,m$ are all rejected. That is, the natural testing strategy 
here is to test clustering at level $m$ against the coarser level 
$m+1$ sequentially, for $m=0,1,\dots,p-1$, and choose the level of 
clustering in the first non-rejected test. Algorithmically, we perform 
the following sequential testing procedure.

\begin{algorithm}[Nested sequential testing procedure]
\label{alg:seq}
Let $m=0$. Then:
\begin{enumerate}
\item Test $\H{0}\!: \limN \tk \bSigma_m \bSigma_{m+1}^{-1} = \bfI$ 
against $\H{1}\!: \limN \tk \bSigma_m \bSigma_{m+1}^{-1} \neq \bfI$.
\item If the test in step~1 does not reject, choose $\hat{m}=m$ and 
   stop.
\item If $m=p-1$ and the test in step~1 rejects, choose $\hat{m}=p$ 
	and stop.
\item If $m \leq p-2$ and the test in step~1 rejects, increment 
	$m$ by $1$ and go to step~1.
\end{enumerate}
\end{algorithm}

We can equivalently state the sequential testing problem in 
\Cref{alg:seq} as a type of estimation problem. Specifically,
\begin{equation}
\label{mhat}
\hat{m} = \min \{ m \in \{ 0,1,\dots,p \} \textrm{ such that }
\H{0}\!: \plimN \tk \bSigma_m \bSigma_{m+1}^{-1} = \bfI 
\textrm{ is not rejected} \}.
\end{equation}
Of course, the $\hat{m}$ resulting from \Cref{alg:seq} and from 
\eqref{mhat} will be identical.


Because each individual test will reject a false null hypothesis with 
probability converging to one, this procedure will (at least 
asymptotically) never choose a level of clustering that is too fine. 
In other words, $\hat{m}$ defined in either \Cref{alg:seq} or 
\eqref{mhat} is (nearly) consistent. Precise asymptotic 
properties of the proposed sequential procedure are established in 
\Cref{subsec:seq}, and finite\tkk-sample performance is investigated 
by Monte Carlo simulation methods in \Cref{subsec:seqsims}.

\subsection{Other Tests for the Level of Clustering}
\label{subsec:other}

To our knowledge, only two other tests for the appropriate level of
clustering have been proposed. Like our $\tau_\sigma$ test, these
concern the standard error for a single coefficient. The best-known of
them is due to \citet*{Ibragimov_2016}, and we refer to it as the IM
test. It is a one\tkk-sided test that is derived from the procedure of
\citet*{Ibragimov_2010}. The IM test is based on the assumption that
$G$ is fixed while the number of observations tends to infinity. In
this respect, it differs from our score\tkk-variance tests, for which
the asymptotic theory in \Cref{subsec:asy} requires that $G\to\infty$.
However, see the discussion in \Cref{subsec:fixedG}.

There are two versions of the IM test. The first involves estimating
the model separately for every coarse cluster. Unfortunately, this is
impossible to do for models that involve treatment effects whenever
the treatment is invariant within clusters. Even with treatment at the
fine\tkk-cluster level, it may not be possible to estimate the model
for every coarse cluster. This is the case, for example, in the
empirical example of \Cref{sec:example}. \citet{Ibragimov_2016}
therefore also propose a two\tkk-sample version of their test
statistic that can be used for testing the level of clustering for
treatment models when entire clusters are treated or not treated.
Differences between estimates for treatment and control clusters can
be used to estimate the treatment effects and perform a test of fine
clustering.

Another test for the appropriate level of clustering for a single
coefficient has very recently been proposed by \citet{Cai_2022}. This
test is based on randomization inference. Like the IM test, and unlike
our $\tau_\sigma$ test, it is necessarily one\tkk-sided and treats $G$
as fixed. It also requires that the fine clusters be large, so that,
unlike both the IM test and our tests, it cannot be used to test the
null hypothesis of independence at the observation level.

\citet{Cai_2022} presents results from a number of simulation
experiments for his test, the IM test, and the bootstrap version of
our $\tau_\sigma$ test. They suggest that the IM test and our test are
much more similar to each other than they are to Cai's test. The IM
test always rejects more often than the bootstrap version of our
$\tau_\sigma$ test, both when the null hypothesis is true and when it
is false. It can over-reject quite severely in some cases, especially
when the ratio of fine to coarse clusters is small. In the first
version of this paper, we presented some figures comparing rejection
frequencies for the IM test and the $\tau_\sigma$ test. In the
interests of space, however, we have omitted these results, because
they are broadly similar to those from the experiments in
\citet{Cai_2022}.

At this point, what is known about the properties of our tests, the IM
test, and Cai's test suggests that none of them is to be preferred in
every case. They can all provide useful information about the
appropriate level at which to cluster. Two attractive features of our
tests, which are not shared by the other two, are that the
$\tau_\sigma$ test can be either one\tkk-sided or two\tkk-sided and
that the $\tau_\Sigma$ test is based on more than one coefficient of
interest.

\subsection{Inference about Regression Coefficients}
\label{subsec:infreg}

The ultimate purpose of using any test for the appropriate level of
clustering is to make more reliable inferences about the
coefficient(s) of interest, that is, some element(s) of $\bbeta$ in
\eqref{model}. This may or may not involve some sort of formal
pre\tkk-testing or model averaging procedure.

For simplicity, suppose we are attempting to construct a confidence
interval for $\beta_1$, the (scalar) coefficient of interest, when
there are just two levels of clustering, fine and coarse. Without a
testing procedure, an investigator must choose between fine and coarse
clustering on the basis of prior beliefs about which level is
appropriate. With a testing procedure like the ones
proposed in this paper, an investigator can instead choose the level
of clustering based on the outcome of a test. This involves choosing a
level $\alpha$ for the test and deciding whether to use a
one\tkk-sided or a two\tkk-sided test. We then form the interval based 
on coarse clustering when the test rejects, and we form the interval
based on fine clustering when it does not reject.

Of course, this procedure can never work as well as the infeasible
procedure of simply choosing the correct level of clustering. It
inevitably suffers from some of the classic problems associated with
pre\tkk-testing \citep[e.g.,][]{LP_2005}. When there is actually fine
clustering, the pre\tkk-test will sometimes make a Type~I error and
reject, leading to an interval that is usually too long. When there is
actually coarse clustering, the pre\tkk-test will sometimes make a
Type~II error and fail to reject, leading to an interval that is
usually too short. In \Cref{subsec:pretest}, we report the results of
some simulation experiments that compare confidence intervals based on
several alternative procedures.

As \citet{Ibragimov_2016} point out, it probably makes sense to report
confidence intervals for $\beta_1$ based on all clustering levels that
appear plausible. The resulting inferences are then explicitly
conditional on the level of clustering. Because our tests, like the
other ones discussed in \Cref{subsec:other}, provide evidence on the
plausibility of each level of clustering, they can reduce the number
of intervals that need to be reported. For example, if the hypothesis
of independence is strongly rejected against one or more clustering
structures, then it would not be necessary to report a confidence
interval based on a heteroskedasticity-robust standard error. But if
the $P$~value for the hypothesis of fine clustering against coarse
clustering is neither extremely small nor very large, then it might
well seem reasonable to report confidence intervals based on both
levels. Whatever intervals an investigator chooses to report, tests
for the appropriate clustering level can provide valuable information
about which ones are empirically more plausible. These tests may thus
be thought of as robustness checks \citep{Cai_2022}.

\section{Asymptotic Theory}
\label{sec:theory}

In \Cref{subsec:asy}, we derive the asymptotic distributions of the
two score\tkk-variance test statistics under the null hypothesis and 
show that they are divergent under the alternative. Then we prove the
validity of the bootstrap implementation (\Cref{subsec:boot}) and
prove asymptotic results for the sequential testing procedure
(\Cref{subsec:seq}). We first state and discuss the assumptions needed
for our proofs, which may be found in \Cref{sec:proofs}.

\begin{assumption}
\label{as:cluster}
The sequence $\bis_{gh}=\sum_{i=1}^{N_{gh}}\biX_{ghi}^\top u_{ghi}$ is 
independent across both $g$ and~$h$.
\end{assumption}

\begin{assumption}
\label{as:moments}
For all $g,h$, it holds that $\E (\bis_{gh}) = \bzero$ and $\var 
(\bis_{gh}) =\bSigma_{gh}$. Furthermore, $\sup_{g,h,i}\E \Vert\bis_{ghi}
\Vert^{2\lambda}<\infty$ for some $\lambda >1$.
\end{assumption}

\begin{assumption}
\label{as:X}
The regressor matrix $\biX$ satisfies $\sup_{g,h,i} \E \Vert \biX_{ghi} 
\Vert^2 <\infty$ and $N^{-1}\biX^\top\biX \Pto \bXi$, where $\bXi$ is 
finite and positive definite.
\end{assumption}

\begin{assumption}
\label{as:eigen}
Let $\omega_{\min}(\cdot)$ and $\omega_{\max}(\cdot)$ denote the 
minimum and maximum eigenvalues of the argument. Then 
$\inf_{g,h}N_{gh}^{-1}\omega_{\min} (\bSigma_{gh} )>0$ and 
$\sup_{g,h}\omega_{\max} \big(\bSigma_{gh} (\sum_{h=1}^{M_g}
\bSigma_{gh})^{-1} \big)<1$.
\end{assumption}

\begin{assumption}
\label{as:size}
For $\lambda$ defined in \Cref{as:moments}, the cluster sizes satisfy
\par \vspace{\abovedisplayskip}
\hfill $\displaystyle \frac{\sup_g N_g^2 \sup_{g,h} N_{gh}^2}%
{\sum_{g=1}^G \omega_{\min}\big(\sum_{h=1}^{M_g}\bSigma_{gh}\big)^2} 
\longto 0 
\quad \textrm{and} \quad 
\frac{N^{1/\lambda}\sup_g N_g \sup_{g,h}N_{gh}^{3-1/\lambda}}%
{\sum_{g=1}^G \omega_{\min}\big(\sum_{h=1}^{M_g}\bSigma_{gh}\big)^2} 
\longto 0.$
\end{assumption}

\Cref{as:cluster} is the assumption of (at most) ``fine'' clustering,
which implies that the null hypothesis in \eqref{hypotheses} is
satisfied, even without taking the limit. In fact, it is slightly 
weaker than that, because we do not make the stronger assumption that 
all observations in any fine cluster are independent of those in a
different fine cluster; we only assume that the cluster sums are
independent across fine clusters. The moment conditions in 
\Cref{as:moments} and the multicollinearity condition in \Cref{as:X} 
are standard in linear regression models.

Next, the conditions in \Cref{as:eigen} rule out degenerate cases. The 
minimum eigenvalue condition rules out perfect negative correlation 
between scores within fine clusters. The maximum eigenvalue condition 
ensures that the variance of a single fine cluster cannot dominate the 
sum of the variances within a coarse cluster. It is basically satisfied if 
$M_g > 1$ for all~$g$. The latter holds by construction of the test 
statistics, because any coarse cluster with $M_g=1$ will not contribute 
to $\hat\btheta$, and hence not to the test statistic.

The conditions in \Cref{as:size} restrict the amount of heterogeneity
of cluster sizes that is allowed under both the null and the
alternative. Neither the fine cluster sizes nor the coarse cluster
sizes are required to be bounded under these conditions, which allow
the cluster sizes to diverge with the sample size. The first condition
is used in the proofs to replace residuals with disturbances and for 
convergence of the variance. The second condition trades off moments
and cluster size heterogeneity to rule out the possibility that one
cluster dominates the test statistic in the limit in such a way that
the central limit theorem does not apply; technically, it is used to
verify Lyapunov's condition for the central limit theorem. When
$\lambda \to \infty$, the second condition is implied by the first.

The denominators of both terms in \Cref{as:size} show that these 
conditions trade off intra-cluster dependence and cluster-size 
heterogeneity. That is, the greater the amount of intra-cluster 
correlation, the larger are the denominators in \Cref{as:size}, which 
allows larger clusters without dominating the limit; a similar
tradeoff was found in \citet{DMN_2019}. Furthermore, more homogeneity
in cluster sizes allows for fewer and larger clusters. We illustrate
these tradeoffs in the following remarks.

\begin{remark}
\label{rem:tradeoff}
Under \Cref{as:cluster,as:eigen}, both denominators in \Cref{as:size} 
are bounded from below by $\sum_{g=1}^G N_g^2 \geq c N \inf_g N_g$, and 
a sufficient condition for \Cref{as:size} is
\begin{equation}
\label{suff1a}
\sup_{g,h} N_{gh}^2 \left( \frac{\sup_g N_g}{\inf_g N_g} \right) 
\left( \frac{\sup_g N_g}{N} \right) \longto 0
\quad\textrm{and}\quad
\left( \frac{\sup_g N_g}{\inf_g N_g} \right)^{\!\lambda}
\left( \frac{\sup_{g,h}N_{gh}^{3\lambda-1}}{N^{\lambda-1}} \right)
\longto 0.
\end{equation}
If the cluster sizes are bounded under the alternative, i.e.\ $\sup_g
N_g < \infty$, then \eqref{suff1a} is easily satisfied. Note that
$\sup_g N_g /N \to 0$, and hence $G\to\infty$, is implied by
\Cref{as:size}, and it is therefore not stated explicitly. Suppose, 
on the other hand, that \Cref{as:eigen} were strengthened to assume that 
$\inf_{g,h} N_{gh}^{-2} \omega_{\min}(\bSigma_{gh})>0$, as would be the 
case if a random-effects or factor-type model were assumed under the 
null. In that case, \Cref{as:size} and \eqref{suff1a} could be weakened 
substantially.
\end{remark}

\begin{remark}
\label{rem:sizes}
It is interesting to consider a setup for clusters that are relatively 
homogeneous, but possibly unbounded, in size. To make this concrete, 
suppose the coarse clusters have size $N_g =O( N^\alpha )$ for 
$g=1,\ldots,G$, where $\alpha \in [0,1)$ and `$O (\cdot )$' is to be 
understood as an exact rate subject to $N_g$ being an integer. Because 
$\sum_{g=1}^G N_g = N$, it then holds that $G = O ( N^{1-\alpha} )$. 
Similarly, for each $g$, the fine clusters have size $N_{gh} = O ( 
N_g^\gamma )$ for $h=1,\ldots,M_g =O ( N_g^{1-\gamma})$. That is, when 
$\alpha$ is large (small), there are few large (many small) coarse 
clusters. Similarly, when $\gamma$ is large (small), there are few large 
(many small) fine clusters per coarse cluster. Under this setup, 
\eqref{suff1a} is satisfied if $\alpha (2\gamma+1)<1$ and $\alpha\gamma 
<(\lambda-1)/(3\lambda-1)$.

The important implication of this setup is that the implied
restrictions on the cluster sizes in \Cref{as:size} are very weak. In
fact, if we assume that the fine cluster sizes are bounded (i.e.,
$\gamma = 0$), which applies, for example, in the important special
case in which the scores are independent but heteroskedastic under the
null, then we can allow $G=O(N^{1-\alpha})$ for any $\alpha <1$. That
is, the number of coarse clusters can be arbitrarily close to $O(1)$.
For example, we allow $G=O(N^{0.1})$ and $N_g = O(N^{0.9})$, which
corresponds to very few and very large coarse clusters. In this sense,
our asymptotic framework can nearly accommodate the fixed-$G$ setup;
see \Cref{subsec:fixedG}.
\end{remark}


\subsection{Theory for Asymptotic Tests}
\label{subsec:asy}

\begin{theorem}
\label{thm:asy}
Let \Cref{as:cluster,as:moments,as:eigen,as:X,as:size} be satisfied. 
Then, as $N \to \infty$, it holds that
\begin{align*}
\var (\btheta )^{-1/2} \hat\btheta &\dto \N (0,\bfI), &
\var (\btheta )^{-1}\widehat\var (\hat\btheta ) &\Pto \bfI,
\quad\textrm{and}\\
\frac{\hat\theta}{\sqrt{\var(\theta)}} &\dto \N (0,1), &
\frac{\widehat\var (\hat\theta)}{\var (\theta)} &\Pto 1.
\end{align*}
\end{theorem}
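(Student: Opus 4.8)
The plan is to establish the four convergence claims by treating the general $\btheta$ (vector) case as primary, since the scalar results for $\hat\theta$ follow as the $k=1$ specialization. The structure naturally splits into two asymptotic-normality claims (the convergence of the studentized contrasts to standard normal/multivariate normal) and two consistency claims (that the variance estimators are ratio-consistent for the true variances). I would handle the CLT parts first and the variance-consistency parts second, then note that the $t$-ratio results follow by the continuous mapping theorem combined with Slutsky's lemma applied to the corresponding matrix statements.

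For the asymptotic normality of $\var(\btheta)^{-1/2}\hat\btheta$, the first step is to reduce $\hat\btheta$ to its infeasible counterpart built from the true disturbances rather than residuals: I would show that replacing $\hat\bis_{gh}$ by the true $\bis_{gh}$ (equivalently, replacing residuals $\hat u_{ghi}$ by disturbances $u_{ghi}$) introduces an error that is asymptotically negligible relative to the standard deviation $\var(\btheta)^{1/2}$. This is where the first condition in \Cref{as:size} does its work, controlling the replacement error through the ratio of cluster sizes to the accumulated minimum eigenvalues in the denominator. Having passed to the infeasible statistic, I would observe that by \Cref{as:cluster} it decomposes as a sum over coarse clusters $g$ of independent (but not identically distributed) mean-zero terms --- each term being the within-$g$ sum of cross-products of fine-cluster scores, as in \eqref{thetaknown} --- so that a triangular-array central limit theorem applies. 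The natural route is to verify Lyapunov's condition, bounding the normalized sum of $(2+\epsilon)$-moments of the per-coarse-cluster contributions; this is precisely what the second condition in \Cref{as:size} is engineered to deliver, trading off the moment parameter $\lambda$ against cluster-size heterogeneity. The mean-zero property and the variance formula \eqref{tauvar} have already been derived in the text under \eqref{def Sigma gh}, so I can invoke them directly.

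For the two variance-consistency claims, $\var(\btheta)^{-1}\widehat\var(\hat\btheta)\Pto\bfI$ and its scalar analog, the approach is again to first replace residuals with disturbances in $\widehat\var(\hat\theta)$ (controlled by the same first condition in \Cref{as:size}) and then show that the disturbance-based estimator converges in the appropriate ratio sense to the population variance. Since $\widehat\var(\hat\theta)$ in \eqref{varfast}, and its matrix version \eqref{var2sided}, are themselves sums of independent terms across $g$, I would establish an $L^1$ or $L^2$ law-of-large-numbers-type argument: show the expectation of the estimator matches $\var(\theta)$ up to the asymptotically irrelevant correction factors $m_{\rm c}, m_{\rm f}$, and show the variance of the estimator is of smaller order than $\var(\theta)^2$. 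The moment bound in \Cref{as:moments} (with $\lambda>1$, giving more than two moments on the scores) supplies the integrability needed to control these higher-order terms, and \Cref{as:eigen} prevents the denominator from degenerating.

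\emph{The main obstacle} I anticipate is the residual-for-disturbance replacement in the \emph{matrix} case and, relatedly, verifying Lyapunov's condition with the exact normalization $\var(\btheta)^{1/2}$ rather than a simpler deterministic sequence. The difficulty is that $\var(\btheta)$ is not a clean power of $N$ --- it is governed by the accumulated eigenvalues $\sum_g \omega_{\min}(\sum_h \bSigma_{gh})^2$ appearing in the denominators of \Cref{as:size} --- so every negligibility bound must be expressed relative to this data-dependent scale, and the OLS replacement error involves $(\hat\bbeta-\bbeta_0) = O_p(N^{-1/2})$ interacting with products of four scores, producing cross terms whose orders must be tracked carefully against both conditions in \Cref{as:size} simultaneously. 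Keeping the Kronecker-product and elimination-matrix bookkeeping of \eqref{var2sided} aligned with the scalar intuition of \eqref{varfast} throughout these bounds is the part most likely to require genuine care rather than routine estimation.
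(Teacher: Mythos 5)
Your overall architecture matches the paper's: replace residuals by disturbances (controlled by the first condition of \Cref{as:size}), prove a CLT for the infeasible contrast $\theta$, and prove ratio-consistency of the variance estimator, with the scalar and matrix statements handled by the same argument. However, the CLT device you propose is genuinely different from the paper's, and as described it does not close under the stated assumptions. You treat $\theta=\sum_g \theta_g$ with $\theta_g=\sum_{h_1}\sum_{h_2\neq h_1}s_{gh_1}s_{gh_2}$ as a sum of independent terms over coarse clusters and propose to verify Lyapunov on the per-coarse-cluster contributions $\E|\theta_g|^{2+\epsilon}$. The paper instead decomposes over \emph{fine} clusters, writing $\theta=\sum_g\sum_h w_{gh}$ with $w_{gh}=2s_{gh}\sum_{j<h}s_{gj}$, observes that $w_{gh}$ is a martingale difference array, and applies Brown's martingale CLT, verifying Lyapunov on the increments $w_{gh}$ and separately proving convergence of the conditional variance. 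The distinction matters: bounding $\E|\theta_g|^{2\lambda}$ requires aggregating the $M_g$ increments inside each coarse cluster via a Burkholder or von Bahr--Esseen inequality, which produces a bound of order $\sup_g N_g^{2\lambda-1}\sup_{g,h}N_{gh}^{2\lambda}N$ rather than the paper's $\sup_{g,h}N_{gh}^{3\lambda-1}\sup_g N_g^{\lambda}N$; the ratio of the two is $(\sup_g N_g/\sup_{g,h}N_{gh})^{\lambda-1}$, which diverges whenever coarse clusters are much larger than fine ones. Consequently the coarse-level Lyapunov condition is \emph{not} implied by the second condition of \Cref{as:size}, which is calibrated precisely to the fine-level increments (e.g., in the setup of \Cref{rem:sizes} your condition fails for admissible $(\alpha,\gamma,\lambda)$ with $\lambda$ near $1$). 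The trade-off is real: your independent-array route makes the conditional-variance step trivial, but it pays for that in moments and cluster-size heterogeneity; the martingale decomposition is what lets the theorem hold with only $\lambda>1$.

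A second, smaller issue is the variance-consistency step. You propose an $L^1$ or $L^2$ law of large numbers for $\widehat\var(\hat\theta)$, but a mean-square argument applied to the terms $s_{gh_1}^2s_{gh_2}^2$ requires fourth moments of $s_{gh}^2$, i.e.\ $\lambda\geq 2$, which the theorem does not assume. The paper's \Cref{Lemma:varhat} avoids this by splitting the centered sum into a term with deterministic weights (handled by $L_\lambda$-norm convergence via Minkowski and Marcinkiewicz--Zygmund) and a martingale-difference term $q_{3,gh}$ (handled by von Bahr--Esseen for $1<\lambda<2$ and by a conditional second-moment bound on a high-probability set for $\lambda\geq 2$). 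Your sketch is in the right spirit but needs the $L_\lambda$ machinery, not a plain second-moment computation, to cover the full range $\lambda>1$.
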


\begin{remark}
\label{rem:self-norm}
Observe that the statement of the asymptotic distributions in 
\Cref{thm:asy} only concerns quantities that are self-normalized. For 
example, in the scalar case, these are either $\hat\theta$ divided 
by its true standard error or the estimated variance of $\hat\theta$ 
divided by the true variance. This is because the appropriate rates of 
convergence are not known in general; see the discussion 
below~\eqref{hypotheses}.
\end{remark}

The asymptotic distributions of the test statistics follow
immediately from \Cref{thm:asy}.

\begin{corollary}
\label{cor:asy}
Let \Cref{as:cluster,as:moments,as:eigen,as:X,as:size} be satisfied. 
Then, as $N \to \infty$, it holds that
\begin{equation*}
\tau_\Sigma \dto \chi^2 \big(k(k+1)/2\big)
\quad\textrm{and}\quad
\tau_\sigma \dto \N(0,1).
\end{equation*}
\end{corollary}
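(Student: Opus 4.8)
Both limits should follow directly from \Cref{thm:asy} by combining Slutsky's theorem with the continuous mapping theorem. Because $k$ is held fixed, every matrix appearing below has fixed finite dimension, so the relevant convergences are equivalent in any matrix norm and no uniformity over a growing dimension is required.

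For $\tau_\sigma$ I would write the statistic in the self-normalized form
\begin{equation*}
\tau_\sigma=\frac{\hat\theta}{\sqrt{\widehat\var(\hat\theta)}}
=\frac{\hat\theta/\sqrt{\var(\theta)}}{\sqrt{\widehat\var(\hat\theta)/\var(\theta)}}.
\end{equation*}
By \Cref{thm:asy} the numerator converges in distribution to $\N(0,1)$, while the ratio under the square root in the denominator converges in probability to $1$; since $x\mapsto\sqrt{x}$ is continuous at $1$, the denominator converges in probability to $1$, and Slutsky's theorem yields $\tau_\sigma\dto\N(0,1)$.

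For $\tau_\Sigma$ I would set $\biV=\var(\btheta)$, $\biz_N=\biV^{-1/2}\hat\btheta$, and $\biW_N=\biV^{-1/2}\widehat\var(\hat\btheta)\,\biV^{-1/2}$. Inserting $\biV^{-1/2}\biV^{1/2}=\bfI$ on either side of $\hat\btheta$ rewrites the statistic as the quadratic form
\begin{equation*}
\tau_\Sigma=\hat\btheta^\top\widehat\var(\hat\btheta)^{-1}\hat\btheta
=\biz_N^\top\biW_N^{-1}\biz_N .
\end{equation*}
By \Cref{thm:asy}, $\biz_N\dto\N(\bzero,\bfI)$ with $\bfI$ of dimension $k(k+1)/2$, so that $\Vert\biz_N\Vert^2=O_P(1)$ and $\Vert\biz_N\Vert^2\dto\chi^2(k(k+1)/2)$. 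The key step is to establish $\biW_N\Pto\bfI$; granting this, continuity of matrix inversion at $\bfI$ gives $\biW_N^{-1}\Pto\bfI$, and writing $\tau_\Sigma=\biz_N^\top\biz_N+\biz_N^\top(\biW_N^{-1}-\bfI)\biz_N$ with the second term bounded by $\Vert\biW_N^{-1}-\bfI\Vert\,\Vert\biz_N\Vert^2=o_P(1)$, the continuous mapping theorem gives $\tau_\Sigma\dto\chi^2(k(k+1)/2)$.

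The main obstacle is the passage from the one-sided normalization $\biV^{-1}\widehat\var(\hat\btheta)\Pto\bfI$ supplied by \Cref{thm:asy} to the symmetric version $\biW_N=\biV^{-1/2}\widehat\var(\hat\btheta)\,\biV^{-1/2}\Pto\bfI$ needed above. Formally this amounts to conjugating the deviation $\biV^{-1}\widehat\var(\hat\btheta)-\bfI=o_P(1)$ by $\biV^{\pm1/2}$, and since the entries of $\biV$ may grow with $N$ one must check that this conjugation does not inflate the deviation. This is where the structural control on $\var(\btheta)$ afforded by the assumptions, in particular the eigenvalue conditions of \Cref{as:eigen} together with the variance-convergence part of \Cref{thm:asy}, does the real work; the remaining steps are routine applications of Slutsky's theorem and the continuous mapping theorem.
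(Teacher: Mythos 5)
Your argument is essentially the paper's: \Cref{cor:asy} is stated there as an immediate consequence of \Cref{thm:asy}, with no separate proof beyond the Slutsky/continuous-mapping step you spell out, and your treatment of $\tau_\sigma$ is exactly right. The one place where you diverge is the ``main obstacle'' you flag for $\tau_\Sigma$: passing from $\var(\btheta)^{-1}\widehat\var(\hat\btheta)\Pto\bfI$ to $\biW_N=\biV^{-1/2}\widehat\var(\hat\btheta)\tk\biV^{-1/2}\Pto\bfI$ does not require any structural control on $\var(\btheta)$ or \Cref{as:eigen}; it is pure linear algebra. Since $\biW_N=\biV^{1/2}\big(\biV^{-1}\widehat\var(\hat\btheta)\big)\biV^{-1/2}$ is similar to $\biV^{-1}\widehat\var(\hat\btheta)$, the two share eigenvalues, and because the dimension $k(k+1)/2$ is fixed, entrywise convergence $\biV^{-1}\widehat\var(\hat\btheta)\Pto\bfI$ forces all these eigenvalues to $1$ in probability; as $\biW_N$ is symmetric, $\Vert\biW_N-\bfI\Vert=\max_i|\lambda_i(\biW_N)-1|\Pto 0$, so no bound on the condition number of $\biV$ is needed and your remaining steps go through as written.
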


We next consider the asymptotic behavior of the test statistics under
the alternative. Because \Cref{as:cluster} implies that \H{0} is true,
we do not make that assumption. Instead, we impose the following
conditions:

\begin{assumption}
\label{as:clusteralt}
The sequence $\bis_g = \biX_g^\top \biu_g = \sum_{h=1}^{N_g}\bis_{gh}$ 
is independent across~$g$.
\end{assumption}

\goodbreak

\begin{assumption}
\label{as:sizealt}
The cluster sizes satisfy
\par \vspace{\abovedisplayskip}
\hfill $\displaystyle \frac{\sup_g N_g^{3/2}N^{1/2}}%
{\sum_{g=1}^G \omega_{\min}(\bSigma_g)} \longto 0.$
\end{assumption}

\goodbreak

\Cref{as:clusteralt} is the assumption of (at most) coarse clustering. 
This assumption is very general, and departures from the null could be 
very small and inconsequential. In order for our tests to be 
able to detect departures from the null hypothesis, with probability 
converging to one in the limit, we need to impose sufficient 
correlation within the coarse clusters. That is, we need $\bSigma_g = 
\sum_{h_1=1}^{M_g}\sum_{h_2=1}^{M_g}\E (\bis_{gh_1}\bis_{gh_2}^\top)$
to be sufficiently large, in aggregate. This condition is embodied in 
\Cref{as:sizealt}.

\begin{remark}
\label{rem:tradeoffalt}
As in \Cref{rem:tradeoff}, there is a tradeoff between cluster size 
heterogeneity and intra-cluster correlation, in this case correlation 
within coarse clusters. Specifically, under \Cref{as:eigen}, the 
denominator in \Cref{as:sizealt} is bounded from below by $\sum_{g=1}^G 
N_g = N$\tn, and hence a sufficient condition for \Cref{as:sizealt} is 
\begin{equation}
\label{sizealt}
\frac{\sup_g N_g^3}{N}\longto 0.
\end{equation}
Suppose instead that \Cref{as:eigen} were strengthened to assume that 
$\inf_g N_g^{-2}\omega_{\min}(\bSigma_g)>0$ (as in
\Cref{rem:tradeoff}, this could be due to a random-effects model or a
factor-type model). That is, more correlation is assumed within the
coarse clusters, so that there is a stronger departure from the null
hypothesis. In this case, the denominator in \Cref{as:sizealt} is
bounded from below by $\sum_{g=1}^G N_g^2 \geq \inf_g N_g N$.
Therefore, a sufficient condition for \Cref{as:sizealt} is
\begin{equation}
\label{sizealt2}
\frac{\sup_g N_g^3}{\inf_g N_g^2 N}\longto 0.
\end{equation}
With relatively homogeneous coarse clusters as in \Cref{rem:sizes}, 
i.e.\ coarse clusters where $\sup_g N_g$ and $\inf_g N_g$ are of the 
same order of magnitude, the condition \eqref{sizealt2} reduces to 
$\sup_g N_g /N \to 0$, which is clearly minimal and implied by 
\Cref{as:size}.
\end{remark}

\begin{theorem}
\label{thm:cons}
Let \Cref{as:moments,as:X,as:eigen,as:size,as:clusteralt,as:sizealt} be 
satisfied, and suppose \H{0} in \eqref{hypotheses} is not true. Then, 
as $N \to \infty$, it holds that
\begin{equation*}
\tau_\Sigma \Pto +\infty \quad\textrm{and}\quad |\tau_\sigma| \Pto +\infty.
\end{equation*}
\end{theorem}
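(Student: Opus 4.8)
The plan is to show that under \H{1} the numerator of each statistic is dominated by a nonzero ``signal'' whose squared magnitude grows strictly faster than the relevant variance estimator, so that the ratio diverges. I treat the scalar statistic $\tau_\sigma$ first and then reduce the matrix case to it. As a preliminary step I would replace the residual-based empirical scores $\hat s_{gh}$ by their disturbance-based counterparts $s_{gh}=\sum_i x_{ghi}u_{ghi}$ in both $\hat\theta$ and $\widehat\var(\hat\theta)$. This is exactly the replacement already carried out in the proof of \Cref{thm:asy}, justified here by the first condition in \Cref{as:size}; the approximation error is of smaller order than the quantities it perturbs. From that point on I would work with $\theta=\sum_g(s_g^2-\sum_h s_{gh}^2)=\sum_g\sum_{h_1\ne h_2}s_{gh_1}s_{gh_2}$ and the analogous disturbance-based variance estimator.

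Next I would identify the signal. Taking expectations and using the coarse-cluster independence in \Cref{as:clusteralt}, the center of $\theta$ is
\[
D_N=\E(\theta)=\sum_{g=1}^G\sum_{h_1\ne h_2}^{M_g}\E(s_{gh_1}s_{gh_2})=\sigma_{\rm c}^2-\sigma_{\rm f}^2 ,
\]
which is nonzero under \H{1}. Because \H{1} in \eqref{hypotheses} forces $\sigma_{\rm f}^2/\sigma_{\rm c}^2$ away from $1$, the signal satisfies $|D_N|\asymp\sigma_{\rm c}^2=\sum_{g=1}^G\omega_{\min}(\bSigma_g)$, the denominator appearing in \Cref{as:sizealt}. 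I would then bound the two ``noise'' quantities. Using \Cref{as:clusteralt} and the pathwise Cauchy--Schwarz bound $\sum_h s_{gh}^2\le(\sum_{i\in g}x_{ghi}^2)(\sum_{i\in g}u_{ghi}^2)=O_p(N_g^2)$, I expect both $\var(\theta)$ and $\widehat\var(\hat\theta)$ to be of order $O_p(\sum_g N_g^4)=O_p(\sup_g N_g^3\,N)$. Now \Cref{as:sizealt} is precisely the statement that $\sup_g N_g^3\,N=o(D_N^2)$, so both the fluctuations of the numerator about $D_N$ and the denominator are $o_p(D_N^2)$. Hence $\hat\theta=D_N(1+o_p(1))$ and $\widehat\var(\hat\theta)=o_p(D_N^2)$, which gives $|\tau_\sigma|=|\hat\theta|/\sqrt{\widehat\var(\hat\theta)}\Pto+\infty$.

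For $\tau_\Sigma$ I would use the Rayleigh-quotient bound $\tau_\Sigma\ge\|\hat\btheta\|^2/\omega_{\max}(\widehat\var(\hat\btheta))$, valid once $\widehat\var(\hat\btheta)$ is positive definite (which I would argue holds with probability tending to one). The vector $\hat\btheta$ concentrates around $\vech(\bSigma_{\rm c}-\bSigma_{\rm f})$, at least one of whose diagonal entries equals the scalar signal analyzed above, so $\|\hat\btheta\|^2\asymp\big(\sum_g\omega_{\min}(\bSigma_g)\big)^2$. Since $k$ is fixed, $\omega_{\max}(\widehat\var(\hat\btheta))\le\Tr\widehat\var(\hat\btheta)$, and each diagonal block is controlled exactly as in the scalar case, giving $\omega_{\max}(\widehat\var(\hat\btheta))=O_p(\sup_g N_g^3\,N)$. \Cref{as:sizealt} then yields $\|\hat\btheta\|^2/\omega_{\max}(\widehat\var(\hat\btheta))\Pto+\infty$, hence $\tau_\Sigma\Pto+\infty$.

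The hard part will be the denominator order bound. Because $\lambda$ in \Cref{as:moments} may be only slightly larger than $1$, the fourth moments of the cluster-level scores (and of the quantities $\sum_{i\in g}x_{ghi}^2$ under \Cref{as:X}) need not be finite, so the rate $O_p(\sup_g N_g^3\,N)$ for $\widehat\var(\hat\theta)$ and the matching variance bound for the numerator cannot be obtained by naive second-moment arguments on $(\sum_h s_{gh}^2)^2$. The careful step is to combine the pathwise coarse-cluster Cauchy--Schwarz bound with the available $2\lambda$ moments through maximal/truncation inequalities to secure the $O_p$ rate, in the same spirit that the second condition in \Cref{as:size} is used on the null side to verify Lyapunov's condition. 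Verifying positive definiteness of $\widehat\var(\hat\btheta)$, so that $\tau_\Sigma$ is well defined and the Rayleigh bound applies, is a subsidiary technical point handled along the way.
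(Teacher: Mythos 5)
Your proposal follows essentially the same route as the paper's proof: you center $\hat\theta$ at the signal $\E(\theta)=\sigma_{\rm c}^2-\sigma_{\rm f}^2$, which is of the order of $\sum_g\sigma_g^2$ under \H{1}, show that the fluctuation $\theta-\E(\theta)$, the residual-replacement error $\hat\theta-\theta$, and $\widehat\var(\hat\theta)^{1/2}$ are all $o_P$ of that order via \Cref{as:sizealt}, and reduce $\tau_\Sigma$ to the scalar case. This matches the paper's three-part decomposition in its proof of \Cref{thm:cons} (including the bounds $\var(\theta)=O(\sup_g N_g^3\,N)$ and $\hat\theta-\theta=O_P(\sup_g N_g^2)$ from \Cref{Lemma:theta}), so the approach is correct and not materially different.
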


It follows immediately from \Cref{thm:cons} that tests based on either 
of our statistics reject with probability converging to one under the
alternative. That is, they are consistent tests.

Of course, power will depend in a complicated way on many aspects of 
the model and DGP, including the number of large clusters and their
sizes, because these will affect the number of correlations that need
to be estimated between fine clusters within coarse clusters; see 
\eqref{thetaknown}. Power will also depend on the true values of these
correlations. If they are mostly non-zero and non-trivial, then power
will be higher with larger coarse clusters.





\subsection{Theory for Bootstrap Tests}
\label{subsec:boot}

We now demonstrate the asymptotic validity of the bootstrap 
implementation of our tests. To this end, let $\tau$ denote either of
our statistics, and let the cumulative distribution function of $\tau$
under \H{0} be denoted $P_0 (\tau \leq x)$. The corresponding
bootstrap statistic is denoted $\tau^\ast$\tn. As usual, let $P^\ast$
denote the bootstrap probability measure, conditional on a given
sample, and let $\E^\ast$ denote the corresponding expectation
conditional on a given sample. 

\begin{theorem}
\label{thm:boot}
Let \Cref{as:moments,as:X,as:eigen,as:size,as:clusteralt} be satisfied 
with $\lambda \geq 2$, and assume that $\E^\ast |v^\ast|^{2\lambda}
<\infty$. Then, as $N \to \infty$, it holds for any $\epsilon >0$ that
\begin{equation*}
P \big( \sup_{x \in \mathbb R} \big| P^\ast (\tau^\ast \leq x) - 
P_0 (\tau \leq x) \big| > \epsilon \big) \longto 0 .
\end{equation*}
\end{theorem}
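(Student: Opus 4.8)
The plan is to show that, conditionally on the sample and with probability approaching one, $\tau^\ast$ has under $P^\ast$ the same limiting distribution that $\tau$ has under \H{0} in \Cref{cor:asy}. The decisive observation is that the wild (cluster) bootstrap imposes the fine-clustering null by construction: the bootstrap disturbance-based fine-cluster scores are $\hat\bis^\ast_{gh}=v^\ast_{gh}\hat\bis_{gh}$ for the wild cluster bootstrap and $\hat\bis^\ast_{gh}=\sum_{i}v^\ast_{ghi}\hat\bis_{ghi}$ for the ordinary wild bootstrap. In either case, with $\E^\ast(v^\ast)=0$ and $\var^\ast(v^\ast)=1$, these are conditionally independent across $h$ with mean zero, so condition \eqref{def Sigma gh} holds in the bootstrap world \emph{whether or not} the true DGP satisfies \H{0} or the coarse-clustering alternative \Cref{as:clusteralt}. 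Hence the bootstrap contrast $\hat\theta^\ast$ (and $\hat\btheta^\ast$) has conditional mean zero, and its conditional variance is the empirical analog of \eqref{tauvar} with the population $\bSigma_{gh}$ replaced by $\hat\bis_{gh}\hat\bis^\top_{gh}$.

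Given this, I would reprove \Cref{thm:asy} in the bootstrap probability space. Conditionally on the data, I first verify the bootstrap versions of \Cref{as:X,as:moments,as:eigen,as:size} with probability tending to one. The moment bound combines the assumed $\E^\ast|v^\ast|^{2\lambda}<\infty$ with control of the empirical scores; the first condition of \Cref{as:size}, which the surrounding text identifies as the device for replacing residuals by disturbances, lets me bound $\sup_{g,h,i}\E^\ast\Vert\hat\bis^\ast_{ghi}\Vert^{2\lambda}$ by population analogs and verify a conditional Lyapunov condition in probability. I would then apply the conditional Lyapunov central limit theorem to obtain $\var^\ast(\hat\btheta^\ast)^{-1/2}\hat\btheta^\ast\dsto\N(0,\bfI)$ in probability, and show that $\widehat\var^\ast(\hat\btheta^\ast)$ is consistent for $\var^\ast(\hat\btheta^\ast)$, paralleling the two convergence statements of \Cref{thm:asy}. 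These yield $\tau^\ast_\Sigma\dsto\chi^2(k(k+1)/2)$ and $\tau^\ast_\sigma\dsto\N(0,1)$ conditionally, in probability.

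To obtain the uniform statement, note that both limiting laws are continuous. P\'olya's theorem then upgrades the pointwise conditional convergence of $P^\ast(\tau^\ast\leq x)$ to the limiting CDF $F$ into convergence uniform in $x$, in probability; since \Cref{cor:asy} and continuity of $F$ also give $\sup_x|P_0(\tau\leq x)-F(x)|\to0$, the triangle inequality yields $\sup_x|P^\ast(\tau^\ast\leq x)-P_0(\tau\leq x)|\Pto0$, which is the assertion.

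I expect the main obstacle to be the bootstrap analog of the minimum-eigenvalue part of \Cref{as:eigen}. For the wild cluster bootstrap the conditional per-cluster variance $\hat\bis_{gh}\hat\bis^\top_{gh}$ is rank one, so the literal condition $\inf_{g,h}N_{gh}^{-1}\omega_{\min}(\cdot)>0$ fails when $k>1$. The resolution is that the proof of \Cref{thm:asy} requires only non-degeneracy of the \emph{variance of the contrast} --- the analog of \eqref{tauvar}, which depends solely on products of empirical fine-cluster variances across \emph{distinct} fine clusters (the cross terms $\hat s^2_{gh_1}\hat s^2_{gh_2}$ in \eqref{varfast} and their matrix counterparts in \eqref{var2sided}), and so stays bounded away from zero even when each per-cluster variance is singular. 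Formulating this rank-robust version of \Cref{as:eigen}, and showing that the resulting bootstrap variance is consistent for its population target under \Cref{as:clusteralt} --- where the residuals carry coarse-level correlation that the independent fine-level resampling must wash out, leaving only the per-fine-cluster second moments that coincide under \H{0} and the alternative --- is where the real work lies; the strengthening to $\lambda\geq2$ enters precisely to supply the extra moments needed to control these fourth-order empirical quantities.
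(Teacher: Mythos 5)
Your proposal follows essentially the same route as the paper: the paper likewise observes that the wild (cluster) bootstrap imposes the fine-clustering structure by construction, reproves \Cref{thm:asy} under the bootstrap measure via a conditional martingale/Lyapunov CLT together with consistency of the bootstrap variance estimator, and concludes with P\'olya's theorem and the triangle inequality. Your worry about the rank-one empirical per-cluster variances is handled in the paper exactly as you anticipate, through a lower bound on the variance of the contrast (built from cross-products over distinct fine clusters and the maximum-eigenvalue part of \Cref{as:eigen}) rather than any per-cluster minimum-eigenvalue condition.
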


First, note that the bootstrap theory requires a slight strengthening of
the moment condition since at least four moments are now required. 
Second, \Cref{thm:boot} shows that the bootstrap $P$~values in 
\Cref{alg:BS} are asymptotically valid under \Cref{as:cluster} and \H{0}.
Third, note that neither the null hypothesis nor \Cref{as:cluster} is 
imposed in \Cref{thm:boot}. Thus \Cref{thm:asy,thm:cons,thm:boot} 
together show immediately that the bootstrap tests are consistent. We 
summarize these results in the following corollary.

\begin{corollary}
\label{cor:boot}
Let \Cref{as:moments,as:X,as:eigen,as:size} be satisfied with $\lambda 
\geq 2$, and assume that $\E^\ast |v^\ast|^{2\lambda} < \infty$. As 
$N \to \infty$, it holds that:
\begin{itemize}
\item[(i)] If \Cref{as:cluster} is satisfied and \H{0} is true, then
$\hat P^\ast \dto \U(0,1)$, where $\U(0,1)$ is a uniform random 
variable on~$[0,1]$.
\item[(ii)] If \Cref{as:clusteralt,as:sizealt} are satisfied and \H{0} 
is not true, then $\hat P^\ast \Pto 0$.
\end{itemize}
\end{corollary}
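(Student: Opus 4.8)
The plan is to work with the idealized ($B\to\infty$) bootstrap $P$~value $\hat P^\ast = 1 - F^\ast(\tau)$, where $F^\ast(x) = P^\ast(\tau^\ast \leq x)$ is the bootstrap CDF of the statistic. This identity follows from \Cref{alg:BS} because $B^{-1}\sum_{b} \IF(\tau^{\ast b} > \tau) \Pto P^\ast(\tau^\ast > \tau) = 1 - F^\ast(\tau)$ conditionally as $B\to\infty$; I would treat the Monte Carlo error from finite $B$ as negligible, since it is the asymptotics in $N$ that drive both claims. Everything then reduces to understanding the random CDF $F^\ast$ evaluated at the data-dependent point $\tau$.

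The workhorse is a single uniform-convergence statement for $F^\ast$ against the \emph{fixed} continuous limit. By \Cref{cor:asy}, the finite-sample null CDF $P_0(\tau\leq x)$ converges in distribution to $F_\infty$, the $\N(0,1)$ or $\chi^2(k(k+1)/2)$ CDF as appropriate, both of which are continuous; hence Pólya's theorem upgrades this to $\sup_x |P_0(\tau \leq x) - F_\infty(x)| \to 0$. Combining this with the uniform bootstrap consistency of \Cref{thm:boot} and the triangle inequality yields $\sup_x |F^\ast(x) - F_\infty(x)| \Pto 0$. The essential point is that \Cref{thm:boot} imposes neither \H{0} nor \Cref{as:cluster}, so this display is available under the hypotheses of both~(i) and~(ii).

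For part~(i) the null holds, so $\tau \dto \tau_\infty$ with continuous CDF $F_\infty$. Decomposing $F^\ast(\tau) = F_\infty(\tau) + [F^\ast(\tau) - F_\infty(\tau)]$, the bracket is bounded by $\sup_x|F^\ast(x)-F_\infty(x)| = o_P(1)$, while the continuous-mapping and probability-integral-transform arguments give $F_\infty(\tau) \dto F_\infty(\tau_\infty) = \U(0,1)$; a Slutsky step then delivers $\hat P^\ast = 1 - F^\ast(\tau) \dto \U(0,1)$. For part~(ii) the null is false, so \Cref{thm:cons} gives $\tau \Pto +\infty$. Fixing $\delta>0$, choose $x_0$ with $F_\infty(x_0) > 1-\delta/2$; then on $\{\tau \geq x_0\}$ monotonicity of $F^\ast$ gives $F^\ast(\tau) \geq F^\ast(x_0) = F_\infty(x_0)+o_P(1)$, which exceeds $1-\delta$ with probability tending to one. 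Since also $P(\tau \geq x_0)\to 1$, we conclude $\hat P^\ast < \delta$ with probability tending to one, and as $\delta$ is arbitrary, $\hat P^\ast \Pto 0$.

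The heavy lifting is done in \Cref{thm:asy,thm:cons,thm:boot}, so the remaining work is light; the only points needing care are the Pólya step that converts distributional convergence of $P_0$ into uniform convergence to the continuous $F_\infty$, and the reliance on continuity of $F_\infty$ in the probability-integral-transform step of part~(i). One further subtlety, for the one-sided $\tau_\sigma$ test in part~(ii), is that consistency requires $\tau_\sigma \Pto +\infty$ rather than merely $|\tau_\sigma| \Pto +\infty$; this holds for the upper-tail alternatives of primary interest (cf.\ \Cref{rem:onesided}), while the two-sided statistics $\tau_\Sigma$, $|\tau_\sigma|$, and $\tau_\sigma^2$ diverge directly by \Cref{thm:cons}.
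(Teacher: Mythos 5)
Your proof is correct and follows exactly the route the paper intends: the paper offers no separate proof of this corollary, asserting only that it is immediate from \Cref{thm:asy,thm:cons,thm:boot}, and your argument (P\'olya's theorem to upgrade \Cref{cor:asy} to uniform convergence toward the continuous limit $F_\infty$, the triangle inequality with \Cref{thm:boot}, the probability-integral transform for part~(i), and divergence of $\tau$ for part~(ii)) is precisely the standard fill-in of that claim. Your caveat about the upper-tail $\tau_\sigma$ test in part~(ii) --- that \Cref{thm:cons} only delivers $|\tau_\sigma| \Pto +\infty$, so the one-sided $P$~value need not vanish if the departure happens to push $\tau_\sigma$ toward $-\infty$ --- is a point the paper itself glosses over, and it is well taken.
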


\subsection{Theory for Sequential Testing Procedure}
\label{subsec:seq}

The next theorem provides theoretical justification for the sequential
testing procedure given in \Cref{alg:seq}.

\begin{theorem}
\label{thm:seq}
Let $\hat m$ be defined in \Cref{alg:seq} or \eqref{mhat}. Suppose 
\Cref{as:cluster} is satisfied when the ``fine'' clustering level in 
\eqref{hypotheses} is $m = m_0 \in \{ 0,1,\ldots ,p \}$ (and hence also 
when $m>m_0$), and suppose \H{0} in \eqref{hypotheses} is not true for 
clustering levels $m<m_0$. Suppose also that 
\Cref{as:moments,as:X,as:eigen,as:size,as:sizealt} are satisfied, and 
let $\alpha$ denote the nominal level of the tests. As $N \to \infty$, it holds 
that
\begin{itemize}
\item[(i)] if $m_0 \leq p-1$, then $P(\hat m \leq m_0-1) \to 0$, 
$P(\hat m =m_0 ) \to 1-\alpha$, and $P(\hat m \geq m_0+1) \to \alpha$;
\item[(ii)] if $m_0 = p$, then $P ( \hat m \leq m_0-1 ) \to 0$ and
$P(\hat m = m_0 ) \to 1$.
\end{itemize}
\end{theorem}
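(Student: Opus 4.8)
The plan is to prove \Cref{thm:seq} by leveraging the asymptotic results already established for individual tests in \Cref{cor:asy,thm:cons}, since the sequential procedure is built entirely from these component tests. The key observation is that under the stated hypotheses, each level-$m$ test for $m < m_0$ faces a false null (since $\H{0}$ fails at those levels), while the level-$m_0$ test faces a true null (because \Cref{as:cluster} holds when the fine level is $m_0$), and every level $m > m_0$ also faces a true null by the nesting structure. I would first set up notation letting $R_m$ denote the event that the level-$m$ test rejects, so that by \Cref{alg:seq} or \eqref{mhat} we have $\{\hat m = m\} = R_0 \cap \cdots \cap R_{m-1} \cap R_m^c$ for $m \le p-1$, and $\{\hat m = p\} = R_0 \cap \cdots \cap R_{p-1}$.

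The first main step is to handle the ``too fine'' direction, namely $P(\hat m \le m_0 - 1) \to 0$. Choosing a level too fine requires the procedure to stop at some $m < m_0$, which means the level-$m$ test fails to reject a false null. By \Cref{thm:cons}, each such test is consistent, so $P(R_m^c) \to 0$ for every fixed $m < m_0$; since there are finitely many such levels, a union bound gives $P(\hat m \le m_0 - 1) \le \sum_{m=0}^{m_0-1} P(R_m^c) \to 0$. I would verify here that the hypotheses of \Cref{thm:cons} are met at each level $m < m_0$: \Cref{as:clusteralt} is implied by \Cref{as:cluster} holding at level $m_0 > m$ (coarse clustering at the coarser level $m_0$ implies the one-way independence at level $m$ that \Cref{as:clusteralt} requires relative to the alternative), and the size and moment assumptions carry over directly from the theorem's hypotheses.

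The second main step is the ``correct level'' and ``too coarse'' behavior. Conditional on having reached level $m_0$ — which happens with probability tending to one by the previous step, since rejecting all of $R_0, \ldots, R_{m_0-1}$ has probability approaching one — the level-$m_0$ test confronts a true null. By \Cref{cor:asy}, this test has asymptotic rejection probability exactly $\alpha$. Hence $P(\hat m = m_0) = P(R_0 \cap \cdots \cap R_{m_0-1} \cap R_{m_0}^c)$, which I would show converges to $1 \cdot (1-\alpha) = 1-\alpha$ by combining the consistency of the earlier tests with the asymptotic size of the $m_0$ test, and correspondingly $P(\hat m \ge m_0+1) \to \alpha$. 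The edge case $m_0 = p$ in part (ii) is simpler: there is no further test to perform after reaching level $p$, so reaching $m_0 = p$ forces $\hat m = p$, giving $P(\hat m = m_0) \to 1$.

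The main obstacle will be the joint-probability argument establishing $P(\hat m = m_0) \to 1 - \alpha$ rather than merely bounding it. This requires that the events $R_0 \cap \cdots \cap R_{m_0-1}$ and $R_{m_0}^c$ combine correctly in the limit — specifically, that the asymptotic rejection probability of the $m_0$ test is unaffected by conditioning on the earlier tests all rejecting. Because $P(R_0 \cap \cdots \cap R_{m_0-1}) \to 1$, I would argue that for any event $A$, $P(A \cap R_0 \cap \cdots \cap R_{m_0-1})$ and $P(A)$ differ by at most $P((R_0 \cap \cdots \cap R_{m_0-1})^c) \to 0$, so the conditioning is asymptotically innocuous; applying this with $A = R_{m_0}^c$ and using \Cref{cor:asy} then delivers the limit $1-\alpha$ cleanly. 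This is precisely the formalization of the intersection-union principle of \citet{BergerSinclair_1984} in our sequential setting, and it confirms that there is no accumulation of size across the multiple tests.
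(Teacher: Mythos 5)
Your proposal is correct and follows essentially the same route as the paper's proof: consistency of the component tests (\Cref{thm:cons}) yields $P(\hat m \leq m_0-1)\to 0$, and the asymptotic size of the level-$m_0$ test (\Cref{cor:asy}) combined with the observation that intersecting with an event whose probability tends to one is asymptotically innocuous delivers $P(\hat m = m_0)\to 1-\alpha$ and the edge case $m_0=p$. The only substantive difference is that the paper also covers the bootstrap versions of the tests by invoking \Cref{cor:boot}(i)--(ii), whereas you cite only \Cref{cor:asy} and \Cref{thm:cons}; your union bound and the inequality $|P(A\cap B)-P(A)|\leq P(B^c)$ merely make explicit what the paper leaves implicit.
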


The results in \Cref{thm:seq} show that $\hat m$ defined in 
\Cref{alg:seq} or \eqref{mhat} is asymptotically correct with
probability converging to $1-\alpha$ when $m_0 \leq p-1$ and with
probability converging to~1 when $m_0 =p$.  It is worth emphasizing
that the sequential procedure will never ``under-estimate'' the
clustering level, at least asymptotically, because $\hat m < m_0$ with
probability converging to~0.

\subsection{Fixed-$G$ Asymptotic Theory}
\label{subsec:fixedG}



In the literature on cluster-robust inference, a few authors have 
considered an alternative asymptotic framework, referred to as
fixed-$G$ asymptotics, in which the number of clusters is fixed as
$N\to\infty$ while cluster sizes diverge; key early papers are
\citet{Ibragimov_2010} and \citet*{BCH_2011}. However, fixed-$G$ 
asymptotics are proven under the very strong assumption that a central 
limit theorem applies to the normalized scores for each cluster. This 
assumption seriously limits the amount of intra-cluster dependence. For 
example, it rules out common models such as many types of random-effects
and factor models. See \citet{MNW-guide} for a detailed discussion.


Nonetheless, we now briefly consider an asymptotic framework in which 
the number of coarse clusters, $G$, is fixed, but there are many fine 
clusters within each coarse cluster, i.e.\ $M_g \to\infty$ for all~$g$.
For simplicity, we consider the scalar case with $k=1$. Let $\sigma_g^2 =
\Var (\sum_{h=1}^{M_g}s_{gh})=\sum_{h=1}^{M_g}\sigma_{gh}^2$ (under
the null) and define the weights $w_g^2 = \lim_{M_g\to\infty} 
\sigma_g^2 / \Var (\theta)^{1/2}$, where $\Var (\theta )$ is given in 
\eqref{tauvar}. Then suppose, for all $g$, that 
(i)~$\sigma_g^{-1}\sum_{h=1}^{M_g}s_{gh}\dto\N (0,1)$,
(ii)~$\sigma_g^{-1}\sum_{h=1}^{M_g}s_{gh}^2\Pto 1$, and
(iii)~$w_g^2  \in [0,\infty )$.
The high-level condition~(i) is typical of the fixed-$G$ literature and 
imposes very strong limitations on the amount of intra-cluster 
dependence that is allowed. Condition~(ii) is a homogeneity assumption, 
and condition~(iii) ensures that one cluster does not dominate the sum 
in the limit. Under the null hypothesis and these conditions, it can be 
proven that
\begin{equation}
\label{chi square}
\frac{\theta}{\sqrt{\Var (\theta)}}\dto
\sum_{g=1}^G w_g^2 (\chi_{1,g}^2 -1),
\end{equation}
where $\chi_{1,g}^2$ for $g=1,\ldots ,G$ denote independent $\chi_1^2$ 
random variables. Under suitable additional regularity conditions, we 
conjecture that $\tau_\sigma = \hat\theta / \sqrt{\widehat\Var (\hat\theta )}$ 
has the same asymptotic distribution as in \eqref{chi square}.

The limiting distribution in \eqref{chi square} is a weighted sum of 
independent $\chi_1^2$ random variables. Because the weights $w_g^2$
depend on unknown parameters, the distribution is non-pivotal and
hence cannot be used for inference. Under the extreme homogeneity
condition that the $w_g^2$ are the same for all $g$, the distribution
simplifies to $(\chi_G^2-G)/\sqrt{2G}$, which is a centered and
normalized $\chi_G^2$. This distribution is pivotal and could be used
for inference, although the conditions under which it is derived are 
extraordinarily strong.

Continuing with this type of fixed-$G$ asymptotic argument, we could 
instead assume that $N_{gh}\to\infty$ for all $g,h$, while the $M_g$ 
and $G$ are fixed. That is, the number of observations within each 
fine cluster diverges, but there are only a fixed number of fine and 
coarse clusters. This setup is quite similar to the previous one. We 
conjecture that the asymptotic distribution would again be a weighted 
sum of $\chi^2_1$ random variables similar to the one in \eqref{chi
square}, but the summation would extend over $\sum_{g=1}^G M_g =
G_{\rm f}$ elements.

In either case, if the weights $w_g^2$ are not too heterogeneous, the 
fixed-$G$ limiting distributions could be well approximated by a 
standard normal distribution, at least when the number of clusters is 
not very small. In the setup with $M_g \to\infty$, this would be the 
number of coarse clusters,~$G$. In the setup with $N_{gh}\to\infty$,
it would be the number of fine clusters,~$G_{\rm f}$. Thus, in the
end, the normal limit theory obtained under large\tkk-$G$ asymptotics
in \Cref{thm:asy} and \Cref{cor:asy} may also provide a good
approximation under fixed-$G$ asymptotics. A full analysis of
fixed-$G$ asymptotic theory for our model and test statistics would be
interesting, but it is beyond the scope of this paper and is
consequently left for future work.

\section{Dimension Reduction by Partialing Out}
\label{sec:partial}

As discussed in \Cref{rem:partial}, it is commonly the case in
empirical work that the number of regressors is very large and that
most of the regressors are not of primary interest. Comparing
large-dimensional CRVE matrices by the methods in \Cref{sec:tests} is
impractical. Fortunately, it is easy to solve this problem by
partialing out the regressors that are not of primary interest prior
to performing our tests.

Suppose the full set of regressors is partitioned as $\biX =[\biX_1,\; 
\biX_2]$, where $\biX_1$ denotes the $N\times k_1$ matrix of the 
regressors of interest and $\biX_2$ denotes the $N \times k_2$ matrix 
of other regressors, with $k = k_1 + k_2$. Similarly, partition 
$\bbeta^\top =[\bbeta_1^\top\tn, \; \bbeta_2^\top ]$, where the 
coefficients corresponding to the regressors of interest are in the 
$k_1 \times 1$ parameter vector $\bbeta_1$ and the rest are collected in
$\bbeta_2$. If the coefficient vector of interest is actually a linear 
combination of the elements of $\bbeta_1$ and $\bbeta_2$, we can 
redefine $\biX$ as a nonsingular affine transformation of the original 
$\biX$ matrix, so that $\bbeta_1$ has the desired interpretation.

We regress each column of $\biX_1$ on $\biX_2$ and define $\biZ$ as 
the matrix of residuals from those $k_1$ regressions. The model 
\eqref{model} can then be rewritten as
\begin{equation}
\label{newmodel}
\biy =  \biZ\bbeta_1 + \biX_2\bdelta + \biu, \qquad
\biZ = \biM_{\biX_2}\biX_1,
\end{equation}
where $\biM_{\biX_2}=\bfI_N -\biX_2
(\biX_2^\top\biX_2)^{-1}\biX_2^\top$ is the orthogonal projection
matrix that projects off (or partials out) $\biX_2$. The regressor
matrices $\biZ$ and $\biX_2$ are orthogonal, and the models
\eqref{model} and \eqref{newmodel} have exactly the same explanatory
power and the same disturbances, $\biu$. The coefficient $\bbeta_1$ in
\eqref{newmodel} is identical to the one defined in the previous
paragraph, but the coefficient $\bdelta$ is different 
from~$\bbeta_2$.

Using the orthogonality between $\biZ$ and $\biX_2$, the 
OLS estimate of $\bbeta_1$ is, c.f.\ \eqref{betahat},
\begin{equation}
\label{beta1hat}
\hat\bbeta_1 = (\biZ^\top\biZ)^{-1} \biZ^\top\biy
=  \bbeta_{1,0} +  (\biZ^\top\biZ)^{-1}\sum_{g=1}^G \biZ_g^\top\biu_g,
\end{equation}
where $\bbeta_{1,0}$ is the true value of $\bbeta_1$.
The relationship between $\biZ$ and $\biX$ can be written as
\begin{equation}
\label{Z and Q}
\biZ = \biX \biQ \quad\textrm{with}\quad
\biQ =[ \bfI_{k_1} , \; -\biX_1^\top\biX_2 (\biX_2^\top\biX_2 )^{-1}]^\top .
\end{equation}
Therefore, the score for $\bbeta_1$ is $\biZ_g^\top \biu_g = \biQ^\top 
\biX_g^\top \biu_g = \biQ^\top \bis_g$. Thus, from \eqref{beta1hat} and 
\eqref{Z and Q}, we obtain the following sandwich formula, c.f.\ 
\eqref{covbeta},
\begin{equation}
\label{covbeta1}
\widehat\var (\hat\bbeta_1 ) = 
(\biZ^\top\biZ)^{-1} \biQ^\top \hat\bSigma\tk \biQ (\biZ^\top\biZ)^{-1}\tn .
\end{equation}
Under \Cref{as:X}, $\biQ \Pto [ \bfI_{k_1} , \; -\bXi_{12}\bXi_{22}^{-1}
]^\top = \biA$, say, so that the middle matrix in \eqref{covbeta1} is 
clearly an estimator of $\biA^\top \bSigma \biA$.

The matrix $\biQ$ in \eqref{Z and Q} and its limit $\biA$ can be
viewed as mechanisms for dimension reduction. They transform the
problem from one involving the $k\times k$ matrix $\bSigma$ to one
involving the $k_1\times k_1$ matrix $\biA^\top \bSigma \biA$. The
latter is the variance of $\biA^\top \biX^\top \biu$, and it depends
on the clustering structure in the same way as $\bSigma$. For the
model \eqref{newmodel}, we consequently replace the hypotheses in
\eqref{hypotheses} with 
\begin{equation}
\label{newhypotheses}
\H{0}\!: \limN \tk (\biA^\top \bSigma_{\rm f}\biA ) \tk (\biA^\top 
\bSigma_{\rm c} \biA )^{-1} = \bfI 
\quad \textrm{and} \quad
\H{1}\!: \limN \tk ( \biA^\top \bSigma_{\rm f}\biA ) \tk (\biA^\top 
\bSigma_{\rm c} \biA )^{-1} \neq \bfI.
\end{equation}
Furthermore, from \eqref{Z and Q}, we see that we can use the same 
algebra for the model in \eqref{newmodel} as for the model in 
\eqref{model} to define the test statistics, i.e.\ \eqref{Sigmac}, 
\eqref{Sigmaf}, and so on, but now with empirical scores $\biQ^\top 
\hat\bis_g$ and $\biQ^\top \hat\bis_{gh}$ instead of $\hat\bis_g$ and 
$\hat\bis_{gh}$, respectively. This also applies to the bootstrap 
implementation in \Cref{alg:BS}. Of course, degrees-of-freedom 
corrections like the factor $m_c$ in \eqref{Sighat} need to reflect the 
total number of estimated coefficients.

Since very few regression models in economics contain just one
regressor, the $\tau_\sigma$ test will almost always involve
partialing out. It seems likely that the $\tau_\Sigma$ test will also
involve partialing out in the vast majority of cases, so that the
dimension of the vector $\hat\btheta$ upon which the $\tau_\Sigma$ 
test is based will be $k_1$ rather than~$k$.

\begin{remark}
\label{rem:empscores}
The empirical scores $\biQ^\top\hat s_{gh}$ and
$\biQ^\top\hat\bis_{gh}$ depend on the matrix $\biZ =
\biM_{\biX_2}\biX_1$, which is the residual matrix from regressing
$\biX_1$ on $\biX_2$. Therefore, different choices for $\biX_1$ will
yield different empirical scores, and hence different test statistics;
see \Cref{rem:Moulton}. This is also reflected in the hypotheses in
\eqref{newhypotheses}, where different choices for $\biX_1$ will yield
a different $\biA$ matrix and hence different null and alternative
hypotheses.
\end{remark}

\begin{remark}
\Cref{thm:asy,thm:cons,thm:boot,thm:seq} continue to hold with the new
definitions given in this section, with $\bbeta_1$ replacing $\bbeta$,
$\biZ$ replacing $\biX$, and $k_1$ replacing $k$. Because the matrix
$\biQ \Pto \biA$ under \Cref{as:X}, it acts only as a fixed constant
in all asymptotic arguments; that is, $\biQ^\top\bis_{gh} = \biA^\top
\bis_{gh}(1+o_P(1))$. Thus, the same proofs apply with $\bis_{gh}$ 
replaced by $\biA^\top\bis_{gh}$.
\end{remark}

\begin{remark}
\label{rem:sufficient}
Careful inspection of the proofs shows that, in the setup of this section, 
we can replace $\bSigma_{gh}$ with $\biA^\top\bSigma_{gh}\biA$ in 
\Cref{as:size}. This could be attractive in some cases. Suppose, for 
example, that $\biX_1$ and $\biX_2$ are (asymptotically) orthogonal, 
such that $\biA^\top\bSigma\biA$ is equal to the diagonal block of 
$\bSigma$ corresponding to $\biX_1^\top\biu$. Suppose also that $\biX_1$
and $\biu$ are both finely clustered, but the $\biX_2$ are independent. 
Then $\biA^\top\bSigma_{gh}\biA$ satisfies the condition in 
\Cref{rem:tradeoff}, while $\bSigma_{gh}$ only satisfies the 
corresponding condition in \Cref{as:eigen}, and hence using 
$\biA^\top\bSigma_{gh}\biA$ in \Cref{as:size} would lead to a weaker 
condition.
\end{remark}

\section{Simulation Experiments}
\label{sec:simulations}

Most of the papers cited in the second paragraph of \Cref{sec:intro}
employ simulation experiments to study the finite\tkk-sample
properties of methods for cluster-robust inference. To our know\-ledge,
all of these papers use some sort of random-effects, or
single\tkk-factor, model to generate the data. The key feature of
these models is that all of the intra-cluster correlation for every
cluster $g$ arises from a single random variable, say $\xi_g$, which
affects every observation within that cluster \emph{equally}. This
yields disturbances that are equi-correlated within each cluster.

Although this type of DGP is convenient to work with and can readily
generate any desired level of intra-cluster correlation, it cannot be
used when a regression model has cluster fixed effects. Because the
fixed effects completely explain the $\xi_g$, the residuals are always
uncorrelated. Thus, for models with cluster fixed effects, it is
always valid to use heteroskedasticity-robust (HR) standard errors
whenever the intra-cluster correlation of the disturbances arises
solely from a random-effects model. In such cases, the null hypothesis
of our tests is satisfied, and they will have no (asymptotic) power.
Of course, this is the desired outcome both in the statistical sense,
because the null is satisfied, and in the practical sense, because
cluster-robust (CR) standard errors are not needed.

In practice, HR and CR standard errors often differ greatly in models 
with cluster fixed effects; see, for example, \citet{BDM_2004}, 
\citet{JGM-CJE}, and \Cref{sec:example}. Therefore, whatever processes
are generating intra-cluster correlation in real-world data must be
more complicated than simple random-effects models. Since we wish to
investigate models with cluster fixed effects, we need to employ a DGP
for which cluster fixed effects do not remove all of the 
intra-cluster correlation. To this end, we generate both the
regressors and the disturbances in our experiments using factor models
of the form
\begin{equation}
\label{facDGP}
\begin{aligned}
z_{gi} &= \rho^{1/2}\tk\xi^1_g + (1-\rho)^{1/2}\tkk\zeta_{gi}
  \;\;\mbox{ if $i$ is odd}\cr
z_{gi} &= \rho^{1/2}\tk\xi^2_g + (1-\rho)^{1/2}\tkk\zeta_{gi}
  \;\;\mbox{ if $i$ is even.}\cr
\end{aligned}
\end{equation}
Here $\xi^1_g$ and $\xi^2_g$ are random effects, distributed as
standard normal, which apply respectively to the odd-numbered and
even-numbered observations within the \th{g} cluster. The $\zeta_{gi}$
are also distributed as standard normal. Under the DGP \eqref{facDGP},
the $z_{gi}$ have variance one, and the intra-cluster correlation of
the odd (or even) observations is $\rho \ge 0$.

The DGP \eqref{facDGP} can be interpreted in a variety of ways,
depending on the nature of the data. The idea is that there are two
types of observations within each cluster, and all the intra-cluster
correlation is within each type. For example, with clustering at the
geographical level, there might be two sub\tkk-regions. With
clustering at the industry level, there might be two types of firm.
The key assumption is that the researcher knows which cluster an
observation belongs to, but not which type. Including cluster fixed
effects explains some of the intra-cluster correlation by estimating
an average of $\xi^1_g$ and $\xi^2_g$ for each cluster, but it does
not explain all of it. Thus cluster-robust inference is still needed,
and our tests should still have power.

In practice, of course, there might be more than than two types within
each cluster, and the numbers of observations in each would almost
certainly not be the same. It would be easy to make the DGP
\eqref{facDGP} more complicated. However, our objective is not to
mimic any actual dataset, but simply to generate data in a way that
allows cluster fixed effects to be combined with cluster-robust
standard errors.

The DGP \eqref{facDGP} makes no reference to fine and coarse clusters.
It could be used to generate either finely or coarsely clustered data.
The regressors $\biX_1$ (that is, the ones whose coefficients are of
interest; see \Cref{sec:partial}) are generated using \eqref{facDGP},
and they are always coarsely clustered. This ensures that, if the
disturbances are either independent ($\rho=0$), finely clustered, or
coarsely clustered, the scores are also independent, finely clustered,
or coarsely clustered, respectively.

In all experiments, each of the regressors in $\biX_1$ is generated
independently. This implies that there is no correlation among the
coefficient estimates. It might seem that the extent of any such
correlation would be important for the properties of the $\tau_\Sigma$
tests. However, that is not the case. We find numerically that the
$\tau_\Sigma$ statistic is invariant to any transformation of $\biX_1$
that does not change the subspace spanned by its columns. Thus there
is no loss of generality in generating the columns of $\biX_1$
independently.

\subsection{Performance under the Null Hypothesis}
\label{subsec:null}

Our first set of experiments is designed to investigate the rejection
frequencies of asymptotic and bootstrap score\tkk-variance tests under
the null hypothesis. The model is
\begin{equation}
\label{simmod}
y_{ghi} = \sum_{\ell=1}^{k_1} \beta_\ell X^\ell_{ghi} + \biX^2_{gh}\bdelta 
+ u_{ghi},
\end{equation}
where the regressors $X^\ell_{ghi}$ are generated independently across
$\ell$ by \eqref{facDGP} at the coarse level with $\rho=0.5$. The
additional regressors in $\biX^2_{gh}$ are either a constant term or a
set of cluster fixed effects. When testing fine against coarse
clustering, the fixed effects are at the fine level, and the
disturbances are finely clustered with $\rho=0.1$. When testing
independence against (coarse) clustering, the fixed effects are at the
coarse level, and the disturbances are independent. The number of
coarse clusters, which in this section we denote by $G_{\rm c}$, is
allowed to vary. In the first set of experiments, there are always
four fine clusters in each coarse cluster, so that $G_{\rm f} =
4G_{\rm c}$.

\begin{figure}[tb]
\begin{center}
\caption{Rejection frequencies for $\tau_\Sigma$ tests at 
0.05 level, $G_{\rm c}$ varying}
\label{fig:1}
\includegraphics[width=\textwidth]{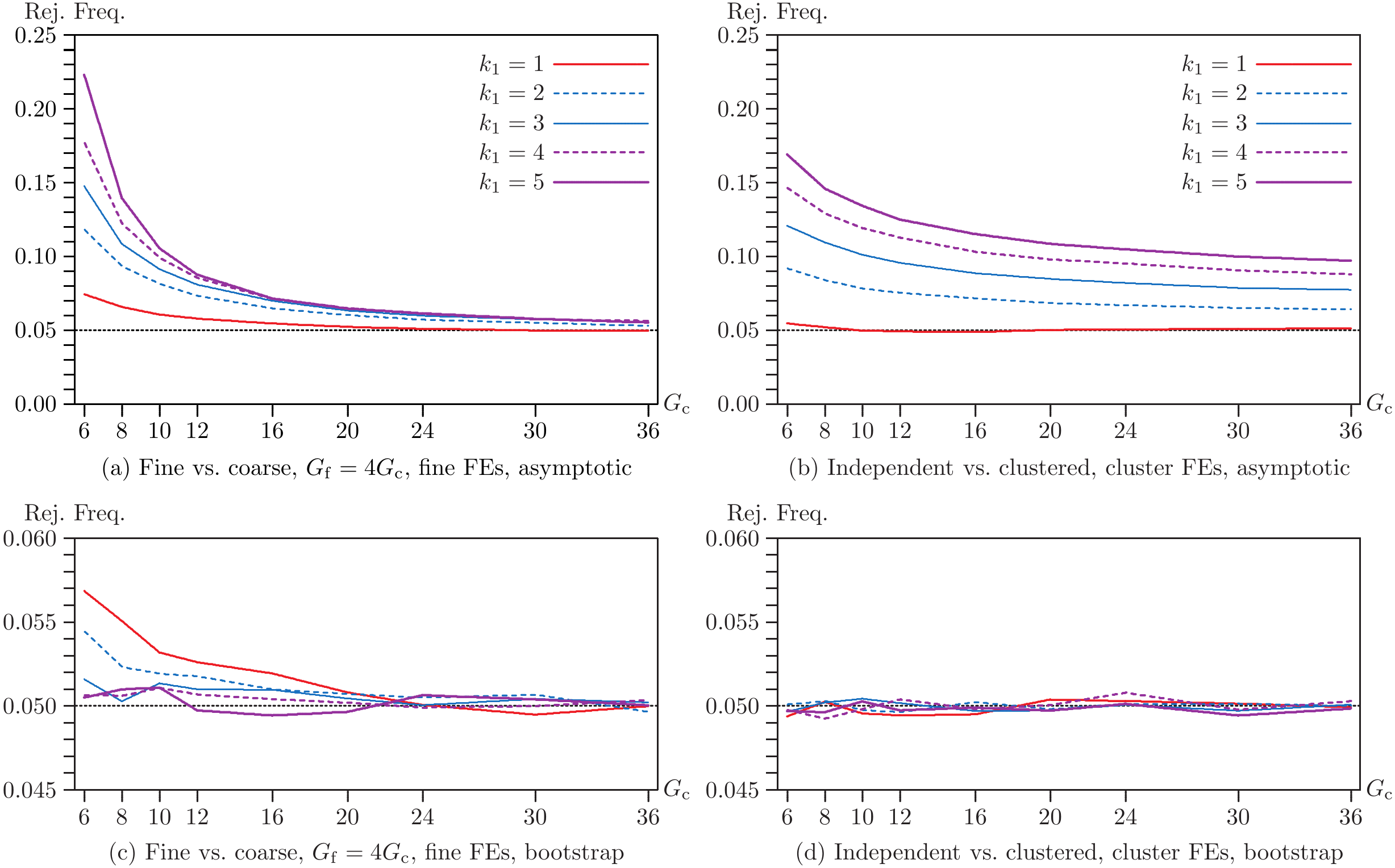}
\end{center}
{\footnotesize
\textbf{Notes:} The regressors are generated by \eqref{facDGP} with
$\rho=0.5$ and $1 \leq k_1 \leq 5$. The regressand is generated 
by~\eqref{simmod}. The disturbances are independent standard normals in
Panels~(b) and~(d) and finely clustered with $\rho=0.1$ in Panels~(a)
and~(c). $G_{\rm c}$ denotes the number of coarse clusters. Each
coarse cluster contains 400 observations, so that $N=400G_{\rm c}$. In 
Panels~(a) and~(c), there are $G_{\rm f} = 4G_{\rm c}$ fine clusters, 
each containing 100 observations. Bootstrap tests employ $B=399$. 
Panel~(c) uses the wild cluster bootstrap, and Panel~(d) uses the 
ordinary wild bootstrap. There are 400,000 replications.}
\end{figure}

\Cref{fig:1} plots rejection frequencies at the 0.05 level for
$\tau_\Sigma$ tests against $G_{\rm c}$, which varies from 6 to 36. We
started at $G_{\rm c}=6$ to avoid singularities when $k_1=5$ and
stopped at $G_{\rm c}=36$ because the results were hardly changing at
that point. The values $k_1=1,\ldots,5$ imply that the number of
degrees of freedom for the tests is 1, 3, 6, 10, or~15. Panels~(a)
and~(c) concern tests of fine clustering against coarse clustering,
and panels~(b) and~(d) concern tests of independence against
clustering. The top two panels report rejection frequencies for
asymptotic tests at the 0.05 level, and the bottom two report
comparable ones for bootstrap tests. Notice that the vertical axes for
the asymptotic tests are much longer than the ones for the bootstrap
tests, because the latter work very much better.

One striking feature of \Cref{fig:1} is that, for the asymptotic
tests, over-rejection increases sharply with~$k_1$. This should not
have been a surprise in view of the fact that, like the information
matrix test \citep{White_1982}, the $\tau_\Sigma$ test has degrees of
freedom that are $O(k_1^2)$. \citet{DM_1992} found a similar tendency
for the rejection rate of the information matrix test (in particular,
the popular $N\tn R^2$ form of it) to increase rapidly with the number
of coefficients being tested.

When $G_{\rm c}$ is small, asymptotic tests of fine against coarse
clustering, in Panel~(a), over-reject more severely than tests of
independence, in Panel~(b). When $k_1=1$, there is almost no
over-rejection for the tests of independence in Panel~(b). For
$k_1\ge2$, there is also more over-rejection in Panel~(a) than in
Panel~(b) when $G_{\rm c}=6$, but the over-rejection diminishes much
more rapidly as $G_{\rm c}$ increases in Panel~(a) than in Panel~(b).

The bootstrap versions of the tests perform very much better than the
asymptotic ones. There is slight over-rejection in Panel~(c) for
smaller values of $G_{\rm c}$, which is really only noticeable for
$k_1=1$ and $k_1=2$. In Panel~(d), the bootstrap tests of independence
work perfectly, except for experimental errors.

The bootstrap tests can be computationally demanding when the sample
size is large, particularly for larger values of $k_1$. This is
especially true for tests where the null hypothesis is no clustering,
because the calculations in \eqref{Sigmac}, \eqref{Sigmaf}, and
\eqref{var2sided} involve score vectors of which the size is the
number of clusters under the null hypothesis. This number is $N$ for
tests of no clustering but only $G_{\rm f}$ for tests of fine
clustering.

\begin{figure}[tb]
\begin{center}
\caption{Rejection frequencies for $\tau_\Sigma$ tests at 
0.05 level, $G_{\rm f}$ or $N_g$ varying}
\label{fig:2}
\includegraphics[width=\textwidth]{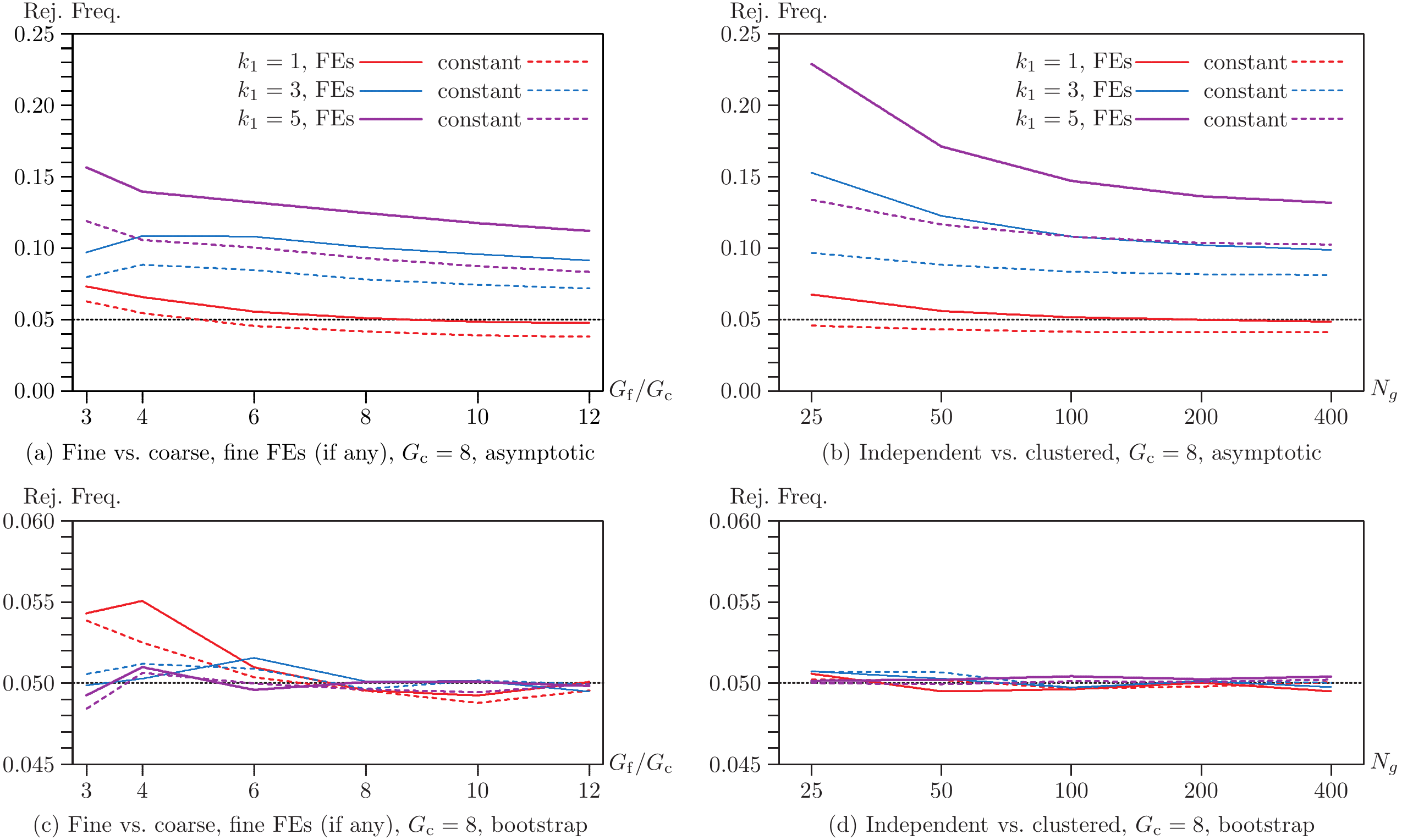}
\end{center}
{\footnotesize
\textbf{Notes:}
The regressors are generated as in \Cref{fig:1}, but only for $k_1=1$,
3, and~5. There are $G_{\rm c}=8$ coarse clusters. In Panels~(a)
and~(c), there are between 3 and 12 fine clusters per coarse cluster,
each with 100 observations. In Panels~(b) and~(d), there are just coarse
clusters, with between 25 and 400 observations per coarse cluster.
Bootstrap tests employ $B=399$. Panel~(c) uses the wild cluster
bootstrap, and Panel~(d) uses the ordinary wild bootstrap. There are
400,000 replications.}
\end{figure}

In \Cref{fig:2}, we hold the number of coarse clusters constant at
$G_{\rm c}=8$ and allow either the number of fine clusters per coarse
cluster or the $N_g$ to vary. Results are shown for two specifications
of \eqref{simmod}. For the first of these, there are fine fixed
effects when the null is fine clustering and cluster fixed effects
when the null is independence, as in \Cref{fig:1}. For the second,
there is just a constant term. To make the figure readable, results
are shown only for $k_1=1$, 3, and~5.

In Panels~(a) and~(c), the horizontal axis shows the number of fine
clusters per coarse cluster, which varies between 3 and~12, so that
the total number of fine clusters varies between 24 and~96. The
rejection frequencies for asymptotic tests of fine against coarse
clustering drop somewhat as $G_{\rm f}/G_{\rm c}$ increases. For the
model with fixed effects, the asymptotic tests for $k_1=1$ work almost
perfectly for $G_{\rm f}/G_{\rm c} \ge 8$, and all the bootstrap tests
work almost perfectly for $G_{\rm f}/G_{\rm c} \ge 6$. The asymptotic
tests always reject less often for the model with a constant term than
for the model with fixed effects. For $k_1=1$, the former actually
under-reject modestly for larger values of $G_{\rm f}/G_{\rm c}$.

In Panels~(b) and~(d), the horizontal axis shows the number of
observations per coarse cluster, which varies between 25 and 400, on a 
log scale. It is evident that the asymptotic tests of independence 
perform better as the clusters become larger, although the curves are 
pretty flat at $N_g=400$. The asymptotic tests with just a constant
over-reject much less than the tests with fixed effects. When $k_1=1$,
these tests under-reject for all values of~$N_g$. All the bootstrap
tests work essentially perfectly.

\begin{figure}[tb]
\begin{center}
\caption{Rejection frequencies for $\tau_\sigma$ tests at 
0.05 level, $G_{\rm c}$ varying}
\label{fig:3}
\includegraphics[width=\textwidth]{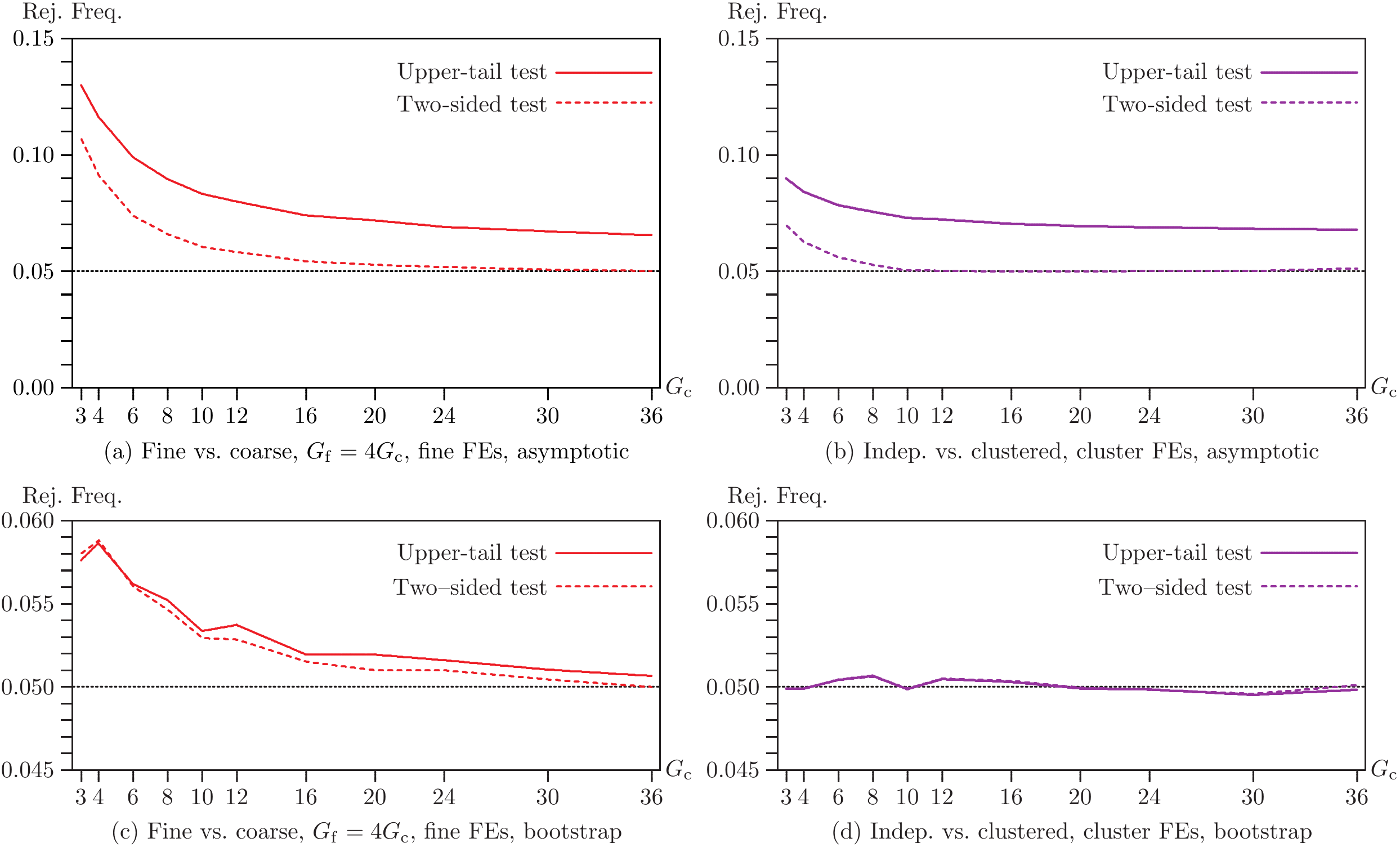}
\end{center}
{\footnotesize
\textbf{Notes:} There is one regressor, which is generated by
\eqref{facDGP} with $\rho=0.5$. In Panels~(a) and~(c), the disturbances
are finely clustered with~$\rho=0.1$ and $G_{\rm f} = 4G_{\rm c}$ fine 
clusters, each with 100 observations. In Panels~(b) and~(d), they are
independent standard normals. $G_{\rm c}$ denotes the number of coarse
clusters, each of which contains 400 observations, so that
$N=400G_{\rm c}$. Bootstrap tests employ $B=399$. Panel~(c) uses the 
wild cluster bootstrap, and Panel~(d) uses the ordinary wild bootstrap. 
There are 400,000 replications.}
\end{figure}

\Cref{fig:3} shows rejection frequencies for both upper-tail and
two\tkk-sided $\tau_\sigma$ tests. The experimental design is
essentially the same as for \Cref{fig:1}, except that, since $k_1=1$,
results for $G_{\rm c}=3$ and $G_{\rm c}=4$ are included. The
asymptotic upper-tail tests over-reject noticeably more often than the
asymptotic two\tkk-sided tests. In contrast, the bootstrap upper-tail 
and two\tkk-sided tests perform identically (and extremely well). Thus
it seems to be valuable to bootstrap both types of $\tau_\sigma$ test,
but particularly important to bootstrap upper-tail tests.

\subsection{The Power of Bootstrap Tests}
\label{subsec:alt}

In the next set of experiments, we turn our attention to power,
focusing on the special case of the $\tau_\sigma$ test for a single
coefficient. The data are generated by \eqref{simmod}, with one
regressor and coarse fixed effects. As usual, the regressor is
generated by \eqref{facDGP} with coarse clustering and $\rho=0.5$. The
disturbances are generated by the same model, with $\rho$ varying
between 0.00 and~0.10. We report results only for bootstrap tests with
$B=999$. Using 999 instead of 399 reduces the, already quite small,
power loss caused by using a finite number of bootstrap samples
\citep{DM_2000}.

\begin{figure}[tb]
\begin{center}
\caption{Power of bootstrap $\tau_\sigma$ tests at 0.05 level when there
is coarse clustering}
\label{fig:4}
\includegraphics[width=\textwidth]{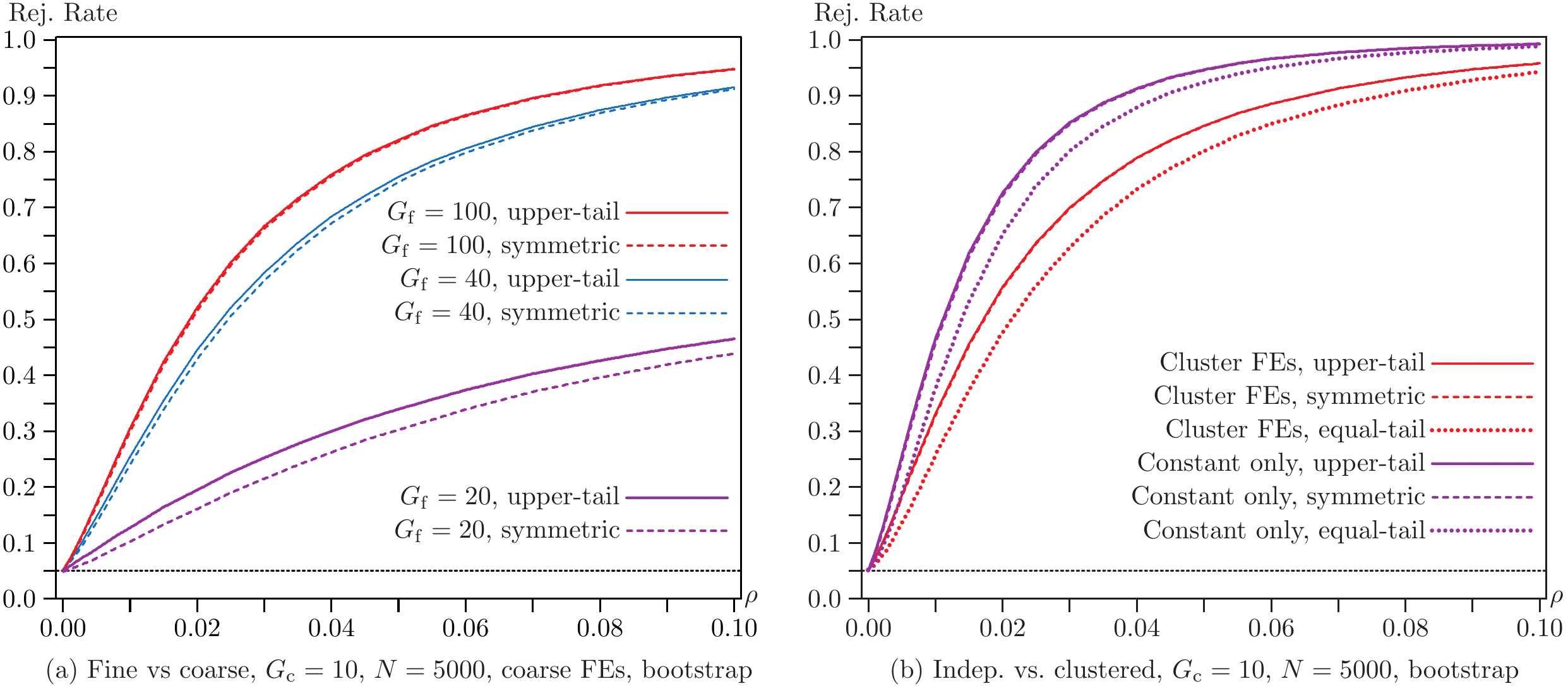}
\end{center}
{\footnotesize
\textbf{Notes:} The data are generated by \eqref{simmod} with coarse
fixed effects and coarse (or no) clustering. There are 5000
observations, 10 coarse clusters, 20, 40, or 100 fine clusters,
400,000 replications, and 999 bootstraps.}
\end{figure}

\Cref{fig:4} shows the power of either two or three types of bootstrap
$\tau_\sigma$ tests against coarse clustering as a function of the
value of $\rho$ for the disturbances. The three types are upper-tail,
symmetric, and equal-tail. As can be seen in both panels, all tests
reject extremely close to 5\% of the time when the null hypothesis is
true. In Panel~(a), the null hypothesis is fine clustering for three
different values of~$G_{\rm f}$. Power increases greatly when the
number of fine clusters goes from 20 to~40. It increases further, but
much more modestly, when $G_{\rm f}$ goes from 40 to~100. The
upper-tail tests are more powerful than the symmetric ones, but only
slightly more when $G_{\rm f}=100$. To avoid making the figure
unreadable, Panel~(a) omits the equal-tail tests, which have much less
power than the other two types of tests.

In Panel~(b), the null hypothesis is independence, and the alternative
is clustering with 10 (coarse) clusters. There are three bootstrap
tests for a model with just a constant term and three tests for a
model with cluster fixed effects. As expected, the tests are more
powerful when there is just a constant term, since the fixed effects
explain some of the intra-cluster correlation. For each set of tests,
the upper-tail test is slightly more powerful than the symmetric test,
which in turn is substantially more powerful than the equal-tail test.

The results in Panel~(b) illustrate the fact that it generally makes
no sense to use equal-tail bootstrap SV tests. These tests are
designed to reject equally often in each tail under the null
hypothesis. Since the mean of the test statistics under the null is
positive in our experiments, the equal-tail test implicitly uses
asymmetric critical values, with the positive one being larger in
absolute value than the negative one. This reduces its power against
$\sigma^2_{\rm c} > \sigma^2_{\rm f}$, which is precisely the
alternative we want SV tests to have power against.

Up to this point, all the simulations have involved equal-sized
clusters. There are many ways in which coarse-cluster sizes,
fine-cluster sizes, and the numbers of fine clusters per coarse
cluster could vary. We next allow cluster sizes to vary for
one level of clustering. \Cref{fig:5} considers $\tau_\sigma$ tests of
no clustering and plots rejection frequencies against a measure of
cluster size variation. The $N$ observations are allocated among $G$
clusters using the equation
\begin{equation}
N_g = \left[\frac{N \exp(\delta g/G)}{\sum_{j=1}^G \exp(\delta
j/G)}\right]
\enspace
\hbox{for } g=1,\ldots,G-1,
\label{gamma-eq}
\end{equation}
where $\delta\ge0$, $[\cdot]$ denotes the integer part of its
argument, and $N_G = N - \sum_{g=1}^{G-1} N_g$. This scheme has been
used in \citet{MW-JAE}, \citet{DMN_2019}, and several other papers. In
the experiments of \Cref{fig:5}, $G=10$ and $N=1000$. When $\delta=0$,
$N_g=100$ for all~$g$. For $\delta=1$, the $N_g$ range from 61 to 155;
for $\delta=2$, from 34 to 213; and for $\delta=4$, from 9 to~340.
There is one regressor and 10 cluster fixed effects.

In Panel~(a) of \Cref{fig:5}, the null hypothesis of no clustering is
true. The upper-tail asymptotic test over-rejects noticeably for small
values of~$\delta$, but rejection frequencies decline as $\delta$
increases, and they are less than 0.05 for $\delta=4$. In contrast, the
upper-tail bootstrap test rejects almost exactly 5\% of the time for
all values of~$\delta$. In Panel~(b), the null hypothesis is false.
Both tests have substantial power when $\delta$ is small, but it falls
as $\delta$ increases. This makes sense, because the total number of
off-diagonal elements in all the clusters increases with~$\delta$,
causing the number of terms in the variance \eqref{tauvar} to
increase. The asymptotic test has noticeably more power than the
bootstrap test for small values of~$\delta$, but it has less power for
the largest values, where it under-rejects under the null. The power
differences almost certainly just reflect the size distortions of the
asymptotic tests.

The results in \Cref{fig:5} suggest that the finite\tkk-sample
performance of SV tests inevitably depends on the pattern of
cluster sizes, although probably much less for bootstrap tests than
for asymptotic ones.

\begin{figure}[tb]
\begin{center}
\caption{Cluster size variation and the performance of upper-tail
$\tau_\sigma$ tests}
\label{fig:5}
\includegraphics[width=\textwidth]{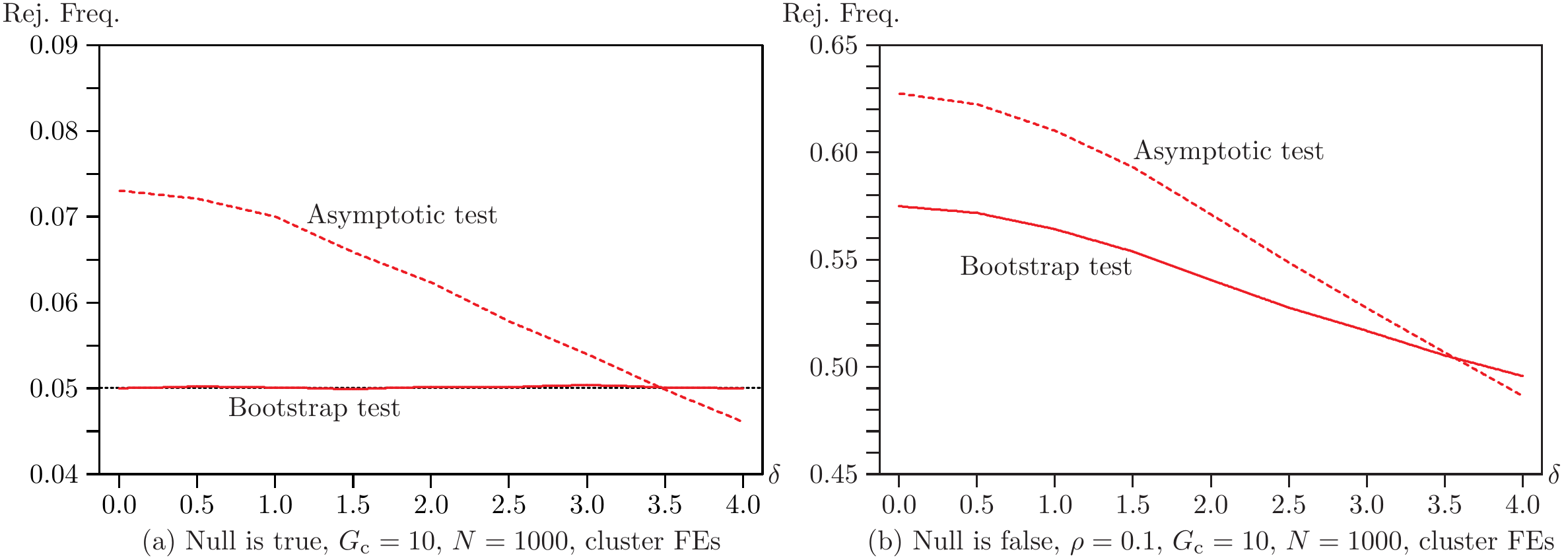}
\end{center}
{\footnotesize
\textbf{Notes:} The regressor is generated by \eqref{facDGP} with
$\rho=0.5$. The disturbances are generated by \eqref{facDGP} with
$\rho=0.0$ in Panel~(a) and $\rho=0.1$ in Panel~b). There are 10
clusters, 1000 observations, and cluster fixed effects. Cluster sizes
vary according to~\eqref{gamma-eq}. All tests are at the nominal 0.05
level. There are 400,000 replications and 399 bootstraps.}
\end{figure}

\subsection{The Sequential Testing Procedure}
\label{subsec:seqsims}

Our next set of experiments concerns the sequential testing procedure
of \Cref{subsec:seq}, using bootstrap tests. These experiments are
quite similar to the ones in \Cref{fig:3,fig:4}, except that there are
8 coarse clusters, 48 fine clusters, and 2400 observations. As in
\Cref{subsec:alt}, there are 999 bootstrap samples. The model always
contains coarse\tkk-level fixed effects, and all of the tests are at
the 0.05 level. The figure shows the outcomes of sequential,
upper-tail $\tau_\sigma$ tests as~$\rho$, the intra-cluster
correlation for each set of disturbances generated by \eqref{facDGP},
varies within either coarse or fine clusters.

\begin{figure}[tb]
\begin{center}
\caption{Outcomes for sequential upper-tail bootstrap tests at 0.05
level}
\label{fig:6}
\includegraphics[width=\textwidth]{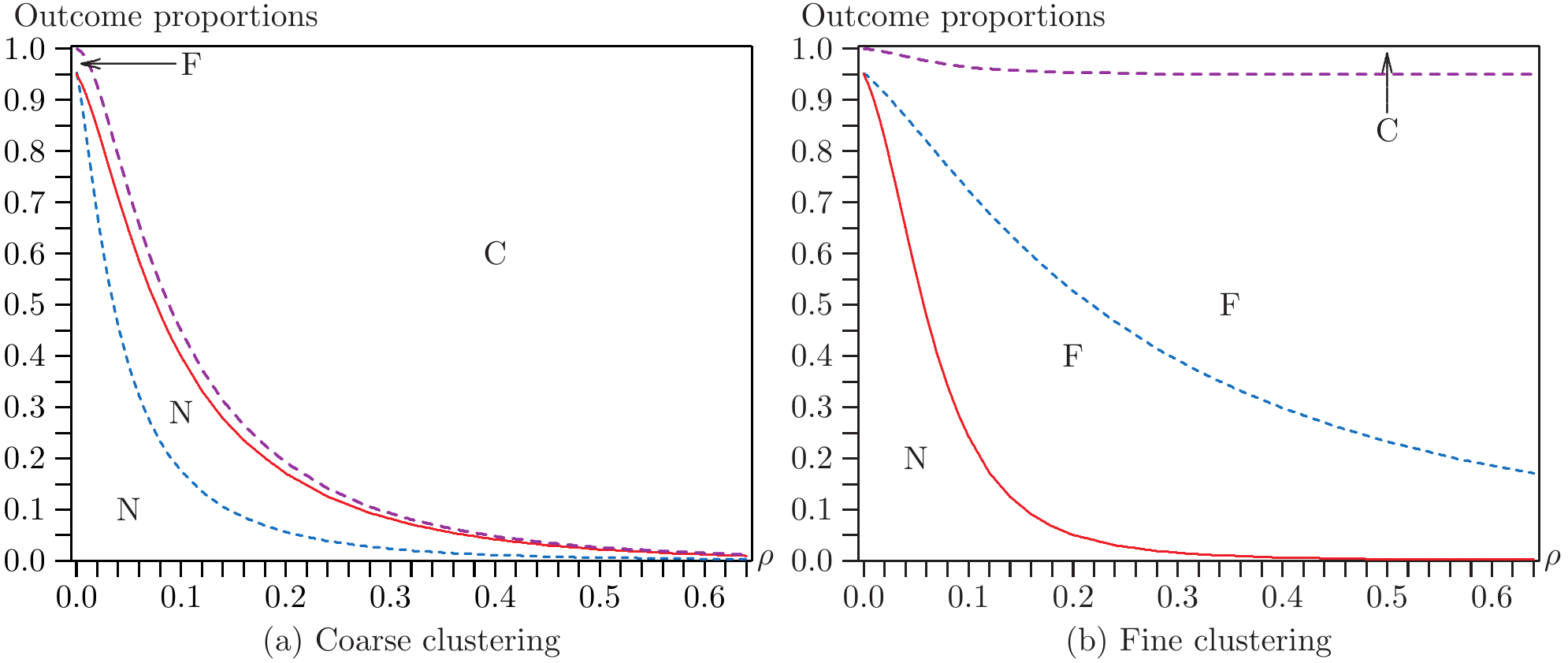}
\end{center}
{\footnotesize
\textbf{Notes:} There is one regressor, generated by \eqref{facDGP}
with $\rho=0.5$, plus cluster fixed effects. The regressand is
generated by \eqref{simmod} with clustered disturbances at either the
coarse level (left panel) or the fine level (right panel), for $\rho$
between 0.0 and~0.64. There are 8 coarse clusters, 48 fine clusters,
and 2400 observations. Bootstrap tests use $B=999$, and there are
400,000 replications. The solid red and dashed purple curves separate
the three outcomes of the sequential procedure; the red curve
separates N from~F, and the purple curve F from~C. The dashed blue
curve shows the outcome of a direct test of N against~C.}
\end{figure}

In Panel~(a) of \Cref{fig:6}, there is coarse clustering in the DGP,
except when $\rho=0$. In that case, as expected, the procedure
chooses no clustering (N) almost exactly 95\% of the time, fine
clustering (F) almost exactly 4.75\% of the time, and coarse
clustering (C) almost exactly 0.25\% of the time. These results 
illustrate why the sequential testing algorithm does not inflate the 
Type~I error. In this case, the true null is rejected almost exactly
$\alpha$\% of the time. Amongst the replications with false 
positives, the test concludes that fine clustering is appropriate
about $(1-\alpha)$\% of the time and that coarse clustering is
appropriate the remaining $\alpha$\% of the time.

As $\rho$ increases, the procedure chooses N or F less and less often. 
For very small values of $\rho$, it chooses N or F more often than C, 
but that changes quickly as $\rho$ increases. The gap between the solid
red and dashed purple curves shows the fraction of the time that F is
(incorrectly) chosen. This gap is always small, and it vanishes as
$\rho$ becomes large.

The sequential procedure inevitably has less power than testing no
clustering directly against coarse clustering. The outcome of testing
N directly against C at the 0.05 level is shown by the blue dashed
curve in Panel~(a). The gap between this curve and the purple dashed 
curve that separates the F and C regions shows the power loss from 
using the sequential procedure. This power loss arises for two reasons.
First, the test of N against F has less power than the test of N 
against~C; see \Cref{fig:4}. Second, even when N is correctly rejected 
against~F, the latter is sometimes not rejected against~C. When the 
investigator finds coarse clustering more plausible than fine 
clustering, it may therefore make sense to test no clustering directly 
against the former rather than to employ the sequential procedure.

In Panel~(b) of \Cref{fig:6}, there is fine clustering in the DGP,
except when $\rho=0$. The sequential procedure again works very well.
As $\rho$ increases, it incorrectly chooses no clustering a rapidly
diminishing fraction of the time. For larger values of $\rho$, it 
incorrectly chooses coarse clustering about 5.2\% of the time,
because the bootstrap SV tests over-reject slightly with only 
8 coarse and 48 fine clusters. Once again, the outcome of testing N 
directly against~C is shown by the blue dashed line. This test works 
much less well than the sequential procedure, often failing to reject 
the false null hypothesis that the disturbances are not clustered.
This is not surprising, since the alternative involves clustering at a
coarser level than the DGP.

\subsection{Making Inferences about a Regression Coefficient}
\label{subsec:pretest}

In \Cref{subsec:infreg}, we discussed several procedures for making
inferences about a single regression coefficient when clustering may
be either fine or coarse. We now investigate some of these procedures,
notably pre\tkk-test ones based on SV tests. There are four simulation
experiments, each involving 12 coarse clusters. In two of them, we
pre\tkk-test the null of no clustering, and in the other two we
pre\tkk-test the null of fine clustering with 96 fine clusters.

The model is a variant of \eqref{simmod}, with eight regressors plus
coarse-level fixed effects, so that $k=G_{\rm c}+8=20$. The regressors 
are generated by \eqref{facDGP} with $\rho=0.5$. The disturbances
$u_{ghi}$ are generated as a convex combination of two disturbances,
$\epsilon^{\rm c}_{gi}$ and $\epsilon^{\rm f}_{ghi}$, with weights
$\eta$ and $1-\eta$ respectively, rescaled so that the $u_{ghi}$ have
unit variance. The $\epsilon^{\rm c}_{gi}$ are generated by
\eqref{facDGP} with $\rho=0.25$. When the pre\tkk-test null hypothesis 
is fine clustering, the $\epsilon^{\rm f}_{ghi}$ are generated in the 
same way as the $\epsilon^{\rm c}_{gi}$, but for 96 fine clusters 
instead of 12 coarse ones. When the pre\tkk-test null hypothesis is no 
clustering, the $\epsilon^{\rm f}_{ghi}$ are i.i.d.\ normal.

The parameter $\eta$ determines the amount of correlation within
coarse clusters. The pre\tkk-test null hypotheses are true when 
$\eta=0$, so that there is either no intra-cluster correlation or only 
correlation within the fine clusters. The pre\tkk-test null hypotheses 
are false when $\eta>0$, and the DGP moves further away from the 
pre\tkk-test null as $\eta$ increases. In the experiments, we vary 
$\eta$ from 0 to~1.

There are several asymptotically valid standard errors for coarse
clustering, fine clustering, and no clustering. The best-known
variance matrix estimator with clustering, often referred to as
CV$_1$, is the usual sandwich estimator \eqref{covbeta} with 
$\hat\bSigma_{\rm c}$ given by \eqref{Sighat} or \eqref{Sigmac}.
However, recent work \citep{Hansen-jack,MNW-bootknife,MNW-influence}
suggests that the cluster jackknife, or CV$_3$, estimator usually
performs better than CV$_1$, so we use the former for inference about
the regression coefficient. For the case of no clustering, we use the
HC$_3$ standard error of \citet{MW_1985}, which is a jackknife
estimator analogous to CV$_3$.

We focus on inference about $\beta_1$, one of the $\beta_\ell$ in
\eqref{simmod}. The pre\tkk-test estimators that we study are based on
upper-tail $\tau_\sigma$ tests. Upper-tail tests are more powerful
than two\tkk-sided tests, so that the former make fewer Type~II
errors; see \Cref{fig:4}. Moreover, even when the difference between
$\var_{\rm c}(\hat\beta_1)$ and $\var_{\rm f}(\hat\beta_1)$ is
positive, $\widehat\var_{\rm c}(\hat\beta_1)$ can be smaller than
$\widehat\var_{\rm f}(\hat\beta_1)$. This happens frequently in our
experiments when $\eta$ is greater than~0 but small. Thus, 
investigators who do not wish to reject fine clustering in favor of
coarse clustering when the coarse standard error is smaller than the
fine one will choose to employ upper-tail pre\tkk-tests.

\begin{figure}[tb]
\begin{center}
\caption{Root mean squared errors of four standard error estimates}
\label{fig:7}
\includegraphics[width=\textwidth]{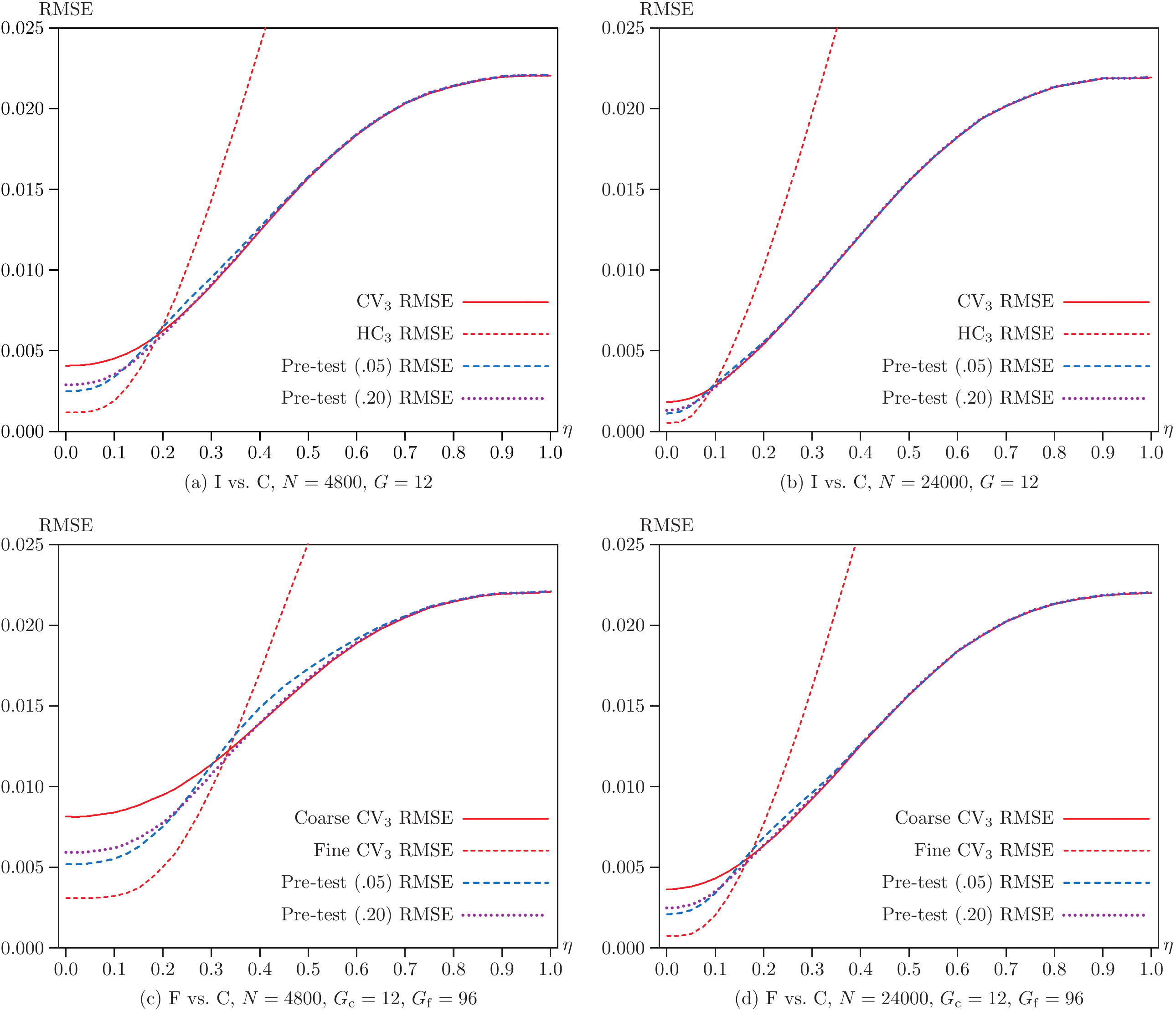}
\end{center}
{\footnotesize \textbf{Notes:} The regressors are generated by
\eqref{facDGP} with coarse clustering and $\rho=0.5$, and the
disturbances are generated as discussed in the second paragraph of
this subsection. When $\eta=0$, there is either no clustering (top 
panels) or fine clustering (bottom panels), depending on the 
pre\tkk-test null hypothesis. When $\eta>0$, there is coarse clustering.
The pre\tkk-test estimators are based on upper-tail $\tau_\sigma$ 
tests. There are $400,\tn000$ replications.}
\end{figure}

The choice among various standard errors is an estimation problem.
Thus, it seems reasonable to compare them on the basis of root mean
squared error (RMSE). When the pre\tkk-test null hypothesis is no 
clustering, the standard error is based on HC$_3$, CV$_3$, or the one 
chosen by pre\tkk-tests at either the 0.05 or 0.20 level. When the 
pre\tkk-test null is fine clustering, the standard error is based on fine 
CV$_3$, coarse CV$_3$, or the one chosen by pre\tkk-tests at the same
two levels. \Cref{fig:7} shows the RMSEs associated with each of these
standard errors. In Panels~(a) and~(b), the pre\tkk-test null hypothesis
is no clustering. In Panels~(c) and~(d), it is fine clustering, with 96
clusters. There are 4800 observations in Panels~(a) and~(c) and 24,000
in Panels~(b) and~(d).

The HC$_3$ or fine CV$_3$ standard errors are the most accurate when
$\eta=0$, and they continue to be the most accurate for small values
of~$\eta$. However, for larger values of $\eta$, they are by far the
least accurate, because they are severely biased. In contrast, the
coarse CV$_3$ standard errors are the least accurate when $\eta$ is
small, but for moderate and larger values of $\eta$ they are the most
accurate. The two pre\tkk-test standard errors are substantially more
accurate than the coarse CV$_3$ ones for small values of~$\eta$ and
almost identical to the latter for large values of~$\eta$. In between,
there is always a region where the pre\tkk-test standard errors are
slightly less accurate than the coarse CV$_3$ ones. This is barely 
noticeable for pre\tkk-tests at the 0.20 level, but it is quite
noticeable for pre\tk-tests at the 0.05 level, especially in
Panel~(c), where the SV tests have the least power.

In our view, the 0.20 pre\tkk-test standard errors in \Cref{fig:7}
perform substantially better than any of the others. They are much
more accurate than coarse CV$_3$ standard errors for small values of
$\eta$, slightly less accurate for some intermediate values, and
essentially identical for larger values. Since using a more accurate
standard error yields a confidence interval that provides a better
sense of how reliable a coefficient estimate is, it seems reasonable
to base confidence intervals on 0.20 pre\tkk-test standard errors.

\begin{figure}[tb]
\begin{center}
\caption{Coverage of four 95\% confidence intervals}
\label{fig:8}
\includegraphics[width=\textwidth]{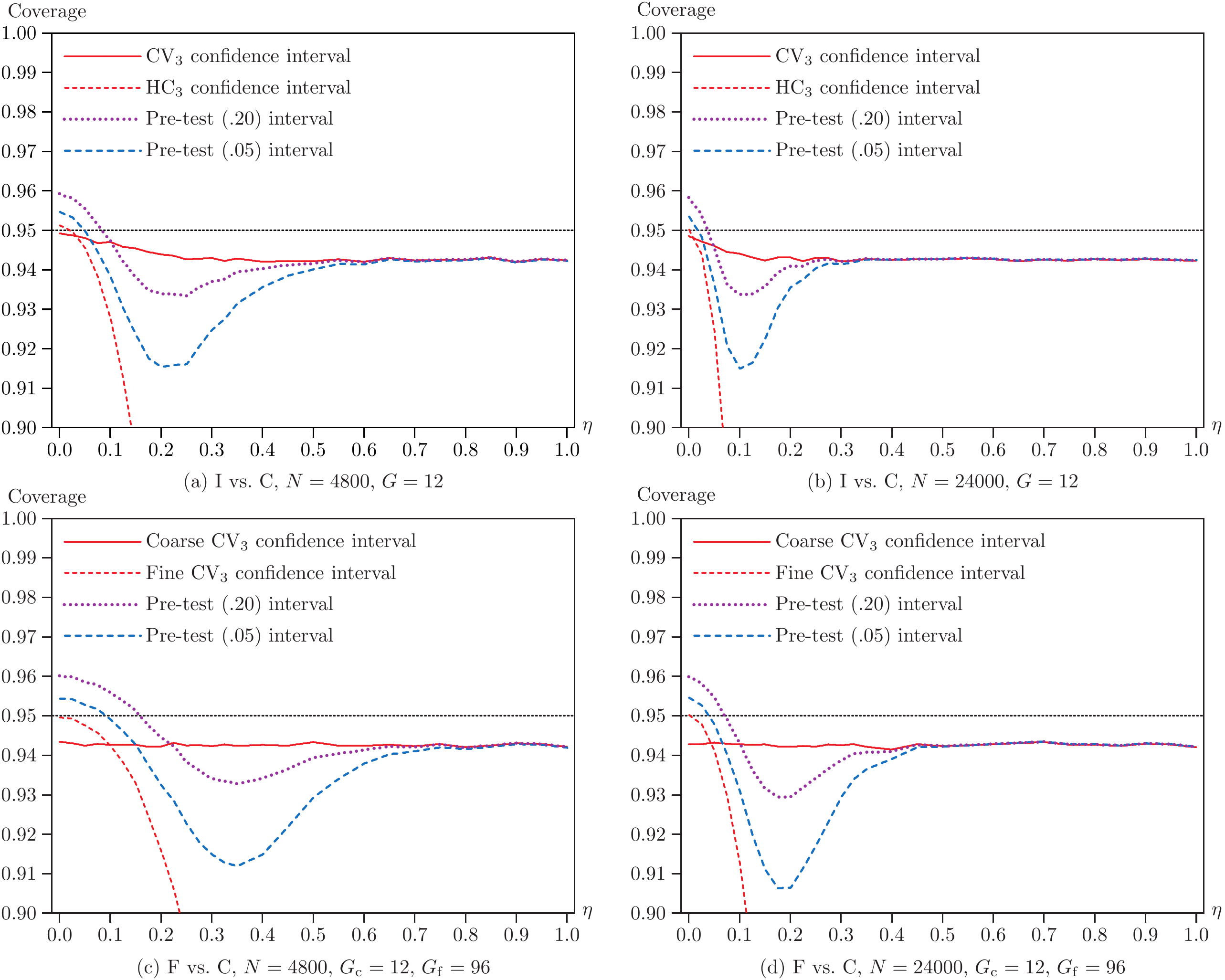}
\end{center}
{\footnotesize \textbf{Notes:} These results are for the same experiments
as in \Cref{fig:7}.}
\end{figure}

Of course, using a more accurate standard error does not guarantee
better coverage. \Cref{fig:8} shows the coverage of confidence intervals
using the four standard errors in \Cref{fig:7}. The coarsely-clustered 
intervals always under-cover to some extent. With only 12 clusters, that
is not surprising. If we had used CV$_1$ instead of CV$_3$ to construct 
the intervals, they would have under-covered to a somewhat greater
extent. On the other hand, coverage would almost certainly have been
closer to 95\% if we had used the wild cluster bootstrap
\citep{MNW-bootknife}, but that would have been computationally very
demanding to simulate. The coverage using HC$_3$ and the 
finely-clustered CV$_3$ is almost exactly 95\% when $\eta=0$, but they 
always under-cover for $\eta>0$, and the under-coverage is very severe 
for most values of~$\eta$. Indeed, their coverage always rapidly drops 
below 0.90, the lower limit of the vertical axis.

The pre\tkk-test intervals over-cover slightly when $\eta=0$, which is
a consequence of Type~I errors in the pre\tkk-tests. However, they
under-cover more than the coarsely-clustered CV$_3$ intervals for
intermediate values of $\eta$ because of Type~II errors. The
under-coverage is much more pronounced for pre\tkk-tests at the 0.05
level than for pre\tkk-tests at the 0.20 level. Because the sample size
is five times larger in Panels~(b) and~(d) than in Panels~(a) and~(c),
the pre\tkk-tests are more powerful, and the pre\tkk-test intervals
converge more rapidly to the coarsely clustered CV$_3$ interval as
$\eta$ increases.

To save computer time and programming effort, we use asymptotic SV
tests in these experiments. In consequence, the levels of the
pre\tkk-tests are not exactly 0.05 and~0.20. In particular, the actual
levels of tests at the 0.20 level are noticeably lower than~0.20, and
the ones for tests at the 0.05 level are somewhat higher than~0.05. If
we had used bootstrap pre\tkk-tests, the under-coverage for moderate
values of $\eta$ would have been a bit smaller for tests at the 0.20
level and a bit larger for tests at the 0.05 level. But all the curves
for pre\tkk-test confidence intervals would have looked very similar.
They would also have looked very similar if we had used CV$_1$ and
HC$_1$ instead of CV$_3$ and HC$_3$.

\section{Empirical Example}
\label{sec:example}

We now illustrate the use of our score\tkk-variance tests in a
realistic empirical setting. We employ the widely-used data from the
Tennessee Student Teacher Achievement Ratio (STAR) experiment
\citep{Finn_1990, Mosteller_1995}. We use these data to estimate a
cross\tkk-sectional model similar to one in \citet{Krueger_1999}. The
STAR experiment randomly assigned students either to small-sized
classes, regular-sized classes without a teacher's aide, or
regular-sized classes with a teacher's aide. We are interested in the
effect of being in a small class, or being in a class with an aide, on
standardized test scores in reading.

We estimate the following cross\tkk-sectional regression model:
\begin{equation}
\label{eq:starcs}
\text{read-one}_{sri} = \alpha + \beta_s\tk \text{small-class}_{sr} 
  +\beta_a\tk \text{aide\tkk-class}_{sr} + \bix^\top_{sri}\bdelta + u_{sri}.
\end{equation}
The outcome variable $\text{read-one}_{sri} $ is the reading score in
grade one of student~$i$ in classroom~$r$ in school~$s$. We are 
interested in $\beta_s$ and $\beta_a$, which are the coefficients for 
the small-class and aide\tkk-class dummies. Small-class equals~1 if a
student attended a small class in grade one and equals~0 otherwise;
aide\tkk-class is constructed in the same way for classes with or
without a teacher's aide. Additional control variables are collected
in the vector of regressors $\bix_{sri}$. These include dummy
variables for whether the student was male, non-white, or received
free lunches, as well as a dummy variable for whether the student's
teacher was non-white. They also include the teacher's years of
experience and the student's reading score in kindergarten. Finally,
there are dummy variables for the student's quarter of birth, the
student's year of birth, and the teacher's highest degree. There are
thus 17 coefficients in total, not counting the constant term or the
school fixed effects, if any.

\begin{table}[tp]
\caption{STAR Example}
\label{tab:starcs}
\vspace*{-0.5em}
\begin{tabular*}{\textwidth}{@{\extracolsep{\fill}}
llcd{2.3}d{1.3}d{1.3}cd{2.3}d{1.3}d{1.3}d{1.3}}
\toprule
&    &   & \multicolumn{3}{c}{Without School FE}
&    &   & \multicolumn{3}{c}{With School FE} \\
\cmidrule{4-6}\cmidrule{9-11}
\multicolumn{2}{l}{Estimates}  & & \multicolumn{1}{c}{HC$_3$(N)}
& \multicolumn{1}{c}{CV$_3$(R)} 
& \multicolumn{1}{c}{CV$_3$(S)} &&
& \multicolumn{1}{c}{HC$_3$(N)} & \multicolumn{1}{c}{CV$_3$(R)} & 
\multicolumn{1}{c}{CV$_3$(S)} \\
\midrule
small & $\hat\beta_s$ & & 9.211 & 9.211 & 9.211 & &&8.095 &8.095 & 8.095 \\
      & s.e. & & 1.633 & 3.273 & 3.253 && &1.556 &3.028 & 3.120 \\ 
      & $t$-stat. & & 5.640  & 2.814  & 2.831 && &5.203 &2.673 & 2.595  \\
aide  & $\hat\beta_a$ & & 6.245 & 6.245 & 6.245 & &&4.170 & 4.170 &4.170  \\
      & s.e. & & 1.664 & 3.343 & 2.847 & &&1.587 & 2.814 & 2.429 \\ 
      & $t$-stat. & & 3.752 & 1.868  & 2.194 && &2.628 & 1.482 & 1.717 \\
\midrule
&    &   & \multicolumn{3}{c}{Without School FE}
&        & \multicolumn{4}{c}{With School FE} \\
\cmidrule{4-6}\cmidrule{8-11}
\multicolumn{2}{l}{Cluster tests} &
& \multicolumn{1}{c}{SV stat.} & 
\multicolumn{1}{c}{asy.\ $P$} & \multicolumn{1}{c}{boot $P$} & & 
\multicolumn{1}{c}{SV stat.} & \multicolumn{1}{c}{asy.\ $P$} & 
\multicolumn{1}{c}{boot $P$} & \multicolumn{1}{c}{IM $P$} \\
\midrule
small    & H$_{\rm N}$ vs H$_{\rm R}$ & &28.388 &0.000 &0.000
 & &12.757 &0.000 &0.000  & \multicolumn{1}{c}{---} \\
 & H$_{\rm N}$ vs H$_{\rm S}$ & &16.409 &0.000 &0.000
 & &18.308 &0.000 &0.000  &0.251 \\
 & H$_{\rm R}$ vs H$_{\rm S}$ & & -0.101 &0.540 &0.543
 & &4.366 &0.000 &0.004 &0.000 \\
aide  & H$_{\rm N}$ vs H$_{\rm R}$ & &25.693 &0.000 &0.000
  & &7.625  &0.000 &0.000  & \multicolumn{1}{c}{---} \\
 & H$_{\rm N}$ vs H$_{\rm S}$ & &10.102 &0.000 &0.000
 & &7.696  &0.000 &0.000 &0.438 \\
 & H$_{\rm R}$ vs H$_{\rm S}$ & & -1.765 &0.961 &0.973
 & &1.871 &0.031 &0.344 & 0.000 \\
both &H$_{\rm N}$ vs H$_{\rm R}$ & &1075.469 &0.000 &0.000
 & &180.448 &0.000 &0.000 & \multicolumn{1}{c}{---} \\
 &H$_{\rm N}$ vs H$_{\rm S}$ & &322.367 &0.000 &0.000
 & &385.950 &0.000 &0.000 & \multicolumn{1}{c}{---} \\
 &H$_{\rm R}$ vs H$_{\rm S}$ & &5.215 &0.157 &0.171
 & &28.673 &0.000 &0.011 & \multicolumn{1}{c}{---} \\
\bottomrule
\end{tabular*}
\vskip 6pt {\footnotesize \textbf{Notes:} There are 3,989 observations
and either 330 classroom clusters (denoted R for ``room'') or 75 school
clusters (denoted~S). The null hypotheses of no clustering, classroom
clustering, and school clustering are called H$_{\rm N}$, H$_{\rm
R}$, and H$_{\rm S}$, respectively. Values of the $\tau_\sigma$
statistic (for ``small'' and ``aide'') or the $\tau_\Sigma$ statistic
(for ``both'') are shown under ``SV~stat.'' All other numbers in the
lower panel are $P$~values. For the $\tau_\sigma$ tests, asymptotic
$P$~values are upper-tail and based on the $\N(0,1)$ distribution. For
the $\tau_\Sigma$ tests, they are based on the $\chi^2(3)$
distribution. Upper-tail bootstrap $P$~values use $B=99,\tn999$. IM
tests use $S=9,\tn999$. Data and \texttt{Stata} files may be found at
\url{http://qed.econ.queensu.ca/pub/faculty/mackinnon/svtest/}.}
\end{table}

OLS estimates for the model \eqref{eq:starcs} are presented in the top
half of \Cref{tab:starcs}. Two variants of the model are estimated. In
the left panel, there is just a constant term. In the right panel,
there are school fixed effects. It is impossible to use classroom
fixed effects, because treatment was assigned at the classroom level.
Three sets of standard errors and $t$-statistics are reported for each
variant of the model. For each set, the first column reports results
that are heteroskedasticity-robust (HR), using HC$_3$ standard errors.
The next two columns report results that are cluster-robust (CR) at
either the classroom~(R) level or the school~(S) level, using CV$_3$
standard errors. As in \Cref{subsec:pretest}, we employ HC$_3$ and 
CV$_3$, instead of the more commonly-used HC$_1$ and CV$_1$ estimators, 
because the former tend to yield more reliable inferences. The
HR results would have been very similar if we had used HC$_1$ instead
of HC$_3$. However, some of the CR results would have been noticeably
different if we had used CV$_1$ instead of CV$_3$. The reason for this
is interesting, and we discuss it below.

Because treatment was assigned at the classroom level, it seems 
plausible that clustering at that level would be appropriate. However,
since there are multiple classrooms per school, and students from the
same school probably have many common characteristics and peer
effects, it might also seem natural to cluster at the school level
instead of the classroom level; even more so if assignment was not
entirely random.

Unfortunately, the dataset does not contain a classroom indicator. One
was created by using the information on the school~ID, teacher's race,
teacher's experience, teacher's highest degree, teacher's career
ladder stage, and treatment status. It is possible that this procedure
occasionally grouped two classes into one class, when two teachers in
the same school had exactly the same observable characteristics.
However, since the largest observed class had only 29 students, this 
seems unlikely to have happened often. Moreover, it would not be a 
problem, because the true classes would always be nested within the 
larger, assumed class. What would be a problem is if classes were 
incorrectly partitioned, but this cannot happen.

For the model without school fixed effects, the estimated impact on
test scores of being in a small class is $\hat\beta_s = 9.211$. Based
on an HR standard error of 1.63, the $t$-statistic for the null
hypothesis that $\beta_s=0$ is~5.64. When we instead use CR standard
errors clustered at the classroom level, the standard error for
$\beta_s$ increases to 3.23, and the $t$-statistic decreases to~2.81.
Using CR standard errors clustered at the school level yields almost
identical results; the standard error is 3.25, and the $t$-statistic
is~2.83. In this case, the level at which we cluster makes no qualitative
difference. For the model with school fixed effects, the estimate of
the small-class effect is somewhat lower at $\hat\beta_s=8.095$. The
HR $t$-statistic is now 5.20, the classroom-level CR $t$-statistic is
2.67, and the school-level CR $t$-statistic is~2.57. Once again, the
level at which we cluster does not change the conclusions.

The estimated effect on test scores of being in a class with an aide
is $\hat \beta_a = 6.245$ without school fixed effects and $\hat
\beta_a = 4.170$ with them. Based on the HR $t$-statistics, there
seems to be fairly strong evidence that $\beta_a \neq 0$ for both
models. However, when we cluster at the classroom level, we cannot
reject this null hypothesis at the 0.05 level for either
specification. When we cluster at the school level, we can do so for 
the model without fixed effects ($P=0.031$), but not for the model
with fixed effects.

The lower panel of \Cref{tab:starcs} shows the values of our SV test
statistics, and the associated upper-tail asymptotic and bootstrap 
$P$~values, for the two coefficients of interest, both individually and
jointly. It also shows results for the IM test for the model with
school fixed effects, when that test can be calculated. For each
specification, we consider three hypotheses: H$_{\rm N}$ is no
clustering with possible heteroskedasticity, H$_{\rm R}$ is
classroom-level clustering, and H$_{\rm S}$ is school-level
clustering. These are nested as $\text{H}_{\rm N} \subseteq
\text{H}_{\rm R} \subseteq \text{H}_{\rm S}$.

For testing H$_{\rm N}$ against H$_{\rm R}$, the SV tests, both
asymptotic and bootstrap, very strongly reject the null in all cases.
IM tests cannot be computed for this hypothesis, because the procedure
requires the model to be estimated classroom by classroom, and the two
treatment variables are invariant at that level. For testing H$_{\rm N}$
against H$_{\rm S}$, the SV tests also very strongly reject the
null in all cases. This is not surprising. Since there is overwhelming
evidence against H$_{\rm N}$ when tested against H$_{\rm R}$, and
classrooms are nested within schools, there is inevitably also strong
evidence against H$_{\rm N}$ when tested against H$_{\rm S}$.

IM tests can be computed when testing against H$_{\rm S}$, but only
for the model with school fixed effects. For both coefficients, the IM
tests suggest that H$_{\rm N}$ should not be rejected. This is
inconsistent with the results of the score\tkk-variance tests and
surprising in view of the standard errors reported in the top part of
the table; see below for further discussion.

The results for testing H$_{\rm R}$ against H$_{\rm S}$ differ
depending on the model, the coefficient(s) of interest, and the
testing procedure. Consider first the model with no fixed effects. 
Here, both $\tau_\sigma$ statistics are negative, so of course 
upper-tail tests do not reject the null. This reflects the fact that, 
for both coefficients, the CR standard errors for school clustering are
smaller than those for classroom clustering. The $\tau_\Sigma$ test
for both coefficients jointly is always two\tkk-sided. With $P$~values 
of 0.157 (asymptotic) and 0.171 (bootstrap), it also fails to reject 
the null hypothesis. Thus we conclude that the classroom level is the 
right one at which to cluster for the model with just a constant term.

Consider next the model with school fixed effects. As we noted in 
\Cref{rem:Moulton,rem:empscores}, the ``correct'' level of clustering 
may be different for different hypotheses. This is what we find here.  
For $\hat\beta_s$, all three SV tests reject the null hypotheses and 
consequently suggest that school clustering is appropriate. In 
contrast, for $\hat\beta_a$, the SV tests suggest quite clearly (at 
least when using bootstrap $P$-values) that classroom clustering is 
appropriate.

Closer examination reveals that, for the model with school fixed
effects, the asymptotic and bootstrap tests for H$_{\rm R}$ against
H$_{\rm S}$ always yield quite different $P$~values. This is easily
seen for $\beta_a$, where the bootstrap $P$~value of 0.344 is more
than ten times the asymptotic $P$~value of~0.031. But it is also true
for the other two tests. For $\beta_{\rm s}$, the $\tau_\sigma$ test
statistic of 4.366 has an asymptotic $P$~value of 0.000006 and a
bootstrap $P$~value of~0.0044. For the joint test of both
coefficients, the $\tau_\Sigma$ test statistic of 28.673 has an
asymptotic $P$~value of 0.000003 and a bootstrap $P$~value of~0.0109.
In the latter two cases, the bootstrap $P$~values are small, but they
are many times larger than the asymptotic ones.

The differences between asymptotic and bootstrap $P$~values for SV
tests of classroom against school clustering in the model with school
fixed effects arise because there are only a few classrooms per
school. The average is 4.4, and most schools have just 3 or 4
classrooms. Because the residuals are orthogonal to the school fixed
effects, they must add to zero over all classrooms in each school.
This mechanically creates negative correlation between the residuals
across classrooms within each school, even if the disturbances are
uncorrelated across classrooms. The negative correlation of the
residuals leads to spurious correlation of the empirical scores
whenever a regressor of interest, after being projected off the fixed
effects and the other regressors, is correlated across classrooms
within schools. Because student characteristics probably vary at the
school level, this sort of correlation seems very likely.

In principle, the spurious correlation of the empirical scores could
be either positive or negative. For the model \eqref{eq:starcs}, it is
evidently positive and quite large. This explains why the bootstrap
tests yield much larger $P$~values than the asymptotic tests.
Equivalently, the bootstrap critical values are greater than the
asymptotic ones. For example, the test statistic for H$_{\rm N}$
against H$_{\rm R}$ for $\beta_a$ is~7.625. The asymptotic critical
value for an upper-tail test at the 0.05 level is 1.645, but the
bootstrap critical value is~3.423.

Whenever there is a dummy variable that affects only a few clusters
(in this case the classrooms within each school), OLS residuals will
be negatively correlated across those clusters, even when the
disturbances are uncorrelated. This distortion of the residuals can
cause cluster-robust inference to be severely misleading; see, among
others, \citet{MW-JAE,MW-EJ} and \citet{Chaisemartin_2022}. However,
CV$_3$ standard errors are almost certainly much more reliable in such
cases than CV$_1$ standard errors. As \citet{MNW-bootknife} explains,
the cluster jackknife implicitly involves transforming the empirical
scores in a way that undoes at least part of the distortion induced by
least squares. This is evidently happening here.

With 330 clusters, we would normally expect CV$_1$ and CV$_3$ standard
errors to be almost identical. But this is not the case for the model
with fixed effects and classroom clustering. The CV$_1$ standard
errors with classroom clustering for $\hat\beta_s$ and $\hat\beta_a$
are 2.322 and 2.109, respectively. These are much smaller than the
CV$_3$ standard errors of 3.028 and 2.814 reported in
\Cref{tab:starcs}. The latter are almost certainly much more reliable
than the former. Note that the CV$_1$ standard error for $\hat\beta_a$
with school clustering is 2.422, which is almost identical to the
CV$_3$ one in the table and greater than~2.109. Thus the ratio of the
S and R standard errors is greater than one for CV$_1$ and less than
one for CV$_3$. Because the former ratio is greater than one, the
$\tau_\sigma$ statistic is positive.

In additional simulation experiments not reported here, we generated
artificial samples using the actual regressors for the STAR model.
When there are no school fixed effects, all the SV tests, both
asymptotic and bootstrap, work very well. However, when there are
fixed effects, the asymptotic tests over-reject severely (up to about
70\% of the time). The bootstrap tests perform almost perfectly when
testing H$_{\rm N}$ against either H$_{\rm R}$ or H$_{\rm S}$, but
they reject between 7\% and 9\% of the time for the tests of H$_{\rm
R}$ against H$_{\rm S}$. We also performed some experiments in which
the number of classrooms per school was doubled. All tests performed
very much better in this case. These results suggest that, when there
are fixed effects at the coarse level with few fine clusters per coarse
cluster, and the asymptotic and bootstrap $P$~values differ sharply,
the former should not be believed, and the latter should be taken with
a grain of salt.

The IM tests are undoubtedly also affected by the odd properties of
OLS residuals with school fixed effects. However, many of the
differences between the score\tkk-variance tests and the IM tests in
\Cref{tab:starcs} probably arise because calculating the latter for
the model \eqref{eq:starcs} is tricky. The problem is that estimating
all the coefficients for every one of the 75 schools is infeasible.
For 34 schools, it is impossible to estimate at least one of $\beta_s$
and $\beta_a$ (17 schools in the case of $\beta_s$ and 21 schools in
the case of~$\beta_a$). This means that the IM tests have to be based
on either 58 or 54 coarse clusters, instead of all~75. Additionally,
the other regressors that are included vary across clusters, so that
the coefficients $\beta_s$ and $\beta_a$ may have different
interpretations for different clusters. The IM tests may effectively
be testing different null hypotheses than the score\tkk-variance
tests, which are always based on estimates for the entire sample.

In summary, our score\tkk-variance tests suggest that clustering at
either the classroom or school level is essential, because the null
hypothesis of no clustering is always strongly rejected against both
alternatives. Which of these levels we should cluster at depends on
the model and the coefficient(s) of interest. With just a constant 
term, the sequential testing procedure, using either asymptotic or 
bootstrap tests, suggests that we should choose H$_{\rm R}$ and cluster 
at the classroom level. However, with school fixed effects, we should
apparently choose H$_{\rm R}$ if interest focuses on $\beta_a$ and
H$_{\rm S}$ if it focuses on $\beta_s$ or on both coefficients. Both
choices lead us to conclude that the effect of small classes is
positive and significant at the 0.05 level, while the effect of a
teacher's aide is also positive but not significant at that level.

The fact that we obtain different results for the three SV tests
should not be surprising. The test statistics depend on empirical
scores, and they are different for the three tests because the $\biZ$
matrices in \eqref{newmodel}, which are vectors for the $\tau_\sigma$
tests, are different; see \Cref{rem:empscores}. For the model with
fixed effects, the residuals are clearly correlated at the school
level. While part of this correlation is evidently spurious and caused
by the fixed effects, the bootstrap results suggest that the
disturbances are surely correlated at the school level, because the
$\tau_\sigma$ test for $\beta_s$ and the $\tau_\Sigma$ test for the
two coefficients both reject quite strongly. For $\beta_a$ by itself,
however, the scores are apparently not correlated, leading the 
$\tau_\sigma$ test not to reject in that case.

\section{Conclusion}
\label{sec:conclusion}

Empirical research that uses cluster-robust inference typically
assumes that the level of clustering is known. When it is unknown, the
consequences can be serious. Clustering at too fine a level can result
in tests that over-reject severely and confidence intervals that
under-cover dramatically. However, clustering at too coarse a level
can lead to loss of power and to confidence intervals that vary
greatly in length across samples and are, on average, excessively
long.

We have proposed two direct tests for the level of clustering in a
linear regression model, which we call score\tkk-variance (or SV)
tests. Both tests are based on the variances of the scores for two
nested levels of clustering, because it is these variances that appear
in the ``filling'' of the sandwich covariance matrices that correspond
to the two levels. Under the null hypothesis that the finer level is 
appropriate, many of these variances are zero. The test statistics are
functions of the empirical counterparts of those variances. Tests
based on them can be used either to test the null of no clustering
against an alternative of clustering at a certain level or to test the
null of ``fine'' clustering against an alternative of ``coarser''
clustering. We have also proposed a sequential procedure which can be
used to determine the correct level of clustering without inflating
the family-wise error rate; see \Cref{subsec:level}.

The simplest of our two tests is based on the statistic $\tau_\sigma$.
It has the form of a $t$-statistic and tests whether the variance of a
particular coefficient estimate is the same for two different levels
of clustering. It will be attractive whenever interest focuses on a
single coefficient, and it can be implemented as either a
one\tkk-sided, upper-tail test or as a two\tkk-sided test. Since
upper-tail $\tau_\sigma$ tests have more power than two\tkk-sided ones
(\Cref{subsec:alt}), we believe that they will usually be the
procedure of choice. The second variant, based on the Wald-like
statistic $\tau_\Sigma$, tests whether the covariance matrix of a
vector of coefficient estimates is the same for two different levels
of clustering. It is necessarily two\tkk-sided.

Our tests can be implemented as either asymptotic tests or as wild
bootstrap tests. In \Cref{sec:theory} and \Cref{sec:proofs}, we derive
the asymptotic distribution of our tests, prove that they are
consistent tests, and also prove the validity of the wild bootstrap
implementations. In the simulation experiments of
\Cref{sec:simulations}, the asymptotic tests often work well for tests
of a single coefficient, but they can be seriously over-sized for
tests of several coefficients. The problem is most severe when testing
a moderate number of fine clusters against a small number of coarse
clusters. For the empirical example of \Cref{sec:example}, where
several regressors, including the key ones, vary only at the
fine\tkk-cluster level, the asymptotic tests seem to be quite
over-sized when there are school fixed effects. When the asymptotic
tests are seriously over-sized, the bootstrap tests always perform
much better.

Our score\tkk-variance tests are very different from the other tests
for the correct level of clustering proposed in \citet{Ibragimov_2016}
and \citet{Cai_2022}; see \Cref{subsec:other}. All these tests may
provide valuable information, although we believe that SV tests are
particularly intuitive. As we discuss in \Cref{subsec:infreg}, SV
tests can be used either as formal pre\tkk-tests for choosing the
level at which to cluster or simply as robustness checks.

Both our simulation results and the empirical example suggest that SV
tests can have excellent power. In many cases, with both actual and
simulated data, the value of the test statistic is so far beyond any
reasonable critical value that we can reject the null hypothesis with
something very close to certainty even without bothering to use the
bootstrap. However, when our tests are used as pre\tkk-tests to choose
the level of clustering, they inevitably make some Type~I errors when
the true clustering level is fine, and they inevitably make some Type~II
errors when the true clustering level is coarse but the sample size
and the extent of coarse clustering are not large enough for rejection
to occur all the time; see \Cref{subsec:pretest}.

The score\tkk-variance tests we have proposed are intended to provide
guidance for applied researchers. In our view, it should be routine to
report the results of SV tests whenever more than one level of
clustering is plausible. This is especially important when
investigators are considering the use of heteroskedasticity-robust
standard errors or clustering at a very fine level, such as by
individual or by family. In practice, however, it may be safest to
report inferences based on more than one level of clustering, along
with the outcomes of SV tests, as we did in \Cref{sec:example}.


\appendix 
\numberwithin{equation}{section} 
\numberwithin{figure}{section} 
\numberwithin{table}{section} 

\makeatletter 
\def\@seccntformat#1{\@ifundefined{#1@cntformat}
  {\csname the#1\endcsname\quad}
  {\csname #1@cntformat\endcsname}}
\newcommand\section@cntformat{}
\makeatother

\section{Appendix A: Proofs of Main Results}
\label{sec:proofs}

\subsection{Proof of \protect\Cref{thm:asy}}
\label{sec:proof1}

We give the proof for $\tau_\sigma$ only because the proof for 
$\tau_\Sigma$ is essentially the same but with more complicated 
notation. Also, because the factors $m_{\rm c}$ and $m_{\rm f}$ both 
converge to~1, we can ignore them in the proof.

Recall the contrast $\theta = \sum_{g=1}^G\sum_{h_1=1}^{M_g}
\sum_{h_2\neq h_1}^{M_g} s_{gh_1} s_{gh_2}$ defined in 
\eqref{thetaknown}. To prove the first result of the theorem, we show 
that
\begin{align}
\label{proof1a}
\frac{\hat\theta - \theta}{\sqrt{\var (\theta)}}
&\Pto 0 \quad\textrm{and}\\
\label{proof1b}
\frac{\theta}{\sqrt{\var (\theta)}} &\dto \N (0,1).
\end{align}

Under \Cref{as:cluster,as:moments}, it holds that $\sigma_g^2 = 
\sum_{h=1}^{M_g}\sigma^2_{gh}$. From \eqref{tauvar} and 
\Cref{Lemma:varhat} we then find that
\begin{equation}
\label{lowerbound}
\var (\theta ) = 2 \sum_{g=1}^G \sum_{h_1=1}^{M_g}
\sum_{h_2\neq h_1}^{M_g} \sigma^2_{gh_1} \sigma^2_{gh_2} 
\geq c \sum_{g=1}^G \sigma_g^4.
\end{equation}
It follows from \Cref{Lemma:theta}(i) and \eqref{lowerbound} that the 
left-hand side of \eqref{proof1a} is 
\begin{equation*}
O_P \left( \frac{\sup_{g,h}N_{gh}\sup_g N_g}{(\sum_{g=1}^G \sigma_g^4 )^{1/2}}
\right) = o_P (1)
\end{equation*}
by the first condition of \Cref{as:size}. This proves \eqref{proof1a}.

To prove \eqref{proof1b}, we write
\begin{equation}
\label{MDS}
\theta = \sum_{g=1}^G \sum_{h=1}^{M_g} w_{gh}
\quad\textrm{with}\quad
w_{gh}=2 s_{gh}\sum_{j=1}^{h-1} s_{gj},
\end{equation}
where we note that $w_{gh}$ is a martingale difference sequence with 
respect to the filtration $\mathcal F_{gh} = \sigma ( \{ s_{mn} 
\}_{m=1,\ldots,g-1,n=1,\ldots,M_m},\{ s_{gn} \}_{n=1,\ldots,h} )$,
i.e.\ $\E (w_{gh} | \mathcal F_{g,h-1})=0$.
Then \eqref{proof1b} follows from the martingale central limit theorem 
\citep[e.g.,][Theorem~2]{Brown_1971} if
\begin{align}
\label{CLT1}
&\var (\theta)^{-\lambda}\sum_{g=1}^G \sum_{h=1}^{M_g} 
\E | w_{gh}|^{2\lambda} \longto 0 \quad\textrm{for some } \lambda >1 , \\
\label{CLT2}
&\var (\theta)^{-1}\sum_{g=1}^G \sum_{h=1}^{M_g} \E ( w_{gh}^2 | 
\mathcal F_{g,h-1} ) \Pto 1.
\end{align}

We first prove the Lyapunov condition in \eqref{CLT1}. We find 
$\E | s_{gh} |^{2\lambda} \leq cN_{gh}^{2\lambda}$ by 
\Cref{LemmaS}. We also find that
\begin{equation}
\label{mom-sum}
\E \Big| \sum_{j=1}^{h-1} s_{gj} \Big|^{2\lambda}
\leq c \E \Big| \sum_{j=1}^{h-1} s_{gj}^2 \Big|^{\lambda}
\leq c\Big|\sum_{j=1}^{h-1}(\E s_{gj}^{2\lambda})^{1/\lambda}
   \Big|^{\lambda}
\leq c\Big|\sum_{j=1}^{M_g}(N_{gj}^{2\lambda})^{1/\lambda}\Big|^{\lambda}
\leq c  N_g^\lambda \sup_{g,h}N_{gh}^\lambda ,
\end{equation}
where the first inequality is Marcinkiewicz-Zygmund, the second is 
Minkowski, and the third is due to \Cref{LemmaS}. Thus, we obtain the 
bound
\begin{equation}
\label{Lyapunov}
\E |w_{gh}|^{2\lambda} \leq 2^{2\lambda} \E | s_{gh} |^{2\lambda} 
\E \Big| \sum_{j=1}^{h-1} s_{gj} \Big|^{2\lambda} 
\leq c N_{gh}^{2\lambda} N_g^\lambda \sup_{g,h}N_{gh}^\lambda ,
\end{equation}
and hence
\begin{equation}
\label{Lyapunov2}
\sum_{g=1}^G \sum_{h=1}^{M_g} \E | w_{gh}|^{2\lambda}
\leq c \sup_{g,h}N_{gh}^{3\lambda-1} \sup_g N_g^\lambda N .
\end{equation}
Combining \eqref{lowerbound} and \eqref{Lyapunov2}, the Lyapunov 
condition in \eqref{CLT1} is satisfied by the second condition of 
\Cref{as:size}.

We next prove convergence of the conditional variance in \eqref{CLT2}.
Because $\var(\theta)$ equals $\sum_{g=1}^G\sum_{h=1}^{M_g}\E
(w_{gh}^2)$, we decompose $\E (w_{gh}^2 | \mathcal F_{g,h-1}) -\E
(w_{gh}^2 ) = q_{1,gh} + q_{2,gh}$, where $q_{1,gh}=\sigma_{gh}^2
\sum_{j=1}^{h-1}( s_{gj}^2 -\sigma_{gj}^2 )$ and
$q_{2,gh}=\sigma_{gh}^2 \sum_{j_1=1}^{h-1}\sum_{j_2\neq j_1}^{h-1}
s_{gj_1}s_{gj_2}$. Then \eqref{CLT2} follows if 
\begin{equation}
\label{qm}
\var (\theta)^{-1}\sum_{g=1}^G\sum_{h=1}^{M_g} q_{m,gh} \Pto 0
\quad\textrm{for }m=1,2.
\end{equation}

For $m=1$, we reverse the summations and find that 
$\sum_{h=1}^{M_g}q_{1,gh} = \sum_{h=1}^{M_g}r_{1,gh}$, where 
$r_{1,gh}=( s_{gh}^2 -\sigma_{gh}^2) \sum_{j=h+1}^{M_g}\sigma_{gj}^2$
is mean zero and independent across both $g$ and~$h$. We prove 
convergence in $L_{\lambda}$-norm. We find $\E | r_{1,gh}|^\lambda 
\leq c \E | s_{gh} |^{2\lambda} (\sum_{j=h+1}^{M_g}\sigma_{gj}^2
)^\lambda \leq c N_{gh}^{2\lambda} ( \sum_{j=1}^{M_g} 
\sigma_{gj}^2 )^\lambda = c N_{gh}^{2\lambda} \sigma_g^{2\lambda}$ 
using \Cref{LemmaS} and $\sum_{j=1}^{M_g} \sigma_{gj}^2 = \sigma_g^2$. 
By the Marcinkiewicz-Zygmund and Minkowski inequalities we find that
$\E \big| \sum_{g=1}^G\sum_{h=1}^{M_g} r_{1,gh} \big|^\lambda \leq 
c \big( \sum_{g=1}^G\sum_{h=1}^{M_g}(\E |r_{1,gh}|^\lambda)^{2/\lambda}
\big)^{\lambda/2}$, and hence
\begin{equation*}
\E \Big| \sum_{g=1}^G\sum_{h=1}^{M_g} r_{1,gh} \Big|^\lambda \leq c 
\Big( \sum_{g=1}^G\sum_{h=1}^{M_g} N_{gh}^4 \sigma_g^4\Big)^{\lambda/2}
\leq 
\sup_{g,h} N_{gh}^{3\lambda/2}\sup_g N_g^{\lambda/2}\Big( 
\sum_{g=1}^G \sigma_g^4 \Big)^{\lambda/2} .
\end{equation*}
Combining this with the bound \eqref{lowerbound}, the result \eqref{qm} 
for $m=1$ follows if
\begin{equation*}
\sup_{g,h} N_{gh}^{3\lambda/2}\sup_g N_g^{\lambda/2}\Big( 
\sum_{g=1}^G \sigma_g^4 \Big)^{-\lambda/2} \longto 0 ,
\end{equation*}
which is satisfied by the first condition of \Cref{as:size}.

For $m=2$, we use symmetry and reverse the summations to find 
$\sum_{h=1}^{M_g}q_{2,gh} = \sum_{h=1}^{M_g}r_{2,gh}$, where 
$r_{2,gh} = 2 s_{gh}\sum_{j_1=h+1}^{M_g} \sigma_{gj_1}^2
\sum_{j_2=1}^{h-1} s_{gj_2} = w_{gh} \sum_{j=h+1}^{M_g} 
\sigma_{gj}^2$ is a martingale difference sequence with respect 
to~$\mathcal F_{gh}$. We prove convergence in mean square. By 
\eqref{Lyapunov} with $\lambda =1$ the variance is $\E (r_{2,gh}^2 ) 
\leq \E (w_{gh}^2 ) \sigma_g^4 \leq c N_{gh}^2 N_g \sup_{g,h}N_{gh} 
\sigma_g^4$, and hence
\begin{equation*}
\E \Big( \sum_{g=1}^G\sum_{h=1}^{M_g} r_{2,gh} \Big)^2 
= \sum_{g=1}^G\sum_{h=1}^{M_g} \E ( r_{2,gh}^2 )
\leq c \sup_{g,h}N_{gh}^2 \sup_g N_g^2 \sum_{g=1}^G \sigma_g^4 .
\end{equation*}
Combining this with the bound \eqref{lowerbound}, the result \eqref{qm} 
for $m=2$ follows by the first condition of \Cref{as:size}. This 
completes the proof of \eqref{CLT2} and hence of \eqref{proof1b}. 

It remains to show the second part of \Cref{thm:asy}. This follows 
directly from \Cref{Lemma:varhat} by application of \Cref{as:size} to 
the remainder terms.

\subsection{Proof of \protect\Cref{thm:cons}}

As in the proof of \Cref{thm:asy}, we give the proof for $\tau_\sigma$ 
only, and we ignore the asymptotically irrelevant factors $m_{\rm c}$ 
and $m_{\rm f}$. Under the conditions of \Cref{thm:cons}, and 
specifically under \Cref{as:clusteralt}, we find from \eqref{Sigmas} 
that $\bSigma_{\rm c} = \sum_{g=1}^G \sigma_g^2$. However, it is 
important to note that, under the conditions of \Cref{thm:cons}, 
$\sigma_g^2 = \var ( s_g ) \neq \sum_{h=1}^{M_g}\sigma_{gh}^2$.

We decompose the test statistic as follows:
\begin{equation*}
\frac{\hat\theta}{\widehat\Var (\hat\theta )^{1/2}}=
\frac{\sum_{g=1}^G \sigma_g^2}{\widehat\Var (\hat\theta )^{1/2}}
\bigg(\frac{\hat\theta - \theta}{\sum_{g=1}^G \sigma_g^2}
+\frac{\theta - \E (\theta )}{\sum_{g=1}^G \sigma_g^2}
+\frac{\E (\theta )}{\sum_{g=1}^G \sigma_g^2}\bigg) ,
\end{equation*}
where we note that $\E (\theta )/\sum_{g=1}^G \sigma_g^2 = 
(\bSigma_{\rm c} - \bSigma_{\rm f})\bSigma_{\rm c}^{-1}$ is non-zero in 
the limit under the alternative hypothesis, \H{1} in \eqref{hypotheses}.
Thus, it suffices to prove that
\begin{equation}
\label{proof:alt}
\frac{\theta - \E (\theta )}{\sum_{g=1}^G \sigma_g^2} \Pto 0,
\quad
\frac{\hat\theta - \theta}{\sum_{g=1}^G \sigma_g^2} \Pto 0,
\quad\textrm{and}\quad
\frac{\widehat\Var (\hat\theta )^{1/2}}{\sum_{g=1}^G \sigma_g^2} 
\Pto 0.
\end{equation}

For the first result in \eqref{proof:alt}, we prove convergence in mean 
square. The second moment of the numerator is
\begin{align}
\nonumber
\E\big(\theta - \E (\theta )\big)^2 = \var (\theta )
&=\sum_{g=1}^G \var \Big(\sum_{h_1=1}^{M_g}\sum_{h_2\neq h_1}^{M_g}
s_{gh_1} s_{gh_2}\Big)
=\sum_{g=1}^G \var \Big( s_g^2 - \sum_{h=1}^{M_g} s_{gh}^2\Big) \\
\label{proof:alt1}
&\leq c \sum_{g=1}^G N_g^4
\leq c \sup_g N_g^3 N ,
\end{align}
where the second equality is by \Cref{as:clusteralt} and the penultimate
inequality is by \Cref{LemmaS} (applying the Cauchy-Schwarz inequality 
to the covariance terms). Hence, $\theta - \E (\theta )$ is 
$O_P(\sup_g N_g^{3/2} N^{1/2})$, which proves the first result in 
\eqref{proof:alt} by \Cref{as:sizealt}. The second result in 
\eqref{proof:alt} follows directly from \Cref{Lemma:theta}(ii) and 
\Cref{as:sizealt}. Finally, by the same methods as applied in the proof 
of \eqref{proof3a}, we find that
\begin{equation*}
\widehat\var (\hat\theta ) = 2 \sum_{g=1}^G \sum_{h_1=1}^{M_g} 
\sum_{h_2 \neq h_1}^{M_g} \hat{s}_{gh_1}^2 \hat{s}_{gh_2}^2
= O_P (\sup_{g,h}N_{gh}^2 \sup_g N_g N ),
\end{equation*}
which proves the third result in \eqref{proof:alt} by \Cref{as:sizealt}.

\subsection{Proof of \protect\Cref{thm:boot}}

As in the proofs of \Cref{thm:asy,thm:cons}, we give the proof for 
$\tau_\sigma$ only. The proof for $\tau_\Sigma$ is essentially the
same but with slightly more complicated notation. The bootstrap 
probability measure is denoted $P^\ast$\tn, and expectation under this 
measure is denoted $\E^\ast$\tn. We define the bootstrap contrast
$\theta^\ast = \sum_{g=1}^G \sum_{h_1=1}^{M_g} \sum_{h_2\neq
h_1}^{M_g} s_{gh_1}^\ast s_{gh_2}^\ast$, and similarly the 
bootstrap variance estimator, and so on.

We prove the bootstrap analog of \Cref{thm:asy}, but under the 
conditions of \Cref{thm:boot}, which will establish the required result. 
Specifically, for all $x \in \mathbb{R}$ and all $\epsilon >0$, we 
prove that
\begin{equation}
\label{boot1}
P^\ast \bigg( \frac{\hat\theta^\ast}%
{\sqrt{\var^\ast (\theta^\ast )}} \leq x \bigg) \Pto \Phi (x)
\quad\textrm{and}\quad
P^\ast \bigg(\tk \bigg| \frac{\widehat\var (\hat\theta^\ast)}%
{\var^\ast (\theta^\ast )} -1 \bigg| > \epsilon\tn \bigg) \Pto 0,
\end{equation}
as $N\to\infty$, where $\Phi (x)$ denotes the cumulative distribution 
function of the standard normal distribution. Clearly, \eqref{boot1} 
implies that $P^\ast ( \tau_\sigma^\ast \leq x ) \Pto \Phi (x)$. From 
\Cref{cor:asy} we have the result that $P_0 (\tau_\sigma \leq x) \to 
\Phi (x)$. Because $\Phi (x)$ is everywhere continuous, the desired 
result then follows by application of the triangle inequality and 
Polya's Theorem.

Thus, we need to prove \eqref{boot1}. We first note that, even though 
\Cref{as:cluster} is not imposed, it nonetheless holds by construction
that, under the bootstrap probability measure $P^\ast$, the bootstrap
data are clustered according to the fine structure in
\Cref{as:cluster}. Therefore, the proof of \eqref{boot1} largely
follows that of \Cref{thm:asy}. One main difference is that $\sigma_g^2
= \var ( s_g) \neq \sum_{h=1}^{M_g} \sigma^2_{gh}$ because 
\Cref{as:cluster} is not imposed in \Cref{thm:boot}.

We first establish the bootstrap equivalent of the lower bound 
in~\eqref{lowerbound}, 
\begin{equation}
\label{BSlowerbound}
\var^\ast (\theta^\ast ) \geq c\big(1+o_P(1)\big) \sum_{g=1}^G \Big( 
\sum_{h=1}^{M_g}\sigma_{gh}^2 \Big)^{\!2} ,
\end{equation}
where we have used the fact that $\sigma_g^4$ in \eqref{lowerbound} needs to 
be replaced by $( \sum_{h=1}^{M_g}\sigma_{gh}^2 )^2$ under the 
assumptions of \Cref{thm:boot}. To prove \eqref{BSlowerbound}, we first
use $s_{gh}^\ast = \hat{s}_{gh} v_{gh}^\ast$, where $v_{gh}^\ast$ 
is independent across both $g$ and~$h$, such that, c.f.\ \eqref{tauvar} 
and~\eqref{varfast},
\begin{equation*}
\var^\ast (\theta^\ast ) = \var^\ast\! \Big( \sum_{g=1}^G 
\sum_{h_1=1}^{M_g}\sum_{h_2\neq h_1}^{M_g} s_{gh_1}^\ast 
s_{gh_2}^\ast\Big)
= 2 \sum_{g=1}^G \sum_{h_1=1}^{M_g}\sum_{h_2\neq h_1}^{M_g} 
\hat{s}_{gh_1}^2 \hat{s}_{gh_2}^2 = \widehat\var (\hat\theta ) .
\end{equation*}
The result in \eqref{BSlowerbound} now follows from \Cref{Lemma:varhat} 
by application of \Cref{as:size} to the remainder terms.

We next prove the following four results, which imply \eqref{boot1}. 
For all $x \in \mathbb{R}$ and all $\epsilon >0$, 
\begin{align}
\label{BSproof1a}
&P^\ast \bigg(\tk \bigg| \frac{\hat\theta^\ast - \theta^\ast}%
{\sqrt{\var^\ast (\theta^\ast )}} \bigg| >\epsilon \tn\bigg)  \Pto0, \\
\label{BSproof1b}
&P^\ast \bigg( \frac{\theta^\ast}%
{\sqrt{\var^\ast (\theta^\ast )}} \leq x \bigg) \Pto \Phi (x), \\
\label{BSproof2a}
&P^\ast \bigg(\tk \bigg| \frac{\widehat\var (\theta^\ast)}%
{\var^\ast (\theta^\ast )} -1 \bigg| > \epsilon \tn\bigg) \Pto 0, \\
\label{BSproof2b}
&P^\ast \bigg(\tk \bigg| \frac{\widehat\var (\hat\theta^\ast)%
-\widehat\var (\theta^\ast )}{\var^\ast (\theta^\ast )} 
\bigg| > \epsilon \tn\bigg) \Pto 0,
\end{align}
as $N\to\infty$. The proofs of \eqref{BSproof1a} and \eqref{BSproof1b} 
are nearly identical to the corresponding proofs of \eqref{proof1a}
and \eqref{proof1b}. Similarly, the proofs of \eqref{BSproof2a} and
\eqref{BSproof2b} are nearly identical to the corresponding proofs of
\eqref{proof2a} and \eqref{proof2b}. We therefore merely highlight the
differences.

First, \eqref{BSproof1a} follows by Markov's inequality and application 
of \Cref{Lemma:thetaast}, the lower bound \eqref{BSlowerbound}, and 
\Cref{as:size}.

Consider now \eqref{BSproof1b}. Under the bootstrap probability
measure, $w_{gh}^\ast = 2 s_{gh}^\ast \sum_{j=1}^{h-1} s_{gj}^\ast$
is a martingale difference sequence with respect to the filtration
\begin{equation}
\mathcal F_{gh}^\ast = \sigma \big( \{ v_{mn}^\ast 
\}_{m=1,\ldots,g-1,n=1,\ldots,M_m},\{ v_{gn}^\ast \}_{n=1,\ldots,h} 
\big).
\nonumber
\end{equation}
To verify the bootstrap equivalent of the Lyapunov condition,
we apply the same proof as for \eqref{CLT1}. Replacing $\E$ with
$\E^\ast$\tn, the bounds \eqref{mom-sum}--\eqref{Lyapunov2} hold under
the bootstrap measure with the right-hand sides being $O_P$ of the
indicated order by \eqref{shat}, \eqref{Opbeta}, \eqref{boundX}, and 
\Cref{LemmaS}. Thus, in particular, 
\begin{equation}
\nonumber
\sum_{g=1}^G \sum_{h=1}^{M_g} \E^\ast |w_{gh}^\ast|^{2\lambda}
= O_P \big( \sup_{g,h} N_{gh}^{3\lambda -1} \sup_g N_g N \big),
\end{equation}
which together with \eqref{BSlowerbound} and \Cref{as:size} verifies 
the Lyapunov condition for \eqref{BSproof1b}. For the proof of 
convergence of the conditional variance, we apply the same proof
as for \eqref{CLT2} with $r_{1,gh}^\ast = ( s_{gh}^{\ast 2}
-\hat{s}_{gh}^2 ) \sum_{j=h+1}^{M_g} \hat{s}_{gj}^2$ and
$r_{2,gh}^\ast = w_{gh}^\ast \sum_{j=h+1}^{M_g} \hat{s}_{gj}^2$. For
both terms we prove mean square convergence (because $\lambda \geq 2$). 
The arguments are nearly identical to those in the proof of \eqref{CLT2},
with all bounds being $O_P$ of the indicated order, using \eqref{shat}, 
\eqref{Opbeta}, \eqref{boundX}, and \Cref{LemmaS}. This completes the 
proof of~\eqref{BSproof1b}.

For the proof of \eqref{BSproof2a}, we follow the proof of
\eqref{proof2a} and obtain $q_{3,gh}^\ast = ( s_{gh}^{\ast 2} -
\hat{s}_{gh}^2 )\sum_{j=1}^{h-1} s_{gh}^{\ast 2}$. We then apply 
the same proof as for $q_{3,gh}$ with $\lambda =2$. Specifically, we find
that there exists a set $\mathcal{A}^\ast$ with $P^\ast
(\mathcal{A}^\ast )\Pto 1$, and on this set we have
\begin{equation*}
\var^\ast (q_{3,gh}^\ast ) = O_P \bigg(\! N_{gh}^4 \Big( \sum_{h-1}^{M_g}
\sigma_{gh}^2 \Big)^{\!2} \bigg),
\end{equation*}
where we used again \eqref{shat}, \eqref{Opbeta}, \eqref{boundX}, and 
\Cref{LemmaS}. Because $q_{3,gh}^\ast$ is a martingale difference 
sequence, the proof of \eqref{BSproof2a} is concluded in the same way 
as that of \eqref{proof2a}.

Finally, we prove \eqref{BSproof2b}. As in 
\eqref{proof3a}--\eqref{proof4c}, we write $\widehat\var 
(\hat\theta^\ast) - \widehat\var (\theta^\ast )$~as
\begin{align}
\label{BSproof3a}
2& (\hat\beta_1^\ast - \hat\beta_1)^2 \sum_{g=1}^G \sum_{h_1=1}^{M_g} 
\sum_{h_2\neq h_1}^{M_g}( \hat{s}_{gh_1}^\ast \hat{s}_{gh_2}^\ast
+ s_{gh_1}^\ast s_{gh_2}^\ast)
\sum_{i=1}^{N_{gh_1}} x_{gh_1i}^2 \sum_{j=1}^{N_{gh_2}} x_{gh_2j}^2  \\
\label{BSproof4a}
&+8 (\hat\beta_1^\ast - \hat\beta_1) \sum_{g=1}^G 
\sum_{h_1=1}^{M_g} \sum_{h_2\neq h_1}^{M_g}
s_{gh_1}^{\ast 2} s_{gh_2}^\ast \sum_{i=1}^{N_{gh_2}}x_{gh_2i}^2\\
\label{BSproof4b}
&-4 (\hat\beta_1^\ast - \hat\beta_1)^2\sum_{g=1}^G \sum_{h_1=1}^{M_g} 
\sum_{h_2\neq h_1}^{M_g} \Big( s_{gh_1}^\ast \sum_{i=1}^{N_{gh_2}}
x_{gh_2i}^2 + s_{gh_2}^\ast \sum_{j=1}^{N_{gh_1}} x_{gh_1j}^2 \Big) s_{gh_1} 
\sum_{\ell=1}^{N_{gh_2}} x_{gh_2\ell}^2 \\
\label{BSproof4c}
&-4 (\hat\beta_1^\ast - \hat\beta_1)^2 \sum_{g=1}^G \sum_{h_1=1}^{M_g} 
\sum_{h_2\neq h_1}^{M_g} s_{gh_1}^\ast \Big( \sum_{j=1}^{N_{gh_1}}
x_{gh_1j}^2 \Big) \Big( \sum_{i=1}^{N_{gh_2}}x_{gh_2i}^2 \Big)^{\!2}.
\end{align}
For \eqref{BSproof3a}, \eqref{BSproof4b}, and \eqref{BSproof4c}, we 
use \eqref{boundX} and \eqref{Opbetaast} together with 
\Cref{LemmaS}, and find that 
\begin{equation*}
\E^\ast |\eqref{BSproof3a}| = O_P \bigg( N^{-1}\sup_{g,h} N_{gh} 
\sum_{g=1}^G \sum_{h_1=1}^{M_g} \sum_{h_2\neq h_1}^{M_g} 
N_{gh_1}^2 N_{gh_2}^2 \bigg)
= O_P \big( \sup_{g,h} N_{gh}^3 \sup_g N_g \big).
\end{equation*}
By the same argument, we also find the same bound for \eqref{BSproof4b}
and~\eqref{BSproof4c}. Using \eqref{BSlowerbound} and the first condition
of \Cref{as:size} shows the required result for these terms. For
\eqref{BSproof4a}, we apply the Cauchy-Schwarz inequality as in
\eqref{proof5},
\begin{equation}
\label{BSproof5}
\eqref{BSproof4a}^2 \leq 64 (\hat\beta_1 - \beta_{1,0})^2 
\Big( \sup_{g,h} \sum_{i=1}^{N_{gh_1}} x_{gh_1i}^2 \Big)^{\!2}
\bigg( \sum_{g=1}^G \Big( \sum_{h_1=1}^{M_g} s_{gh_1}^{\ast 2}
 \Big)^{\!2} \bigg)
\bigg( \sum_{g=1}^G \Big( \sum_{h_2=1}^{M_g} s_{gh_2}^\ast
\Big)^{\!2} \bigg) .
\end{equation}
The first two factors on the right-hand side satisfy \eqref{Opbetaast}
and \eqref{boundX}, respectively. The third factor is non-negative,
and, under the bootstrap probability measure, it has a mean which is $O_P \big(
\sum_{g=1}^G ( \sum_{h=1}^{M_g} N_{gh}^2 )^2 \big) = O_P ( \sup_{g,h}
N_{gh} \sup_g N_g N)$ using \eqref{shat}, \eqref{Opbeta},
\eqref{boundX}, and \Cref{LemmaS}. The last factor is non-negative and, under the bootstrap probability measure, it has a mean which is $O_P\big( \sum_{g=1}^G
( \sum_{h=1}^{M_g} \sigma_{gh}^2 )^2 \big)$ using again \eqref{shat},
\eqref{Opbeta}, \eqref{boundX}, and \Cref{LemmaS}. The proof for
\eqref{BSproof5} is now completed in the same way as that of
\eqref{proof5}. This completes the proof of \eqref{BSproof2b} and
hence of \Cref{thm:boot}.

\subsection{Proof of \protect\Cref{thm:seq}}

The result that $P( \hat m \leq m_0 -1) \to 0$ is a direct consequence 
of \Cref{thm:cons} for the asymptotic tests and of \Cref{cor:boot}(ii) 
for the bootstrap tests. In case~(ii), where $m_0 = p$, there is 
nothing more to prove. In case~(i) we have $m_0 \leq p-1$. Because 
$P( \hat m \leq m_0 -1) \to 0$, the sequential procedure will reach 
the test of the null hypothesis $m = m_0$ with probability converging 
to~one. This is a test of a true null, so we find from \Cref{cor:asy} 
and \Cref{cor:boot}(i) that $P(\hat m = m_0) \to 1-\alpha$, which 
proves the theorem.

\subsection{Auxiliary Lemmas}

\begin{lemma}
\label{LemmaS}
Let \Cref{as:moments} be satisfied. Then
\begin{equation*}
\sup_{g,h} N_{gh}^{-\xi} \E \Vert \bis_{gh} \Vert ^\xi = O(1) 
\quad\textrm{and}\quad
\sup_g N_g^{-\xi} \E \Vert \bis_g \Vert^\xi = O(1)
\quad\textrm{for } 1 \leq \xi \leq 2\lambda .
\end{equation*}
\end{lemma}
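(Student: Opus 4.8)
The plan is to bound the moments of the cluster sums $\bis_{gh}$ and $\bis_g$ directly, via the triangle inequality in $L^\xi$, without invoking any independence among the observation-level scores. This is the essential structural feature: \Cref{LemmaS} is stated under \Cref{as:moments} alone, and within a single fine cluster the individual scores $\bis_{ghi}$ may be arbitrarily dependent. Consequently the bound must reflect the crude, worst-case rate $N_{gh}^\xi$ rather than a central-limit rate $N_{gh}^{\xi/2}$, and indeed $N_{gh}^\xi$ is exactly what the lemma claims.

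First, I would reduce the $2\lambda$-moment assumption to a uniform bound on all lower-order moments. By Lyapunov's inequality (equivalently, Jensen applied to the convex map $t \mapsto t^{2\lambda/\xi}$ for $1 \leq \xi \leq 2\lambda$), we have $(\E \Vert \bis_{ghi} \Vert^\xi)^{1/\xi} \leq (\E \Vert \bis_{ghi} \Vert^{2\lambda})^{1/(2\lambda)}$. Hence, setting $C := \sup_{g,h,i} (\E \Vert \bis_{ghi} \Vert^{2\lambda})^{1/(2\lambda)}$, which is finite by \Cref{as:moments}, we obtain
\[
\sup_{g,h,i} (\E \Vert \bis_{ghi} \Vert^\xi)^{1/\xi} \leq C < \infty
\quad \textrm{for all } 1 \leq \xi \leq 2\lambda .
\]

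Next, because $\bis_{gh} = \sum_{i=1}^{N_{gh}} \bis_{ghi}$, Minkowski's inequality gives, for $\xi \geq 1$,
\[
(\E \Vert \bis_{gh} \Vert^\xi)^{1/\xi}
\leq \sum_{i=1}^{N_{gh}} (\E \Vert \bis_{ghi} \Vert^\xi)^{1/\xi}
\leq C\, N_{gh} ,
\]
so that $N_{gh}^{-\xi} \E \Vert \bis_{gh} \Vert^\xi \leq C^\xi$ uniformly in $g,h$. Since $C^\xi \leq \max(C, C^{2\lambda})$ on $[1, 2\lambda]$, this is $O(1)$, proving the first claim. The second claim follows identically: writing $\bis_g = \sum_{h=1}^{M_g} \sum_{i=1}^{N_{gh}} \bis_{ghi}$ as a sum of $N_g$ individual scores and applying Minkowski once more yields $(\E \Vert \bis_g \Vert^\xi)^{1/\xi} \leq C\, N_g$, whence $N_g^{-\xi} \E \Vert \bis_g \Vert^\xi = O(1)$.

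There is no substantive obstacle here; the only point requiring care is the deliberate choice to use Minkowski rather than a moment inequality that exploits independence. The latter (Rosenthal- or Marcinkiewicz--Zygmund-type) would sharpen the rate to $N_{gh}^{\xi/2}$, but it is unavailable under \Cref{as:moments} alone and, more importantly, is unnecessary: the downstream arguments that invoke this lemma---for instance the Lyapunov bound in \eqref{Lyapunov} and the moment bound \eqref{mom-sum}, where independence across fine clusters \emph{is} used---are calibrated precisely to the $N_{gh}^\xi$ rate established here.
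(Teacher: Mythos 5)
Your proof is correct. Note that the paper itself gives no argument for this lemma: it simply defers to Lemma~A.2 of \citet{DMN_2019}, which establishes the same kind of uniform moment bound for cluster sums. Your two-step argument---Lyapunov's inequality to convert the uniform $2\lambda$-moment bound in \Cref{as:moments} into a uniform bound on all $\xi$-th moments of the observation-level scores for $1 \leq \xi \leq 2\lambda$, followed by Minkowski's inequality to sum the $N_{gh}$ (respectively $N_g$) terms---is the standard proof of that cited result, so you have in effect supplied the argument the paper outsources. Your closing observation is also apt and worth retaining: the lemma deliberately claims only the worst-case rate $N_{gh}^{\xi}$, because no independence is assumed \emph{within} a fine cluster, and the downstream bounds such as \eqref{mom-sum} and \eqref{Lyapunov} are calibrated to exactly that rate; a Rosenthal- or Marcinkiewicz--Zygmund-type improvement to $N_{gh}^{\xi/2}$ is neither available under \Cref{as:moments} alone nor needed.
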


\begin{proof}
This is Lemma~A.2 of \citet*{DMN_2019}.
\end{proof}

\begin{lemma}
\label{Lemma:theta}
Let \Cref{as:moments,as:X,as:eigen} be satisfied. Let $\hat\btheta$ be 
defined by \eqref{thetaSigma} and also define $\btheta = \sum_{g=1}^G 
\sum_{h_1=1}^{M_g}\sum_{h_2 \neq h_1}^{M_g} \vech ( \bis_{gh_1}
\bis_{gh_2}^\top )$; c.f.\ \eqref{thetaknown}. Then
\begin{itemize}
\item[(i)] Under \Cref{as:cluster}, $\Vert \hat\btheta - \btheta \Vert 
= O_P ( \sup_{g,h}N_{gh} \sup_g N_g ) $.
\item[(ii)] Under \Cref{as:clusteralt}, $\Vert \hat\btheta - \btheta \Vert 
= O_P ( \sup_g N_g^2 ) $.
\end{itemize}
\end{lemma}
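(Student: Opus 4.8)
The plan is to prove (i) and (ii) by one and the same decomposition, invoking the relevant independence assumption only to pin down the rate of $\hat\bbeta-\bbeta_0$ and of a few second moments. Writing $\hat\bis_{gh}=\bis_{gh}-\biX_{gh}^\top\biX_{gh}\,\bm d$ with $\bm d=\hat\bbeta-\bbeta_0$ and $\biX_{gh}^\top\biX_{gh}:=\sum_{i=1}^{N_{gh}}\biX_{ghi}^\top\biX_{ghi}$, I would expand
\begin{align*}
\hat\bis_{gh_1}\hat\bis_{gh_2}^\top - \bis_{gh_1}\bis_{gh_2}^\top
={}& -\bis_{gh_1}\bm d^\top\biX_{gh_2}^\top\biX_{gh_2} - \biX_{gh_1}^\top\biX_{gh_1}\bm d\,\bis_{gh_2}^\top \\
&+ \biX_{gh_1}^\top\biX_{gh_1}\bm d\,\bm d^\top\biX_{gh_2}^\top\biX_{gh_2},
\end{align*}
so that $\hat\btheta-\btheta$ is the $\vech$ of the sum over $g$ and $h_1\neq h_2$ of these three pieces: two terms linear in $\bm d$ and one quadratic.

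First I would record the rate of $\bm d=(\biX^\top\biX)^{-1}\sum_g\bis_g$. By \Cref{as:X}, $(\biX^\top\biX)^{-1}=O_P(N^{-1})$, while \Cref{LemmaS} combined with the independence assumption controls $\E\|\sum_g\bis_g\|^2$: under \Cref{as:cluster} the fine sums are independent, so $\E\|\sum_g\bis_g\|^2\le c\sum_{g,h}N_{gh}^2\le c\,\sup_{g,h}N_{gh}\,N$ and hence $\|\bm d\|=O_P((\sup_{g,h}N_{gh}/N)^{1/2})$; under \Cref{as:clusteralt} the coarse sums are independent, giving $\|\bm d\|=O_P((\sup_g N_g/N)^{1/2})$. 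The quadratic term is then immediate: its norm is at most $\|\bm d\|^2\sum_g\big(\sum_h\|\biX_{gh}^\top\biX_{gh}\|\big)^2\lesssim\|\bm d\|^2\sum_g N_g^2\le\|\bm d\|^2\sup_g N_g\,N$, which is $O_P(\sup_{g,h}N_{gh}\sup_g N_g)$ in case (i) and $O_P(\sup_g N_g^2)$ in case (ii) — the target rate in each case.

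The main obstacle is the two linear terms, because discarding the $h_1\neq h_2$ structure via the triangle inequality is too lossy: it produces a spurious extra factor of order $(N/\sup_{g,h}N_{gh})^{1/2}\ge1$ relative to the target. The fix is to use the restriction $h_1\neq h_2$ together with the cancellation coming from $\E\bis_{gh}=\bzero$. My plan is to substitute $\bm d=(\biX^\top\biX)^{-1}\sum_{g'}\bis_{g'}$ into, say, $L=\sum_g\sum_{h_1\neq h_2}\bis_{gh_1}\bm d^\top\biX_{gh_2}^\top\biX_{gh_2}$ and split $\sum_{g'}\bis_{g'}=\bis_g+\sum_{g'\neq g}\bis_{g'}$. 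The same-cluster piece supplies the mean: matching indices and using $\E(\bis_{gh_1}\bis_g^\top)$ — which equals $\bSigma_{gh_1}$ under \Cref{as:cluster} and is $O(N_{gh_1}N_g)$ under \Cref{as:clusteralt} — together with $(\biX^\top\biX)^{-1}\approx N^{-1}\bXi^{-1}$ and $\|\biX_{gh_2}^\top\biX_{gh_2}\|=O(N_{gh_2})$, yields precisely $O(\sup_{g,h}N_{gh}\sup_g N_g)$ and $O(\sup_g N_g^2)$ respectively, after summing $\sum_{h_1\neq h_2}N_{gh_1}N_{gh_2}\le N_g^2$ and $\sum_g N_g^2\le\sup_g N_g\,N$.

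The cross-cluster piece has mean zero, since $\sum_{g'\neq g}\bis_{g'}$ is independent of $\bis_{gh_1}$, and — crucially — its variance factors across the independent clusters, so it can be controlled using only second moments of the scores (\Cref{LemmaS} with $\xi=2\le2\lambda$). This is exactly what keeps the whole argument inside the moment budget $\lambda>1$, rather than forcing the fourth moments that a naive variance computation for a product of four scores in the same fine cluster would demand. I expect the most delicate bookkeeping to be the verification that this cross-cluster fluctuation is genuinely of smaller order than the target, because the shared factor $(\biX^\top\biX)^{-1}$ couples all clusters and must be peeled off (via \Cref{as:X}) before the cluster-level independence can be exploited; the transposed linear term, and then the passage from (i) to (ii), follow by the identical argument with the roles of the fine and coarse independence assumptions interchanged.
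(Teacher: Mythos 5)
Your decomposition of $\hat\btheta-\btheta$ into a term quadratic in $\bm d=\hat\bbeta-\bbeta_0$ and two terms linear in $\bm d$, your rates for $\bm d$ under \Cref{as:cluster} and \Cref{as:clusteralt}, and your treatment of the quadratic term all coincide with the paper's proof. The gap is in the linear term, and it is not where you expect it. Your ``same-cluster'' piece is, up to the factor $N^{-1}\bXi^{-1}$, the quadratic form $\sum_{g}\sum_{h_1\neq h_2}\bis_{gh_1}\bis_g^\top(\cdots)\,\biX_{gh_2}^\top\biX_{gh_2}$ in scores belonging to the \emph{same} coarse cluster. Computing its expectation does not bound it in probability: the summands are not sign-definite, so Markov's inequality does not apply to the expectation you compute, while the crude route (triangle inequality plus Cauchy--Schwarz and \Cref{LemmaS}, even after splitting off the diagonal $\bis_{gh_1}\bis_{gh_1}^\top$ contribution) yields only $O_P\big((\sup_{g,h}N_{gh})^{1/2}(\sup_g N_g)^{3/2}\big)$, which exceeds the target $\sup_{g,h}N_{gh}\sup_g N_g$ of part~(i) whenever $\sup_g N_g>\sup_{g,h}N_{gh}$. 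To do better you must center this piece and exploit cancellation in the mean-zero remainder $\sum_g\sum_{h_1}\sum_{h\neq h_1}\bis_{gh_1}\bis_{gh}^\top(\cdots)$, but a variance computation for that object requires fourth moments of the scores, i.e.\ $\lambda\geq 2$ in \Cref{as:moments}, whereas the lemma assumes only $\lambda>1$. So the fourth-moment problem you correctly banish from the cross-cluster piece reappears in the same-cluster piece. (Part~(ii) is unaffected, since there the crude bound already delivers $\sup_g N_g^2$.)

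The paper never forms a quadratic form in the scores. It completes the sum over $h_2$ (adding and subtracting the diagonal $h_2=h_1$), so that the regressor weight $\sum_{h_2}\sum_i x_{gh_2i}^2$ no longer depends on $h_1$; then $\sum_{h_1}s_{gh_1}$ collapses to $s_g$, and the linear term becomes $(\hat\beta-\beta_0)\sum_g A_g s_g$ plus a small correction, with $A_g=\sum_{h,i}x_{ghi}^2=O_P(N_g)$; see \eqref{diffB1}--\eqref{diffB2}. The sum $\sum_g A_g s_g$ is \emph{linear} in the scores and is controlled by a second-moment (variance) bound, which combined with the rate for $\hat\beta-\beta_0$ gives exactly the stated orders in both cases. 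If you wish to salvage your route of splitting $\bm d$ into own-cluster and cross-cluster contributions, you would first need to perform this decoupling, or else supply an $L_\lambda$ martingale argument for the centered same-cluster piece under conditions not assumed in the lemma.
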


\begin{proof}
We give the proof in the scalar case only. The proof for the 
multivariate case is nearly identical but with more complicated notation.

First note that
\begin{equation}
\label{shat}
\hat s_{ghi} = x_{ghi}\hat u_{ghi}
= x_{ghi} ( u_{ghi} - x_{ghi} (\hat\beta - \beta_0))
= s_{ghi} - x_{ghi}^2 (\hat\beta - \beta_0).
\end{equation}
From \eqref{Sigmascalar}, \eqref{thetascalar}, and \eqref{thetaknown}, 
we then find the difference
\begin{align}
\label{diffA}
\hat\theta - \theta 
={}& \sum_{g=1}^G\sum_{h_1=1}^{M_g}\sum_{h_2\neq h_1}^{M_g}
\Big( \sum_{i=1}^{N_{gh_1}}x_{gh_1i}^2 (\hat\beta -\beta_0)\!\Big)
\Big(\sum_{i=1}^{N_{gh_2}}x_{gh_2i}^2 (\hat\beta-\beta_0)\!\Big)\\
\label{diffB}
&-2\sum_{g=1}^G\sum_{h_1=1}^{M_g}\sum_{h_2\neq h_1}^{M_g}
s_{gh_1}\sum_{i=1}^{N_{gh_2}} x_{gh_2i}^2 (\hat\beta - \beta_0) .
\end{align}

Using \eqref{betahat} we find that $\hat\beta - \beta_0 = 
(\bix^\top\bix)^{-1} \sum_{g=1}^G\sum_{h=1}^{M_g} s_{gh}$. Under 
\Cref{as:cluster} we have
\begin{equation}
\label{varS}
\var \Big( \sum_{g=1}^G \sum_{h=1}^{M_g} s_{gh} \Big) 
= \sum_{g=1}^G \sum_{h=1}^{M_g} \var ( s_{gh} )
\leq c \sum_{g=1}^G\sum_{h=1}^{M_g} N_{gh}^2 \leq c N \sup_{g,h} N_{gh}
\end{equation}
using \Cref{as:moments} and \Cref{LemmaS}. Similarly, under 
\Cref{as:clusteralt},
\begin{equation}
\label{varSalt}
\var \Big( \sum_{g=1}^G \sum_{h=1}^{M_g} s_{gh} \Big) 
= \sum_{g=1}^G \var \Big( \sum_{h=1}^{M_g} s_{gh} \Big)
= \sum_{g=1}^G \var ( s_g )
\leq c \sum_{g=1}^G N_g^2 \leq c N \sup_g N_g .
\end{equation}
Hence, using also \Cref{as:X},
\begin{equation}
\label{Opbeta}
\begin{aligned}
|\hat\beta -\beta_0| &= O_P\big( N^{-1/2}\sup_{g,h} N_{gh}^{1/2}
\big) \quad\textrm{under \Cref{as:cluster}},\\
|\hat\beta -\beta_0| &= O_P\big( N^{-1/2}\sup_g N_g^{1/2}\big)
\quad\textrm{under \Cref{as:clusteralt}}.
\end{aligned}
\end{equation}
We also need the simple bounds
\begin{equation}
\label{boundX}
\sup_{g,h} N_{gh}^{-1}\sum_{i=1}^{N_{gh}} x_{ghi}^2 = O_P (1)
\quad\textrm{and}\quad 
\sup_g N_g^{-1}\sum_{h=1}^{M_g}\sum_{i=1}^{N_{gh}} x_{ghi}^2 
= O_P (1) ,
\end{equation}
which follow from the uniform moment bound in \Cref{as:X}. Using 
\eqref{boundX}, we find that the absolute value of the right-hand side 
of \eqref{diffA} is bounded by
\begin{equation}
\label{boundA}
(\hat\beta - \beta_0)^2 \sum_{g=1}^G \Big(\sum_{h=1}^{M_g}
\sum_{i=1}^{N_{gh}} x_{ghi}^2 \Big)^{\!\!2}
= (\hat\beta - \beta_0)^2 O_P \Big( \sum_{g=1}^G N_g^2 \Big) 
= (\hat\beta - \beta_0)^2 O_P \big( N \sup_g N_g \big),
\end{equation}
which proves the result for \eqref{diffA} using \eqref{Opbeta}.

Next, we write \eqref{diffB} as
\begin{align}
\label{diffB1}
\eqref{diffB} ={}& -2\sum_{g=1}^G\sum_{h_1=1}^{M_g}
s_{gh_1}\sum_{h_2=1}^{M_g}\sum_{i=1}^{N_{gh_2}}x_{gh_2i}^2 
(\hat\beta - \beta_0)\\
\label{diffB2}
&+2\sum_{g=1}^G\sum_{h=1}^{M_g}s_{gh}\sum_{i=1}^{N_{gh}} 
x_{ghi}^2 (\hat\beta - \beta_0).
\end{align}
Under \Cref{as:cluster}, \eqref{varS}, \eqref{Opbeta}, and 
\eqref{boundX} show that $|\eqref{diffB1}| = O_P ( \sup_{g,h}N_{gh} 
\sup_g N_g ) $ and that $|\eqref{diffB2}| = O_P ( \sup_{g,h}
N_{gh}^2 )$. Under \Cref{as:clusteralt}, \eqref{varSalt}, \eqref{Opbeta},
and \eqref{boundX} show that $|\eqref{diffB1}| = O_P ( \sup_g N_g^2 )$ 
and that $|\eqref{diffB2}| = O_P ( \sup_{g,h} N_{gh} \sup_g N_g )$. This
proves the result for \eqref{diffB}.
\end{proof}

\begin{lemma}
\label{Lemma:thetaast}
Let \Cref{as:moments,as:X,as:eigen,as:clusteralt} be satisfied. Let 
$\hat\btheta^\ast = \sum_{g=1}^G \sum_{h_1=1}^{M_g}
\sum_{h_2 \neq h_1}^{M_g} \vech ( \hat\bis_{gh_1}^\ast
\hat\bis_{gh_2}^{\ast \top} )$ and $\btheta^\ast = \sum_{g=1}^G 
\sum_{h_1=1}^{M_g}\sum_{h_2 \neq h_1}^{M_g} \vech ( \bis_{gh_1}^\ast
\bis_{gh_2}^{\ast \top} )$. Then
\begin{equation*}
\E^\ast \Vert \hat\btheta^\ast - \btheta^\ast \Vert 
= O_P \big( \sup_{g,h}N_{gh} \sup_g N_g \big) .
\end{equation*}
\end{lemma}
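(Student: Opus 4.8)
The plan is to follow the proof of \Cref{Lemma:theta}(i) line by line, replacing the true scores $s_{gh}$ by the bootstrap scores $s_{gh}^\ast = v_{gh}^\ast \hat s_{gh}$, the expectation $\E$ by the bootstrap expectation $\E^\ast$, and the estimation error $\hat\beta-\beta_0$ by its bootstrap analog $\hat\beta_1^\ast - \hat\beta_1$. I work in the scalar case, the multivariate case being identical up to notation. The starting point is the bootstrap version of \eqref{shat}, namely $\hat s_{ghi}^\ast = s_{ghi}^\ast - x_{ghi}^2(\hat\beta_1^\ast - \hat\beta_1)$, which holds because the bootstrap OLS residuals differ from the bootstrap disturbances only through the bootstrap estimation error. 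Substituting this into $\hat\theta^\ast - \theta^\ast = \sum_{g=1}^G \sum_{h_1=1}^{M_g}\sum_{h_2\neq h_1}^{M_g}(\hat s_{gh_1}^\ast \hat s_{gh_2}^\ast - s_{gh_1}^\ast s_{gh_2}^\ast)$ and expanding produces, exactly as in \eqref{diffA}--\eqref{diffB}, a quadratic piece $(\hat\beta_1^\ast-\hat\beta_1)^2 \sum_g\sum_{h_1}\sum_{h_2\neq h_1} c_{gh_1}c_{gh_2}$, with $c_{gh}:=\sum_{i=1}^{N_{gh}} x_{ghi}^2$, and a linear piece $-2(\hat\beta_1^\ast-\hat\beta_1)\sum_g\sum_{h_1} s_{gh_1}^\ast \sum_{h_2\neq h_1}c_{gh_2}$. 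As in \eqref{diffB1}--\eqref{diffB2}, I split the linear piece into a full inner sum over $h_2$ (giving $-2(\hat\beta_1^\ast-\hat\beta_1)\sum_g C_g s_g^\ast$ with $C_g:=\sum_h c_{gh}$ and $s_g^\ast := \sum_h s_{gh}^\ast$) minus the diagonal correction $-2(\hat\beta_1^\ast-\hat\beta_1)\sum_g\sum_h c_{gh}s_{gh}^\ast$.

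The key observation, already exploited in the proof of \Cref{thm:boot}, is that conditional on the data the bootstrap scores $s_{gh}^\ast=v_{gh}^\ast\hat s_{gh}$ are independent and mean zero across both $g$ and $h$ under $P^\ast$, with $\E^\ast(s_{gh}^{\ast 2})=\hat s_{gh}^2$; thus $P^\ast$ restores the fine clustering structure even though the data obey only the coarse-clustering \Cref{as:clusteralt}. This is precisely why the bound matches the fine-clustering rate of \Cref{Lemma:theta}(i) rather than the coarser rate of part~(ii). In particular, $\E^\ast(\sum_{g,h}s_{gh}^\ast)^2 = \sum_{g,h}\hat s_{gh}^2 = O_P(N\sup_{g,h}N_{gh})$ --- the bootstrap analog of \eqref{varS} --- where the bound on this data-dependent quantity follows from \eqref{shat}, \eqref{Opbeta}, \eqref{boundX}, and \Cref{LemmaS}. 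Combined with $\bix^\top\bix = O_P(N)$ from \Cref{as:X} and $\hat\beta_1^\ast-\hat\beta_1 = (\bix^\top\bix)^{-1}\sum_{g,h}s_{gh}^\ast$, this delivers the bootstrap rate $\E^\ast(\hat\beta_1^\ast-\hat\beta_1)^2 = O_P(N^{-1}\sup_{g,h}N_{gh})$, i.e.\ \eqref{Opbetaast}.

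Given these ingredients, two of the three terms are routine. Since the $c_{gh}$ are nonrandom under $P^\ast$, the bootstrap expectation of the quadratic piece factors as $\E^\ast(\hat\beta_1^\ast-\hat\beta_1)^2 \sum_g C_g^2$; using $\sum_g C_g^2 = O_P(\sum_g N_g^2) = O_P(N\sup_g N_g)$ from \eqref{boundX}, its $\E^\ast$ is $O_P(N^{-1}\sup_{g,h}N_{gh}\cdot N\sup_g N_g)=O_P(\sup_{g,h}N_{gh}\sup_g N_g)$. The diagonal correction is bounded the same way by Cauchy--Schwarz under $P^\ast$, giving the smaller order $\sup_{g,h}N_{gh}^2\le \sup_{g,h}N_{gh}\sup_g N_g$.

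The one genuinely new difficulty --- and the step I expect to be the main obstacle --- is the full-sum piece $-2(\bix^\top\bix)^{-1}(\sum_{g,h}s_{gh}^\ast)(\sum_g C_g s_g^\ast)$, in which both $\hat\beta_1^\ast-\hat\beta_1$ and the score factor are $P^\ast$-random, so that (unlike in \Cref{Lemma:theta}, where $\hat\beta-\beta_0$ carries a deterministic rate) one must compute a genuine fourth-order bootstrap moment. The plan is to apply the Cauchy--Schwarz inequality under $P^\ast$, bounding its $\E^\ast|\cdot|$ by $(\bix^\top\bix)^{-1}\big(\E^\ast(\sum_{g,h}s_{gh}^\ast)^2\big)^{1/2}\big(\E^\ast(\sum_g C_g s_g^\ast)^2\big)^{1/2}$. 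The first factor is $(N\sup_{g,h}N_{gh})^{1/2}$ as above; for the second, the $P^\ast$-independence of the $s_{gh}^\ast$ gives $\E^\ast(\sum_g C_g s_g^\ast)^2=\sum_g C_g^2\sum_h \hat s_{gh}^2 = O_P(\sup_{g,h}N_{gh}\sum_g N_g^3)=O_P(N\sup_g N_g^2\sup_{g,h}N_{gh})$. Multiplying the two factors and the $(\bix^\top\bix)^{-1}=O_P(N^{-1})$ prefactor yields $O_P(\sup_{g,h}N_{gh}\sup_g N_g)$ once more. Collecting the three bounds gives $\E^\ast\Vert\hat\btheta^\ast-\btheta^\ast\Vert = O_P(\sup_{g,h}N_{gh}\sup_g N_g)$, as claimed.
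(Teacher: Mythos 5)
Your proof is correct and follows essentially the same route as the paper's: the same decomposition into the quadratic term \eqref{BSdiffA} and the linear term \eqref{BSdiffB}, the same observation that $s_{gh}^\ast = v_{gh}^\ast\hat s_{gh}$ is $P^\ast$-independent and mean-zero across $(g,h)$ so that \eqref{Opbetaast} holds, and the same assembly of \eqref{shat}, \eqref{boundX}, and \Cref{LemmaS} into the bound $O_P(\sup_{g,h}N_{gh}\sup_g N_g)$. The only (harmless) difference is that you control the cross term by Cauchy--Schwarz under $P^\ast$ applied to its two factors separately, whereas the paper bounds $\var^\ast$ of the whole product ``by the same argument'' as \eqref{diffB1}--\eqref{diffB2}; both yield the same rate.
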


\begin{proof}
The proof is very similar to that of \Cref{Lemma:theta}. Again, we give 
the proof in the scalar case only since the multivariate case is nearly 
identical but with more complicated notation. We first write 
\begin{align}
\label{BSdiffA}
\hat\theta^\ast -\theta^\ast 
={}& \sum_{g=1}^G\sum_{h_1=1}^{M_g}\sum_{h_2\neq h_1}^{M_g} \Big( 
\sum_{i=1}^{N_{gh_1}} x_{gh_1i}^2 (\hat\beta^\ast - \hat\beta)\!\Big)
\Big( \sum_{i=1}^{N_{gh_2}} x_{gh_2i}^2 (\hat\beta^\ast - 
\hat\beta)\!\Big) \\
\label{BSdiffB}
&-2\sum_{g=1}^G\sum_{h_1=1}^{M_g}\sum_{h_2\neq h_1}^{M_g} 
s_{gh_1}^\ast \sum_{i=1}^{N_{gh_2}} x_{gh_2i}^2 
(\hat\beta^\ast - \hat\beta) .
\end{align}
As in \eqref{varS}, we find that
\begin{equation*}
\var^\ast \Big( \sum_{g=1}^G \sum_{h=1}^{M_g} s_{gh}^\ast \Big)
=\var^\ast\Big(\sum_{g=1}^G\sum_{h=1}^{M_g} \hat s_{gh}v_{gh}^\ast\Big)
= \sum_{g=1}^G \sum_{h=1}^{M_g} \hat s_{gh}^2
= O_P \big(N \sup_{g,h}N_{gh} \big),
\end{equation*}
where the second equality uses independence of $v_{gh}^\ast$ across $g$ 
and $h$ and the third equality uses \eqref{shat}, \eqref{Opbeta}, 
\eqref{boundX}, and \Cref{LemmaS}. It follows that
\begin{equation}
\label{Opbetaast}
\var^\ast (\hat\beta^\ast -\hat\beta ) = O_P \big( N^{-1}\sup_{g,h}
N_{gh} \big).
\end{equation}
Using \eqref{boundX} and \eqref{Opbetaast}, we find that $\E^\ast 
|\eqref{BSdiffA}| = O_P ( \sup_{g,h}N_{gh} \sup_g N_g )$ as in 
\eqref{boundA}. By the same argument, see also 
\eqref{diffB1}--\eqref{diffB2}, we find that $\var^\ast \eqref{BSdiffB} =
O_P ( \sup_{g,h}N_{gh}^2\sup_g N_g^2 )$.
\end{proof}

\begin{lemma}
\label{Lemma:varhat}
Let \Cref{as:moments,as:X,as:eigen} be satisfied and let 
$\widehat\var ( \hat\btheta )$ be given by \eqref{var2sided}. 
Suppose also that either (i) \Cref{as:cluster} or (ii) 
\Cref{as:clusteralt} and $\lambda \geq 2$ is satisfied. Then, for 
an arbitrary conforming, non-zero vector $\bdelta$ and 
$\bdelta^\top \biH_k = \bib^\top = [ \bib_1^\top \otimes \bib_2^\top ]$,
\begin{align*}
&\widehat\var ( \bdelta^\top \hat\btheta ) - 2\sum_{g=1}^G 
\sum_{h_1=1}^{M_g}\sum_{h_2 \neq h_1}^{M_g} \bib_1^\top \bSigma_{gh_1}
\bib_1 \bib_2^\top \bSigma_{gh_2}\bib_2
= O_P\big( N^{1/\lambda}\sup_g N_g\sup_{g,h}N_{gh}^{3-1/\lambda}\big) \\
&\quad+O_P \big( \sup_{g,h} N_{gh}^2 \sup_g N_g^2 \big)
+O_P \bigg(\! \sup_{g,h} N_{gh} \sup_g N_g \Big( \sum_{g=1}^G 
\sum_{h_1=1}^{M_g} \sum_{h_2=1}^{M_g} \bib_1^\top\bSigma_{gh_1}\bib_1 
\bib_2^\top\bSigma_{gh_2} \bib_2 \Big)^{\!\!1/2}\, \bigg) 
\end{align*}
and
\begin{equation*}
\sum_{g=1}^G \sum_{h_1=1}^{M_g}\sum_{h_2 \neq h_1}^{M_g} 
\bib_1^\top \bSigma_{gh_1}\bib_1 \bib_2^\top \bSigma_{gh_2}\bib_2
\geq c \sum_{g=1}^G \sum_{h_1=1}^{M_g}\sum_{h_2=1}^{M_g} 
\bib_1^\top \bSigma_{gh_1}\bib_1 \bib_2^\top \bSigma_{gh_2}\bib_2 .
\end{equation*}
\end{lemma}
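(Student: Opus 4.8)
The plan is to reduce the $k$-dimensional quadratic form to a scalar problem in projected scores and then mimic the scalar arguments of \Cref{Lemma:theta} and of the proof of \Cref{thm:asy}. Write $a_{gh}=\bib_1^\top\bis_{gh}$, $b_{gh}=\bib_2^\top\bis_{gh}$, with residual-based analogues $\hat a_{gh}=\bib_1^\top\hat\bis_{gh}$, $\hat b_{gh}=\bib_2^\top\hat\bis_{gh}$, and set $\sigma_{a,gh}^2=\bib_1^\top\bSigma_{gh}\bib_1$, $\sigma_{b,gh}^2=\bib_2^\top\bSigma_{gh}\bib_2$, $\sigma_{a,g}^2=\sum_h\sigma_{a,gh}^2$, $\sigma_{b,g}^2=\sum_h\sigma_{b,gh}^2$. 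Using the mixed-product rule $(\bib_1^\top\otimes\bib_2^\top)(\biA\otimes\biC)(\bib_1\otimes\bib_2)=(\bib_1^\top\biA\bib_1)(\bib_2^\top\biC\bib_2)$ for conformable $\biA,\biC$, the definition \eqref{var2sided} collapses to $\widehat\var(\bdelta^\top\hat\btheta)=2\sum_{g=1}^G\sum_{h_1=1}^{M_g}\sum_{h_2\neq h_1}^{M_g}\hat a_{gh_1}^2\hat b_{gh_2}^2$, which is exactly the scalar object \eqref{varfast} with $\hat s_{gh}$ replaced by the two projected scores (the scalar case being $\bib_1=\bib_2=1$). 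A general contrast $\bdelta$ corresponds to a symmetric $\biA$ with $\bdelta^\top\vech(\biS)=\Tr(\biA\biS)$; by its spectral decomposition and bilinearity it suffices to treat the tensor case $\bib^\top=\bib_1^\top\otimes\bib_2^\top$, the resulting cross-products being bounded identically. It then remains to show that this quantity equals $2\sum_g\sum_{h_1\neq h_2}\sigma_{a,gh_1}^2\sigma_{b,gh_2}^2$ up to the three stated remainders.

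First I would pass from residuals to disturbances. As in \eqref{shat}, write $\hat a_{gh}=a_{gh}-\bib_1^\top\biD_{gh}(\hat\bbeta-\bbeta_0)$ with $\biD_{gh}=\sum_{i=1}^{N_{gh}}\biX_{ghi}^\top\biX_{ghi}$, and analogously for $\hat b_{gh}$, and expand $\hat a_{gh_1}^2\hat b_{gh_2}^2$ about $a_{gh_1}^2 b_{gh_2}^2$. The contributions that are quadratic or higher in $\hat\bbeta-\bbeta_0$ are controlled by the rate \eqref{Opbeta} for $|\hat\bbeta-\bbeta_0|$, the regressor bounds \eqref{boundX}, and \Cref{LemmaS}, exactly as in \eqref{boundA}, and yield the $O_P(\sup_{g,h}N_{gh}^2\sup_g N_g^2)$ remainder. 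The contributions linear in $\hat\bbeta-\bbeta_0$ are handled by the Cauchy--Schwarz argument used for \eqref{diffB1}--\eqref{diffB2}, producing the mixed remainder $O_P(\sup_{g,h}N_{gh}\sup_g N_g(\sum_g\sum_{h_1,h_2}\sigma_{a,gh_1}^2\sigma_{b,gh_2}^2)^{1/2})$.

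The core step is to center the disturbance-based statistic, i.e.\ to show that $2\sum_g\sum_{h_1\neq h_2}(a_{gh_1}^2 b_{gh_2}^2-\sigma_{a,gh_1}^2\sigma_{b,gh_2}^2)=O_P(N^{1/\lambda}\sup_g N_g\sup_{g,h}N_{gh}^{3-1/\lambda})$. I would split each summand as $(a_{gh_1}^2-\sigma_{a,gh_1}^2)(b_{gh_2}^2-\sigma_{b,gh_2}^2)+\sigma_{a,gh_1}^2(b_{gh_2}^2-\sigma_{b,gh_2}^2)+(a_{gh_1}^2-\sigma_{a,gh_1}^2)\sigma_{b,gh_2}^2$. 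Under case~(i) the $a_{gh},b_{gh}$ are independent across $h$ and $g$, so after reversing the order of summation (as for the $q_1$ term in the proof of \Cref{thm:asy}) each linear piece is a sum of independent, mean-zero terms; bounding it in $L_\lambda$ by Marcinkiewicz--Zygmund and von Bahr--Esseen together with $\E|a_{gh}|^{2\lambda},\E|b_{gh}|^{2\lambda}\leq cN_{gh}^{2\lambda}$ from \Cref{LemmaS} reproduces the moment bound \eqref{Lyapunov2}, and Markov's inequality then gives the stated rate $(\sup_{g,h}N_{gh}^{3\lambda-1}\sup_g N_g^\lambda N)^{1/\lambda}=N^{1/\lambda}\sup_g N_g\sup_{g,h}N_{gh}^{3-1/\lambda}$. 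The product-of-centered piece is degenerate, and I would treat it through the martingale-difference structure in \eqref{MDS}, as for the $q_2$ term. Under case~(ii) only the coarse sums are independent, so I would instead bound in mean square (which is why $\lambda\geq2$ is imposed there), exactly as in \Cref{Lemma:theta}(ii), with the heterogeneity absorbed into the coarser $O_P(\sup_{g,h}N_{gh}^2\sup_g N_g^2)$ remainder. I expect this centering step to be the main obstacle: the product-of-centered term is degenerate and only $2\lambda>2$ moments are available, so the bound cannot go through a naive variance computation and must instead exploit the ordering/martingale structure to match the Lyapunov-type rate in both dependence regimes.

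Finally, the lower bound is the clean part and follows from the maximum-eigenvalue condition in \Cref{as:eigen}. Writing $\rho=\sup_{g,h}\omega_{\max}(\bSigma_{gh}(\sum_{h'}\bSigma_{gh'})^{-1})<1$, positive semidefiniteness gives $\bSigma_{gh}\preceq\rho\sum_{h'}\bSigma_{gh'}$, hence $\sigma_{a,gh}^2\leq\rho\,\sigma_{a,g}^2$ for every $h$. Therefore $\sum_h\sigma_{a,gh}^2\sigma_{b,gh}^2\leq\rho\,\sigma_{a,g}^2\sum_h\sigma_{b,gh}^2=\rho\,\sigma_{a,g}^2\sigma_{b,g}^2$, so for each $g$ the off-diagonal sum $\sum_{h_1\neq h_2}\sigma_{a,gh_1}^2\sigma_{b,gh_2}^2=\sigma_{a,g}^2\sigma_{b,g}^2-\sum_h\sigma_{a,gh}^2\sigma_{b,gh}^2\geq(1-\rho)\sigma_{a,g}^2\sigma_{b,g}^2$. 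Summing over $g$ and using $\sigma_{a,g}^2\sigma_{b,g}^2=\sum_{h_1,h_2}\sigma_{a,gh_1}^2\sigma_{b,gh_2}^2$ gives the claimed inequality with $c=1-\rho>0$.
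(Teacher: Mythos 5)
Your proposal is correct and follows essentially the same route as the paper's proof: the same split into a residual-versus-disturbance part (handled via \eqref{shat}, \eqref{Opbeta}, \eqref{boundX}, \Cref{LemmaS}, and Cauchy--Schwarz) and a centering part (handled by von Bahr--Esseen/Marcinkiewicz--Zygmund arguments for $1<\lambda<2$ and mean-square bounds on a high-probability set for $\lambda\geq 2$), plus the identical eigenvalue argument for the lower bound; your explicit Kronecker mixed-product reduction to the bilinear statistic $2\sum_{g}\sum_{h_1\neq h_2}\hat a_{gh_1}^2\hat b_{gh_2}^2$ and your three-way centering split are just cosmetic variants of the paper's ``scalar case only'' convention and its ordered $q_1/q_3$ martingale decomposition. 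The only slip is bookkeeping: under \Cref{as:clusteralt} the mean-square bounds land in the third (mixed) remainder of order $\sup_{g,h}N_{gh}\sup_g N_g\big(\sum_{g}\sum_{h_1}\sum_{h_2}\sigma^2_{a,gh_1}\sigma^2_{b,gh_2}\big)^{1/2}$ rather than being ``absorbed into'' $O_P(\sup_{g,h}N_{gh}^2\sup_g N_g^2)$ as you claim (since $\sum_h\sigma^2_{a,gh}\geq cN_g$ by \Cref{as:eigen}, that sum grows with $G$ and is not dominated by the second remainder), but the lemma's conclusion is unaffected.
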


\begin{proof}
We give the proof of the first result for the univariate case, where 
$\widehat\var (\hat\theta )$ is given by \eqref{varfast}, and we show 
that
\begin{equation}
\label{varhatdiff}
\begin{aligned}
\widehat\var ( \hat\theta ) - 2&\sum_{g=1}^G 
\sum_{h_1=1}^{M_g}\sum_{h_2 \neq h_1}^{M_g} \sigma^2_{gh_1}
\sigma^2_{gh_2} = O_P\big( N^{1/\lambda}\sup_g N_g\sup_{g,h}
N_{gh}^{3-1/\lambda}\big) \\
&+ O_P \big( \sup_{g,h} N_{gh}^2 \sup_g N_g^2 \big)
+O_P \bigg(\! \sup_{g,h} N_{gh} \sup_g N_g \Big( \sum_{g=1}^G \Big(
\sum_{h=1}^{M_g} \sigma^2_{gh} \Big)^{\!2}\, \Big)^{\!\!1/2}\, \bigg) .
\end{aligned}
\end{equation}
The proof for the multivariate case is nearly identical, but with more 
complicated notation.

We decompose the left-hand side of \eqref{varhatdiff} as
\begin{align}
\label{proof2b}
&2\sum_{g=1}^G \sum_{h_1=1}^{M_g} \sum_{h_2\neq h_1}^{M_g} 
(\hat s_{gh_1}^2\hat s_{gh_2}^2 - s_{gh_1}^2 s_{gh_2}^2 )\\
\label{proof2a}
&+2\sum_{g=1}^G \sum_{h_1=1}^{M_g}\sum_{h_2\neq h_1}^{M_g} 
( s_{gh_1}^2 s_{gh_2}^2 -\sigma^2_{gh_1}\sigma^2_{gh_2} ) .
\end{align}

We first prove the result for \eqref{proof2a} under \Cref{as:cluster}. We use 
\eqref{tauvar} and \eqref{varfast} to write
\begin{equation}
\label{proof2c}
\eqref{proof2a}
= 4 \sum_{g=1}^G \sum_{h=1}^{M_g} (q_{1,gh} + q_{3,gh}),
\end{equation}
where $q_{1,gh}=\sigma_{gh}^2 \sum_{j=1}^{h-1}( s_{gj}^2 -
\sigma_{gj}^2 )$ and $q_{3,gh} = ( s_{gh}^2 - \sigma_{gh}^2 ) 
\sum_{j=1}^{h-1} s_{gj}^2$. Under \Cref{as:cluster} we have already 
proven in \eqref{qm} that $\sum_{g=1}^G \sum_{h=1}^{M_g} q_{1,gh} =O_P ( \sup_{g,h} N_{gh}^{3/2}
\sup_g N_g^{1/2} ( \sum_{g=1}^G \sigma_g^4 )^{1/2})$. 
The sequence $q_{3,gh}$ is a martingale difference with 
respect to the filtration $\mathcal F_{gh}$ defined just below 
\eqref{MDS}. When $1< \lambda <2$, we prove convergence in 
$L_\lambda$-norm. By the von Bahr-Esseen inequality, $\E \big| 
\sum_{g=1}^G \sum_{h=1}^{M_g} q_{3,gh} \big|^\lambda \leq 2 \sum_{g=1}^G
\sum_{h=1}^{M_g} \E |q_{3,gh} |^\lambda$, where $\E |q_{3,gh} |^\lambda 
\leq \E | s_{gh}|^{2\lambda} \E \big| \sum_{j=1}^{h-1} s_{gj}^2 
\big|^\lambda$, which was analyzed in \eqref{mom-sum}. The remainder 
of the proof for $q_{3,gh}$ with $1 < \lambda < 2$ is identical to that 
of the Lyapunov condition in \eqref{CLT1}, showing that $\sum_{g=1}^G 
\sum_{h=1}^{M_g} q_{3,gh} = O_P ( N^{1/\lambda}\sup_g N_g \sup_{g,h}
N_{gh}^{3-1/\lambda} )$.

Next, suppose $\lambda \geq 2$. We find that $\sum_{j=1}^{h-1}
s_{gj}^2 \leq \sum_{j=1}^{M_g}s_{gj}^2$ is a non-negative 
random variable, and hence is of order $O_P ( \E \sum_{j=1}^{M_g}
s_{gj}^2 ) = O_P( \sum_{h=1}^{M_g}\sigma_{gh}^2 )$. That is, 
there exists a constant $K < \infty$ and a set $\mathcal{A}$ with 
$P(\mathcal{A})\to 1$ on which $\sum_{j=1}^{h-1} s_{gj}^2 \leq K 
\sum_{h=1}^{M_g}\sigma_{gh}^2$. Then, on the set $\mathcal{A}$,
\begin{equation}
\label{Varq3c}
\E (q_{3,gh}^2 | \mathcal F_{g,h-1} )
= \Var ( s_{gh}^2) \Big( \sum_{j=1}^{h-1} s_{gj}^2 \Big)^{\!2} 
\leq K^2 \Big( \sum_{h=1}^{M_g}\sigma_{gh}^2\Big)^{\!2} \Var ( s_{gh}^2),
\end{equation}
and therefore
\begin{equation}
\label{Varq3}
\Var (q_{3,gh}) \leq cN_{gh}^4
\Big(\sum_{h=1}^{M_g}\sigma_{gh}^2\Big)^{\!2}
\end{equation}
by \Cref{LemmaS}. Using \eqref{Varq3} and the fact that $q_{3,gh}$ is a 
martingale difference sequence, it follows that, on the set $\mathcal A$,
\begin{equation*}
\var \bigg( \sum_{g=1}^G \sum_{h=1}^{M_g} q_{3,gh} \bigg)
= \sum_{g=1}^G \sum_{h=1}^{M_g} \var (q_{3,gh} )
\leq c \sup_{g,h} N_{gh}^3 \sup_g N_g \Big( \sum_{h=1}^{M_g}
\sigma_{gh}^2\Big)^{\!2} .
\end{equation*}
This shows the required result for $q_{3,gh}$ on the set $\mathcal{A}$ 
when $\lambda \geq 2$. Because $P(\mathcal{A})\to 1$, this completes the
proof for \eqref{proof2a} under \Cref{as:cluster}.

We now prove the result for \eqref{proof2a} under \Cref{as:clusteralt} 
and $\lambda \geq 2$. We again apply the decomposition in 
\eqref{proof2c}. Define $Q_{m,g}=\sum_{h=1}^{M_g}q_{m,gh}$ for $m=1,3$, 
which are both independent across $g$ by \Cref{as:clusteralt}. For 
$Q_{1,g}$ we note that $\sum_{j=h+1}^{M_g} \sigma_{gj}^2 \leq 
\sum_{j=1}^{M_g} \sigma_{gj}^2$ and apply the Cauchy-Schwarz inequality 
such that
\begin{equation*}
\E (Q_{1,g}^2 ) \leq \Big( \sum_{h=1}^{M_g} \sigma_{gh}^2 \Big)^{\!2}
\E \Big( \sum_{h=1}^{M_g} | s_{gh}^2 - \sigma_{gh}^2 | \Big)^{\!2}
\leq \Big( \sum_{h=1}^{M_g} \sigma_{gh}^2 \Big)^{\!2}
\Big(\sum_{h=1}^{M_g}(\E ( s_{gh}^2-\sigma_{gh}^2 )^2 )^{1/2}\Big)^{\!2},
\end{equation*}
where last factor on the right-hand side is $O ( \sup_{g,h} N_{gh}^2 
\sup_g N_g^2 )$ by \Cref{LemmaS}. Because $Q_{1,g}$ has mean zero and 
is independent across $g$, it follows that
\begin{equation*}
\var \Big( \sum_{g=1}^G Q_{1,g} \Big) = 
\sum_{g=1}^G \E (Q_{1,g}^2) \leq c\tk \sup_{g,h} N_{gh}^2 \sup_g N_g^2
\sum_{g=1}^G \Big( \sum_{h=1}^{M_g} \sigma_{gh}^2 \Big)^{\!2} ,
\end{equation*}
which proves the result for $Q_{1,g}$. For $Q_{3,g}$ we note that there 
exists a constant $K<\infty$ and a set $\mathcal{A}$ with 
$P(\mathcal{A}) \to 1$ such that, on $\mathcal{A}$, it holds that 
$\sum_{j=1}^{h-1} s_{gj}^2 \leq K \sum_{j=1}^{M_g}\sigma_{gj}^2$.
We can then apply the same proof as for $Q_{1,g}$. This completes the 
proof for \eqref{proof2a} under \Cref{as:clusteralt}.

To prove the result for \eqref{proof2b}, we use \eqref{varfast} and 
\eqref{shat} to write
\begin{align}
\nonumber
\eqref{proof2b}
={}& 2 \sum_{g=1}^G \sum_{h_1=1}^{M_g} \sum_{h_2\neq h_1}^{M_g}
( \hat s_{gh_1} \hat s_{gh_2} + s_{gh_1} s_{gh_2} )
( \hat s_{gh_1} \hat s_{gh_2} - s_{gh_1} s_{gh_2} ) \\
\label{proof3a}
={}&  2 (\hat\beta - \beta_0)^2 \sum_{g=1}^G \sum_{h_1=1}^{M_g} 
\sum_{h_2\neq h_1}^{M_g}( \hat s_{gh_1} \hat s_{gh_2} 
+s_{gh_1} s_{gh_2} )
\sum_{i=1}^{N_{gh_1}} x_{gh_1i}^2 \sum_{j=1}^{N_{gh_2}} x_{gh_2j}^2  \\
\label{proof3b}
&+4 (\hat\beta - \beta_0)\sum_{g=1}^G \sum_{h_1=1}^{M_g} 
\sum_{h_2\neq h_1}^{M_g}( \hat s_{gh_1} \hat s_{gh_2} + 
s_{gh_1} s_{gh_2}) s_{gh_1} \sum_{i=1}^{N_{gh_2}} x_{gh_2i}^2 .
\end{align}
By another application of \eqref{shat} followed by straightforward 
application of \eqref{Opbeta}, \eqref{boundX}, and \Cref{LemmaS}, it 
follows that \eqref{proof3a} is of order $O_P(\sup_{g,h}N_{gh}^2
\sup_g N_g^2 )$.

For \eqref{proof3b}, we apply again \eqref{shat} and write
\begin{align}
\label{proof4a}
\eqref{proof3b} ={}& 8 (\hat\beta - \beta_0)\sum_{g=1}^G 
\sum_{h_1=1}^{M_g} \sum_{h_2\neq h_1}^{M_g} s_{gh_1}^2 
s_{gh_2}\sum_{i=1}^{N_{gh_2}} x_{gh_2i}^2 \\
\label{proof4b}
&-4 (\hat\beta - \beta_0)^2\sum_{g=1}^G \sum_{h_1=1}^{M_g} 
\sum_{h_2\neq h_1}^{M_g} \Big( s_{gh_1}\sum_{i=1}^{N_{gh_2}}x_{gh_2i}^2
+ s_{gh_2}  \sum_{j=1}^{N_{gh_1}} x_{gh_1j}^2 \Big) 
s_{gh_1} \sum_{\ell=1}^{N_{gh_2}} x_{gh_2\ell}^2 \\
\label{proof4c}
&-4 (\hat\beta - \beta_0)^3 \sum_{g=1}^G \sum_{h_1=1}^{M_g} 
\sum_{h_2\neq h_1}^{M_g} s_{gh_1}
\Big( \sum_{j=1}^{N_{gh_1}} x_{gh_1j}^2 \Big)
\Big( \sum_{i=1}^{N_{gh_2}} x_{gh_2i}^2 \Big)^{\!2} .
\end{align}
Direct application of \eqref{Opbeta}, \eqref{boundX}, and 
\Cref{LemmaS} shows that \eqref{proof4b} is $O_P (\sup_{g,h} N_{gh}^2 
\sup_g N_g^2 )$ and that \eqref{proof4c} is $O_P (N^{-1/2}\sup_{g,h} 
N_{gh}^2 \sup_g N_g^{5/2} ) = O_P (\sup_{g,h} N_{gh}^2 \sup_g N_g^2 )$. 
Finally, for the right-hand side of \eqref{proof4a}, we apply the 
Cauchy-Schwarz inequality,
\begin{align}
\nonumber
\eqref{proof4a}^2 &\leq 64 (\hat\beta - \beta_0)^2
\bigg( \sum_{g=1}^G \Big( \sum_{h_1=1}^{M_g} s_{gh_1}^2
\sum_{i=1}^{N_{gh_1}} x_{gh_1i}^2 \Big)^{\!2} \bigg)
\bigg( \sum_{g=1}^G \Big( \sum_{h_2=1}^{M_g}s_{gh_2}\Big)^{\!2}
\bigg) \\
\label{proof5}
&\leq 64 (\hat\beta - \beta_0)^2 
\Big( \sup_{g,h} \sum_{i=1}^{N_{gh}} x_{ghi}^2 \Big)^{\!2}
\bigg( \sum_{g=1}^G \Big( \sum_{h_1=1}^{M_g} s_{gh_1}^2
 \Big)^{\!2} \bigg)
\bigg( \sum_{g=1}^G \Big( \sum_{h_2=1}^{M_g} s_{gh_2}\Big)^{\!2} \bigg) .
\end{align}
As in \eqref{Varq3c}, we find that the penultimate factor on the 
right-hand side of \eqref{proof5} is bounded by a constant times 
$\sum_{g=1}^G ( \sum_{h=1}^{M_g} \sigma_{gh}^2 )^2$ on a set 
$\mathcal{A}$ with $P(\mathcal{A})\to 1$. The last factor on the 
right-hand side of \eqref{proof5} is a non-negative random variable and 
hence is of order $O_P ( \E \sum_{g=1}^G ( \sum_{h=1}^{M_g} 
s_{gh} )^2 ) = O_P ( \sum_{g=1}^G \E s_g^2 ) = O_P ( N \sup_g N_g )$ 
by \Cref{LemmaS}. Combining these results and using \eqref{Opbeta} and 
\eqref{boundX}, we find that
\begin{equation*}
\eqref{proof4a} = O_P \bigg(\! \sup_g N_g \sup_{g,h}N_{gh} \Big( 
\sum_{g=1}^G \big( \sum_{h=1}^{M_g} \sigma_{gh}^2\big)^{\!2} \Big)^{\!\!1/2} 
\bigg) ,
\end{equation*}
which proves the required result for \eqref{proof4a}, and hence for 
\eqref{proof3b} and~\eqref{proof2b}.

To prove the second result of the lemma we write the left-hand side as
\begin{align*}
2\sum_{g=1}^G \sum_{h_1=1}^{M_g}\sum_{h_2=1}^{M_g} 
&\bib_1^\top \bSigma_{gh_1}\bib_1 \bib_2^\top \bSigma_{gh_2}\bib_2
-2 \sum_{g=1}^G \sum_{h=1}^{M_g} \bib_1^\top \bSigma_{gh}\bib_1 
\bib_2^\top \bSigma_{gh}\bib_2 \\
&= 2\sum_{g=1}^G \sum_{h_1=1}^{M_g}\sum_{h_2=1}^{M_g} 
\bib_1^\top \bSigma_{gh_1}\bib_1 \bib_2^\top \bSigma_{gh_2}\bib_2
\bigg(\! 1 - \frac{\sum_{h=1}^{M_g} \bib_1^\top \bSigma_{gh}\bib_1 
\bib_2^\top\bSigma_{gh}\bib_2}%
{ \sum_{h_1=1}^{M_g}\sum_{h_2=1}^{M_g} \bib_1^\top \bSigma_{gh_1}\bib_1 
\bib_2^\top \bSigma_{gh_2}\bib_2} \bigg) \\
&\geq 2\sum_{g=1}^G \sum_{h_1=1}^{M_g}\sum_{h_2=1}^{M_g} 
\bib_1^\top \bSigma_{gh_1}\bib_1 \bib_2^\top \bSigma_{gh_2}\bib_2
\bigg(\! 1 - \frac{\sup_h \bib_2^\top \bSigma_{gh}\bib_2}%
{\sum_{h=1}^{M_g} \bib_2^\top \bSigma_{gh}\bib_2} \bigg) ,
\end{align*}
where the inequality is due to $\sum_{h=1}^{M_g} \bib_1^\top 
\bSigma_{gh}\bib_1 \bib_2^\top\bSigma_{gh}\bib_2 \leq
( \sup_h \bib_2^\top \bSigma_{gh}\bib_2 ) \sum_{h=1}^{M_g} 
\bib_1^\top \bSigma_{gh}\bib_1$. The result follows because 
$\sup_{g,h} \bib_2^\top \bSigma_{gh}\bib_2 / (\bib_2^\top 
\sum_{h=1}^{M_g} \bSigma_{gh} \bib_2 ) \leq \sup_{g,h}\omega_{\max} 
(\bSigma_{gh} (\sum_{h=1}^{M_g} \bSigma_{gh})^{-1} )<~1$ by 
\Cref{as:eigen}.
\end{proof}

\bibliography{mnwc}
\addcontentsline{toc}{section}{\refname}

\end{document}